\documentclass[11pt,letterpaper]{article}
\usepackage[margin=1in]{geometry}
\usepackage[T1]{fontenc}
\usepackage[utf8]{inputenc}
\usepackage{lmodern}
\usepackage{amsmath,amssymb,amsthm,mathtools,mathrsfs}
\usepackage{microtype}
\usepackage{needspace}
\usepackage{graphicx,xcolor}
\usepackage[section]{placeins}
\usepackage{tikz}
\usetikzlibrary{arrows.meta,positioning,calc,fit,backgrounds,decorations.pathreplacing}
\usepackage{enumitem}
\usepackage{hyperref}
\usepackage{bookmark}
\definecolor{linkblue}{RGB}{24,63,98}
\definecolor{figblue}{RGB}{30,86,122}
\definecolor{figteal}{RGB}{22,111,109}
\definecolor{figamber}{RGB}{169,102,25}
\hypersetup{colorlinks=true,linkcolor=linkblue,citecolor=linkblue,urlcolor=linkblue,
 pdftitle={An exponential strong converse for private communication over degradable quantum channels},
 pdfauthor={Mark M. Wilde}}
\numberwithin{equation}{section}
\newtheorem{theorem}{Theorem}[section]
\newtheorem{lemma}[theorem]{Lemma}
\newtheorem{proposition}[theorem]{Proposition}
\newtheorem{corollary}[theorem]{Corollary}
\theoremstyle{definition}

\theoremstyle{remark}
\newtheorem{remark}[theorem]{Remark}
\DeclareMathOperator{\Tr}{Tr}
\DeclareMathOperator{\id}{id}
\DeclareMathOperator{\Sym}{Sym}
\DeclareMathOperator{\supp}{supp}

\newcommand{\cN}{\mathcal N}
\newcommand{\cA}{\mathcal A}
\newcommand{\cD}{\mathcal D}
\newcommand{\cT}{\mathcal T}

\newcommand{\hmin}{H_{\min}}
\newcommand{\hmax}{H_{\max}}
\newcommand{\ket}[1]{\lvert #1\rangle}
\newcommand{\bra}[1]{\langle #1\rvert}
\newcommand{\op}[1]{\left\|#1\right\|}
\newcommand{\trn}[1]{\left\|#1\right\|_1}
\newcommand{\states}{\mathcal S}
\newcommand{\substates}{\mathcal S_{\le}}
\newcommand{\qN}{q_{\cN}}
\newcommand{\FS}{\mathbb F}
\newcommand{\hbin}{h_2}
\newcommand{\Capp}{C_{\mathrm{app}}}
\title{\Large\bfseries An exponential strong converse for private communication\\ over degradable quantum channels}
\author{Mark M. Wilde\thanks{Email: \href{mailto:wilde@cornell.edu}{wilde@cornell.edu}}\\[3pt]
\textit{School of Electrical and Computer Engineering, Cornell University}\\
\textit{Ithaca, New York 14850, USA}}
\date{September 26, 2026}
\begin{document}
\maketitle
\begin{abstract}
We prove an exponential strong converse for the unassisted private capacity of every finite-dimensional degradable quantum channel. We use a single error criterion: the infidelity between the actual message--estimate--environment state and an ideal state consisting of uniformly distributed, perfectly correlated messages that are independent of the environment. At every rate strictly above the channel's maximum coherent information, the optimized fidelity to this ideal state decays exponentially in the number of channel uses. The same bound holds when the ideal environment state is fixed to the actual environment marginal. The proof applies to arbitrary mixed-state encoders and arbitrary collective decoders. Its main operator ingredient is an overlap bound for privacy tests: two receiver placements with a common subsystem of dimension $d_C$ have overlap at most $d_C/M$, where $M$ is the key size, independently of both shield dimensions. We combine this bound with the signed averaging construction presented in arXiv:2608.01308 and a single-filter reduction to symmetric states. A Gaussian filter replaces the common receiver subsystem by a system of controlled dimension from which the original correlations can be approximately recovered using quantum side information. The filter preserves exact exchange symmetry, and a uniform smooth-entropy estimate bounds its output dimension for arbitrary correlated channel inputs. We also obtain an exponential strong converse at zero for antidegradable channels under the same joint criterion and, consequently, for the private capacity of the quantum erasure channel throughout its parameter range.
\end{abstract}

\tableofcontents

\section{Introduction}\label{sec:intro}
\subsection{Background and motivation}
A basic question in communication theory is how much information can be transmitted reliably through a noisy physical medium. For private communication, reliability is only one requirement: the information must also remain secret from an observer with access to the environment. Conveniently, it is possible to treat these two requirements within one mathematical description. To understand the private communication task, consider that an isometric extension of a quantum channel produces a receiver output and an environment output. Discarding the environment gives the channel to the receiver, whereas discarding the receiver output gives a \emph{complementary channel} to the environment, which describes the information available to an eavesdropper.

The coding theorems of Cai, Winter, and Yeung~\cite{CWY2004} and Devetak~\cite{Devetak2005} characterize the unassisted private capacity by a regularized difference between the information available to the receiver and to the environment. In general, this characterization requires an optimization over input states correlated across arbitrarily many channel uses. Degradable channels, introduced by Devetak and Shor~\cite{DS2005}, are an important exception. For such a channel, a further channel acting on the receiver's output reproduces a complementary channel. The maximum coherent information is additive, and both the quantum and private capacities equal this single-use quantity~\cite{DS2005,Smith2008}.

A capacity formula describes communication in the limit of vanishing
error. It does not by itself rule out transmission above capacity with
an error bounded away from its maximal value. A \emph{strong converse}
excludes this possibility: at every rate strictly above capacity, the
error converges to its maximal value. An exponential strong converse
specifies an exponential rate of convergence. For private communication,
the precise error criterion matters because a code can have different
decoding and secrecy errors. Ref.~\cite[Sec.~4.1]{Wilde2017Position}
used a single trace-distance criterion that combines average decoding
error and secrecy. Here we use a fidelity-based analogue, comparing
the actual message--estimate--environment state with an ideal state
imposing both correct decoding and independence from the environment.
Our strong converse therefore rules out simultaneous reliability and
secrecy above capacity. In particular, if the decoding error tends to
zero, the secrecy error must tend to its maximal value. The operational
significance of maximal secrecy error is explored in
Ref.~\cite[Sec.~6]{salzmann2022totalinsecurity}, which studies
``total insecurity'' for encryption using imperfect keys when the
eavesdropper has additional information about the possible plaintexts.

Morgan and Winter established a pretty strong converse for degradable channels, including a private-communication version, and related the quantum strong-converse problem to symmetric channels~\cite{MW2014}. Their reduction separates a symmetric pair of outputs from an additional system that is needed by the legitimate receiver. Ref.~\cite{WTB2017} developed converse bounds for private communication using privacy tests and entanglement measures. These bounds provide strong-converse rates, but need not give the private capacity of every degradable channel. Baghali Khanian and Hirche studied converse bounds for antidegradable channels using separate decoding and secrecy errors~\cite{BKH2025}. Appendix~B of Ref.~\cite{WTB2017} explains how separate errors compare with a joint fidelity criterion; we return to this comparison in Appendix~\ref{app:marginal}.

Ref.~\cite{KBK2026} recently reported exponential strong converses for the quantum capacities of finite-dimensional degradable and antidegradable channels. The argument presented therein places a decoder on each of exponentially many receiver subsystems of a symmetric extension. An entanglement test follows each decoder, and pairs of tests have small overlap when their receiver subsystems share few outputs. A signed averaging matrix, constructed using a low-degree approximation of a Boolean function, converts these pairwise estimates into an exponentially small bound on the average test. The underlying polynomial approximation theorem is due to de Wolf~\cite{deWolf2008}.

To extend this argument to private communication, entanglement tests must be replaced by privacy tests. This replacement is not immediate. A private state has local shield systems, and the twisting unitaries of different privacy tests can act noncommutatively on the same sender shield~\cite{HHHO2005,HHHO2009}. Moreover, equality of the quantum and private capacities is an asymptotic statement and does not identify their finite-error coding problems. These observations motivate the two ingredients developed below: a privacy-test overlap bound independent of shield dimensions, and a reduction that controls the dimension of the quantum system common to the receiver placements.

\subsection{Summary of results}
We prove an exponential strong converse for the unassisted private capacity of every finite-dimensional degradable channel. We use a joint infidelity criterion, following the private-communication framework of Refs.~\cite{WTB2017,Wilde2017Position} and the uniform-prior formulation in Ref.~\cite[Eq.~(16.1.12)]{KLW2026}. This coding task is called \emph{secret-key transmission} in that reference. The same bound applies to its maximal-infidelity formulation of \emph{private communication}. It also applies, with exactly the same exponent, when the ideal environment state is required to equal the actual environment marginal. We keep these finite-error criteria distinct and prove the necessary comparisons explicitly.

The first ingredient is an overlap bound for decoded privacy-test projections. Private states were introduced in Refs.~\cite{HHHO2005,HHHO2009}. The associated verification tests were developed in Refs.~\cite{HHHLO2008QKD,HHHLO2008Privacy}; we use the privacy-test formulation in Refs.~\cite{WTB2017,KLW2026}. Two receiver placements with disjoint receiver systems have test overlap exactly $1/M$, where $M$ is the key size. If the placements instead share a receiver subsystem of dimension $d_C$, their overlap is at most $d_C/M$. Neither shield dimension appears in this estimate.
The proof first introduces isometries \(J_1\) and \(J_2\) whose images are the ranges of the two privacy-test projections, and then describes their action on each key label. The shared-sender twisting operators remain in their original order, and their product is unitary. A partial-trace identity involving a swap then treats the shared receiver subsystem. We explain the enlarged Hilbert spaces and the decoded projections in detail in Section~\ref{sec:privacy}.

The second ingredient is a single-filter reduction. The symmetric dilation of Morgan and Winter~\cite{MW2014} produces two interchangeable outputs and an additional receiver system. Using the full dimension of this additional system in the overlap bound would generally give the wrong rate threshold. Instead, we prove that one Gaussian linear filter replaces it by a system of controlled dimension from which the original state can be recovered using quantum side information. The filter acts on the original state and preserves every exchange symmetry exactly. Its output dimension is controlled by the conditional smooth max-entropy. A postselection argument~\cite{CKR2009,MW2014} and a dimension-uniform asymptotic equipartition estimate~\cite{TCR2009,Tomamichel2012} then bound that dimension for arbitrary correlated inputs, including exponentially small recovery errors.

Only the existence of one normalized filtered state is needed. The construction does not assert a deterministic compression protocol or a lower bound on postselection probability. We derive the Gaussian moment estimate explicitly, while recognizing earlier uses of Gaussian ensembles in quantum coding~\cite{HSW2008}. Recovery follows from Uhlmann's theorem. The resulting proof retains the sender shield throughout and applies the symmetric-state test bound directly, without an intermediate operational simulation.

Combining these ingredients with the signed averaging construction of Ref.~\cite{KBK2026} gives an explicit finite-block converse. Choosing its parameters proves exponential decay of the joint fidelity above the coherent-information threshold. The same overlap argument gives an exponential strong converse at zero for antidegradable channels. Consequently, the private capacity of the quantum erasure channel has an exponential strong converse throughout its parameter range. Appendix~\ref{app:marginal} treats the actual-environment-marginal criterion, gives an example distinguishing it from the optimized criterion, and derives the corresponding constraint on the sum of decoding and secrecy errors.

\subsection{Paper organization}
The rest of our paper is organized as follows. Section~\ref{sec:setup} gives the preliminaries, coding criteria, and main theorem. Section~\ref{sec:privacy} recalls the coherent reduction to privacy tests, explains decoded projections, and proves their overlap bound. Section~\ref{sec:symmetric} derives the symmetric-state bound, including the signed averaging argument. Section~\ref{sec:filter} proves the single-filter lemma and its Gaussian estimate. Section~\ref{sec:reduction} applies this lemma to a degradable code and bounds the filter dimension uniformly over correlated inputs. Section~\ref{sec:converse} proves the finite-block and exponential converses, and Section~\ref{sec:conclusion} concludes. Appendices~\ref{app:dilation}, \ref{app:entropies}, and~\ref{app:boolean} give the symmetric dilation, smooth-entropy estimates, and polynomial approximation input. Appendix~\ref{app:marginal} treats the actual-marginal error criterion.

\section{Preliminaries and the main theorem}\label{sec:setup}
In this paper, all Hilbert spaces considered are taken to be finite dimensional. A system label also denotes its Hilbert space, and $|A|\coloneqq\dim A$. We write $\states(A)$ for the density operators on $A$, $I_A$ for the identity operator, and $\id_A$ for the identity channel. An isometry also denotes its action on a state when the meaning is unambiguous. A pure-state ket and its density operator are distinguished by writing, for example, $\phi\coloneqq\ket\phi\!\bra\phi$. System labels are omitted for single-system definitions and included when several systems are involved. We write $\op X$ for the operator norm (the largest singular value) and $\trn X\coloneqq\Tr\sqrt{X^\dagger X}$ for the trace norm. More generally, $\left\|X\right\|_p\coloneqq(\Tr[|X|^p])^{1/p}$ for $p\in\{1,2,4\}$, where $|X|\coloneqq\sqrt{X^\dagger X}$; in particular, $\left\|X\right\|_2$ is the Hilbert--Schmidt norm. For a vector, an unsubscripted norm denotes its Hilbert-space norm. All logarithms are base two, except $\ln$ and $\exp$. Rates are measured in bits per channel use.

For positive semidefinite operators, the fidelity is defined as~\cite{Uhlmann1976}
\begin{equation}\label{eq:fidelity}
 F(\rho,\sigma)\coloneqq\trn{\sqrt\rho\sqrt\sigma}^{2}.
\end{equation}
For normalized states, the sine distance is defined as~\cite{Rastegin2002,Rastegin2006}
\begin{equation}\label{eq:purified}
 P(\rho,\sigma)\coloneqq\sqrt{1-F(\rho,\sigma)}.
\end{equation}
Fidelity is nondecreasing under a channel, and sine distance is nonincreasing. In Appendix~\ref{app:entropies}, we use the purified-distance extension to subnormalized states for smooth entropies~\cite{TCR2010}; on normalized states it agrees with the sine distance. We use Uhlmann's theorem~\cite{Uhlmann1976} and the inequality~\cite{FvdG1999}
\begin{equation}\label{eq:trace-p}
 \frac12\trn{\rho-\sigma}\le P(\rho,\sigma).
\end{equation}
In particular, if $0\le T\le I$ is a measurement effect and $\rho,\sigma$ are normalized states, then
\begin{equation}\label{eq:test-probability}
 \bigl|\Tr[T\rho]-\Tr[T\sigma]\bigr|
 \le\frac12\trn{\rho-\sigma}\le P(\rho,\sigma).
\end{equation}
Thus the change in the acceptance probability of an arbitrary test $T$ is at most the sine distance.

The von Neumann entropy and conditional entropy are defined as
\begin{equation}\label{eq:entropy}
 H(\rho)\coloneqq-\Tr[\rho\log\rho],\qquad
 H(A|B)_\rho\coloneqq H(AB)_\rho-H(B)_\rho.
\end{equation}
A quantum channel $\cN\colon A\to B$ is a completely positive, trace-preserving linear transformation from operators on $A$ to operators on $B$. Stinespring's theorem~\cite{Stinespring1955} gives an environment system $E$ and an isometry $U\colon A\to BE$, meaning that $U^\dagger U=I_A$, that realize the channel by discarding $E$. The associated \emph{complementary channel} $\cN^c\colon A\to E$ instead discards the receiver output $B$:
\begin{equation}\label{eq:complement}
 \cN(\rho)=\Tr_E[U\rho U^\dagger],\qquad
 \cN^c(\rho)=\Tr_B[U\rho U^\dagger].
\end{equation}
The pair $(\cN,\cN^c)$ therefore describes the receiver and environment marginals of the same isometric evolution. Complementary channels associated with different Stinespring representations are equivalent up to an isometry on the environment, after restriction to the relevant support and enlargement when needed. We fix one representation throughout. In the private-communication task, the environment is assigned to the eavesdropper.

Devetak and Shor introduced the class of \emph{degradable} channels~\cite{DS2005}. A channel is degradable if there is a channel $\cD\colon B\to E$ for which $\cN^c=\cD\circ\cN$. It is \emph{antidegradable} if $\cN=\cA\circ\cN^c$ for a channel $\cA\colon E\to B$. Define
\begin{equation}\label{eq:qN}
 \qN\coloneqq\max_{\rho\in\states(A)}
 \left\{H(\cN(\rho))-H(\cN^c(\rho))\right\}.
\end{equation}
For a degradable channel, $P(\cN)=Q(\cN)=\qN$~\cite{DS2005,Smith2008}. Also, $\qN\ge0$, because each pure input produces a pure state on $BE$ and hence gives coherent information equal to zero.

\subsection{A single joint error criterion}\label{subsec:criterion}
We define the unassisted coding task and the single error criterion used in the converse. We then compare the optimized target environment state with two other finite-error formulations.

An unassisted $(n,M)$ private code consists of states $\left(\rho^m_{A^n}\right)_{m=1}^M$ and a positive operator-valued measure $\left(\Lambda^j_{B^n}\right)_{j=1}^M$. The sender chooses $m$ uniformly, prepares $\rho^m_{A^n}$, and sends it through $\cN^{\otimes n}$. The receiver measures $B^n$ and records the outcome in $\widehat K$. There is no initial shared entanglement or key and no auxiliary communication. Both the input states and the measurement may be collective across all $n$ uses.

For each message, let
\begin{equation}\label{eq:conditional-code}
 \omega^m_{\widehat K E^n}
 \coloneqq\sum_{j=1}^M\ket j\!\bra j_{\widehat K}\otimes
 \Tr_{B^n}\!\left[(\Lambda^j_{B^n}\otimes I_{E^n})
 U^{\otimes n}\rho^m_{A^n}(U^\dagger)^{\otimes n}\right].
\end{equation}
The state including the message register is
\begin{equation}\label{eq:code-state}
 \omega_{K\widehat K E^n}\coloneqq\frac1M\sum_{m=1}^M
 \ket m\!\bra m_K\otimes\omega^m_{\widehat K E^n}.
\end{equation}
The ideal correlations and the code's optimized joint fidelity are given by
\begin{equation}\label{eq:ideal}
 \overline\Phi^M_{K\widehat K}\coloneqq\frac1M\sum_{m=1}^M
 \ket{mm}\!\bra{mm}_{K\widehat K},
 \qquad
 f(\mathscr C_n)\coloneqq\max_{\sigma\in\states(E^n)}
 F\!\left(\omega_{K\widehat K E^n},
 \overline\Phi^M_{K\widehat K}\otimes\sigma_{E^n}\right).
\end{equation}
The joint error is
\begin{equation}\label{eq:error}
 \varepsilon(\mathscr C_n)\coloneqq1-f(\mathscr C_n).
\end{equation}
The maximum in~\eqref{eq:ideal} exists by compactness and continuity. The use of one error criterion combining decoding and secrecy already appears in Ref.~\cite[Sec.~4.1, Eq.~(4.3)]{Wilde2017Position}, which uses joint trace distance with a common ideal environment state. The infidelity criterion~\eqref{eq:error} is the uniform-message instance of Ref.~\cite[Eq.~(16.1.12)]{KLW2026}. Both compare the actual output with one ideal state imposing correct decoding and independence from the environment, but their numerical errors are not identical at finite blocklength.

For clarity, this criterion is not the arithmetic mean of the messagewise fidelities. The block-diagonal structure gives
\begin{equation}\label{eq:cq-fidelity}
 F\!\left(\omega_{K\widehat K E^n},\overline\Phi^M_{K\widehat K}\otimes\sigma_{E^n}\right)
 =\left(\frac1M\sum_{m=1}^M
 \sqrt{F\!\left(\omega^m_{\widehat K E^n},
 \ket m\!\bra m_{\widehat K}\otimes\sigma_{E^n}\right)}\right)^2.
\end{equation}
The same $\sigma_{E^n}$ is used for every message.

One can instead require the ideal environment state to be the actual marginal $\omega_{E^n}$. We denote this fidelity and error by
\begin{equation}\label{eq:marginal-definition-main}
 \begin{aligned}
 f_{\mathrm{marg}}(\mathscr C_n)
 &\coloneqq F\!\left(\omega_{K\widehat K E^n},
 \overline\Phi^M_{K\widehat K}\otimes\omega_{E^n}\right),\\
 \varepsilon_{\mathrm{marg}}(\mathscr C_n)&\coloneqq1-f_{\mathrm{marg}}(\mathscr C_n).
 \end{aligned}
\end{equation}
This is the joint criterion discussed in Ref.~\cite[Appendix~B]{WTB2017}. Since $\omega_{E^n}$ is one candidate in~\eqref{eq:ideal}, we have $f_{\mathrm{marg}}\le f$. Thus every upper bound on $f$ immediately bounds $f_{\mathrm{marg}}$, with no change of exponent. Appendix~\ref{app:marginal} proves the finite-error comparisons and explains why the two fidelities need not be equal.

For the maximal joint-infidelity formulation, we follow Ref.~\cite[Eq.~(16.1.15)]{KLW2026}, rather than the average trace-distance criterion of Ref.~\cite{Wilde2017Position}. It is defined as
\begin{equation}\label{eq:max-error}
 \varepsilon_{\max}(\mathscr C_n)\coloneqq
 \min_{\sigma\in\states(E^n)}\max_{1\le m\le M}
 \left\{1-F\!\left(\omega^m_{\widehat K E^n},
 \ket m\!\bra m_{\widehat K}\otimes\sigma_{E^n}\right)\right\}.
\end{equation}
For each fixed $\sigma$, the square of the average of the square roots in~\eqref{eq:cq-fidelity} is at least the smallest messagewise fidelity. Optimizing over $\sigma$ therefore gives
\begin{equation}\label{eq:error-comparison}
 \varepsilon(\mathscr C_n)\le\varepsilon_{\max}(\mathscr C_n).
\end{equation}
A converse proved for~\eqref{eq:error} thus also holds for~\eqref{eq:max-error}.

A rate $R\ge0$ is achievable under the uniform joint criterion if there is a sequence of unassisted codes $(\mathscr C_n)_n$, with message sizes $M_n$, such that
\begin{equation}\label{eq:achievable-definition}
 \liminf_{n\to\infty}\frac1n\log M_n\ge R,
 \qquad \lim_{n\to\infty}\varepsilon(\mathscr C_n)=0.
\end{equation}
The capacity is the supremum of achievable rates. Replacing $\varepsilon$ by $\varepsilon_{\max}$ gives the same vanishing-error capacity, denoted by $P(\cN)$~\cite{Devetak2005,KLW2026}. This agreement of capacities does not identify the two criteria at a fixed error; the comparison in~\eqref{eq:error-comparison} is the finite-error implication used here.

\subsection{Why coherent information is the private-capacity threshold}\label{subsec:capacity-background}
We recall the standard argument for the capacity identity, both to fix its meaning and to distinguish it from the strong converse proved here~\cite{CWY2004,Devetak2005,DS2005,Smith2008}. For a state $\rho$ and a channel $\cN$, write
\begin{equation}\label{eq:coherent-information-background}
 I_{\mathrm c}(\rho,\cN)\coloneqq H(\cN(\rho))-H(\cN^c(\rho)).
\end{equation}
Let $V\colon B\to FE'$ dilate a degrading channel, with $E'\cong E$. No symmetry of this dilation is needed for the present argument. Isometric invariance of entropy and degradability imply
\begin{equation}\label{eq:background-conditional}
 I_{\mathrm c}(\rho,\cN)=H(F|E')_{VU\rho U^\dagger V^\dagger}.
\end{equation}
Conditional entropy is concave in the state. Since the state on the right depends linearly on $\rho$, coherent information is concave in the input of a degradable channel. Every pure input has coherent information equal to zero. Decomposing an arbitrary mixed input into pure states therefore gives
\begin{equation}\label{eq:background-nonnegative}
 I_{\mathrm c}(\rho,\cN)\ge0
 \quad\text{for every }\rho\in\states(A).
\end{equation}

For an ensemble $(p_x,\rho^x)_x$, let $\overline\rho\coloneqq\sum_xp_x\rho^x$, and let $X$ record the label. The mutual information is defined as $I(A;B)_\rho\coloneqq H(A)_\rho+H(B)_\rho-H(AB)_\rho$. The equality of private information and coherent information for degradable channels is due to Smith~\cite{Smith2008}. To recall the argument, define $\theta_{XBE}\coloneqq\sum_xp_x\ket x\!\bra x_X\otimes U\rho^xU^\dagger$. Expanding the mutual informations gives
\begin{equation}\label{eq:background-ensemble}
 I(X;B)_\theta-I(X;E)_\theta
 =I_{\mathrm c}(\overline\rho,\cN)-\sum_xp_xI_{\mathrm c}(\rho^x,\cN)
 \le\qN.
\end{equation}
Conversely, choose an input attaining $\qN$ and any pure-state decomposition of that input. Each term in the sum in~\eqref{eq:background-ensemble} is then equal to zero, so the ensemble attains $\qN$. Thus the single-use private information equals $\qN$, as proved in Ref.~\cite{Smith2008}.

Devetak and Shor proved additivity of the maximum coherent information for degradable channels~\cite{DS2005}; see also Ref.~\cite{Smith2008} for its private-capacity consequence. The following conditional-entropy proof makes the use of degradability explicit. For an arbitrary joint input to two degradable channels, apply their degrading isometries and use the entropy chain rule followed by strong subadditivity~\cite{LiebRuskai1973}:
\begin{equation}\label{eq:background-additivity}
 \begin{aligned}
 H(F_1F_2|E'_1E'_2)
 &=H(F_1|E'_1E'_2)+H(F_2|F_1E'_1E'_2)\\
 &\le H(F_1|E'_1)+H(F_2|E'_2)\\
 &\le q_{\cN_1}+q_{\cN_2}.
 \end{aligned}
\end{equation}
The first inequality removes conditioning systems in each term; both applications have the stated direction by strong subadditivity. Product inputs attaining the two individual maxima give the opposite inequality. Iteration proves additivity for every tensor power. The regularized private coding theorem therefore gives $P(\cN)=\qN$. Likewise the quantum coding theorem and the same additivity give $Q(\cN)=\qN$. These coding theorems characterize vanishing error; the remainder of this paper establishes what happens at every rate strictly larger than this threshold.

\subsection{Main result}
The following theorem strengthens the vanishing-error capacity formula to a bound that holds uniformly over every code above capacity, including codes with large decoding or secrecy errors.

\begin{theorem}[Exponential strong converse]\label{thm:main}
Let $\cN\colon A\to B$ be a finite-dimensional degradable quantum channel, and let $\qN$ be defined in~\eqref{eq:qN}. For every $\Delta>0$, there exist $\gamma>0$ and $n_0\in\mathbb N$, depending only on $\cN$ and $\Delta$, such that every unassisted $(n,M)$ private code with
\begin{equation}\label{eq:rate-gap}
 n\ge n_0,\qquad \log M\ge n(\qN+\Delta)
\end{equation}
satisfies
\begin{equation}\label{eq:main}
 f(\mathscr C_n)\le2^{-\gamma n},\qquad
 \varepsilon(\mathscr C_n)\ge1-2^{-\gamma n}.
\end{equation}
The same lower bound holds for $\varepsilon_{\max}(\mathscr C_n)$ and $\varepsilon_{\mathrm{marg}}(\mathscr C_n)$. Consequently, the unassisted private capacity has an exponential strong converse at $P(\cN)=\qN$ for the uniform optimized, maximal optimized, and uniform actual-marginal joint criteria.
\end{theorem}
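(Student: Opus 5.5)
The plan follows the architecture announced in the introduction, assembling ingredients that are proved in later sections, and does so in four stages.

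\emph{Stage 1 (coherent reduction to privacy tests).} The optimized fidelity $f(\mathscr C_n)$ is first reduced to the acceptance probability of a privacy test. I would purify the encoder states $\rho^m_{A^n}$ into a sender shield and make the decoding measurement coherent. Uhlmann's theorem, applied to the optimal $\sigma_{E^n}$ in~\eqref{eq:ideal}, then shows that $f(\mathscr C_n)$ is at most $\Tr[\Pi\,\psi]$. Here $\psi$ is a pure state on the key, shield, and receiver systems, and $\Pi$ is the privacy-test projection for key size $M$, that is, the twisted projection onto maximally correlated key states. Degradability then lets me apply the Morgan--Winter symmetric dilation (Appendix~\ref{app:dilation}) to the channel outputs. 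The result is a state that is exchange-symmetric in two interchangeable output copies $B_1^n,B_2^n$ together with an extra receiver system $C$. The decoder can be placed on exponentially many receiver subsystems of this symmetric extension.

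\emph{Stage 2 (overlap bound and signed averaging).} For two placements sharing a receiver subsystem of dimension $d_C$, the decoded test projections $\Pi_1,\Pi_2$ satisfy an overlap bound of the form $\Tr[\Pi_1\Pi_2\,\cdot\,]\lesssim d_C/M$. I would prove this using isometries $J_1,J_2$ onto the ranges of the projections. The shared-shield twisting unitaries compose to a unitary, and a swap identity handles the shared receiver subsystem, so no shield dimension appears. The signed averaging matrix of Ref.~\cite{KBK2026}, built from de Wolf's low-degree approximation (Appendix~\ref{app:boolean}), then turns these pairwise overlaps into a bound on the average test acceptance. For a symmetric state this gives a bound of the form $\Tr[\Pi\psi]\le 2^{-c n}\,\mathrm{poly}$ provided $\log M$ exceeds $\log d_C$ plus a margin.

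\emph{Stage 3 (single-filter dimension reduction, the main obstacle).} Using $d_C=|C|^n$ naively gives the wrong threshold. The crux is therefore to replace $C$ by a filtered system of dimension roughly $2^{n(\qN+\Delta/2)}$ without destroying exact exchange symmetry, while keeping the original correlations recoverable using quantum side information. I would apply a Gaussian random linear filter acting only on $C$; this preserves the exchange symmetry exactly. Its second moment is computed explicitly, so that some realization yields a normalized state whose fidelity with the original, after an Uhlmann recovery map, is controlled by the smooth max-entropy $\hmax^\epsilon(C|\cdot)$. The remaining difficulty is bounding this entropy uniformly over arbitrary correlated inputs with exponentially small $\epsilon$. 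For that I would use postselection~\cite{CKR2009} to reduce to i.i.d.\ de Finetti inputs at polynomial cost, and then apply a dimension-uniform AEP~\cite{Tomamichel2012}. Because the channel is degradable, the conditional entropy per use equals the coherent information and is at most $\qN$, with an $O(\sqrt{n\log(1/\epsilon)})$ correction. Taking $\epsilon=2^{-\delta n}$ keeps this correction linear with a small coefficient.

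\emph{Stage 4 (assembly).} Stage 1 bounds $f(\mathscr C_n)$ by the test acceptance. By~\eqref{eq:test-probability}, passing to the filtered state changes that acceptance by at most the sine distance, which is $2^{-\Omega(n)}$. Stage 2 with $\log d_C\le n(\qN+\Delta/2)$ then bounds the filtered acceptance by $2^{-\Omega(n)}$ whenever $\log M\ge n(\qN+\Delta)$. I would choose $\delta$ and the averaging parameters in terms of $\Delta$ to get $f\le 2^{-\gamma n}$ for $n\ge n_0$. The claims for $\varepsilon_{\max}$ and $\varepsilon_{\mathrm{marg}}$ then follow at once from~\eqref{eq:error-comparison} and from $f_{\mathrm{marg}}\le f$.
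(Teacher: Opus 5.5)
Your proposal is correct and follows the paper's proof essentially step for step. The steps are the coherent lift to a privacy test (Lemma~\ref{lem:lift}), the shield-independent overlap bound via range isometries and a swap identity (Lemma~\ref{lem:overlap}), signed averaging (Proposition~\ref{prop:symmetric}), a symmetry-preserving Gaussian filter with Uhlmann recovery (Lemma~\ref{lem:single-filter}), and postselection with a dimension-uniform AEP (Proposition~\ref{prop:overhead}), assembled with $\delta_n=2^{-cn}$ and $t=\lfloor\tau n\rfloor$. When writing it out, condition the max-entropy on the single symmetric copy available to the decoder (the paper's $E'^n$, with reference $KSE^n$), since that is what makes the per-use value the coherent information. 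Also note that the row-sum factor $(\sum_{j\le t}\binom nj)^{1/4}\le 2^{n\hbin(\tau)/4}$ is exponential rather than polynomial, so it must be absorbed into the linear rate margin by taking $\tau$ small.
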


The theorem is uniform over codes. In particular, neither $n_0$ nor $\gamma$ depends on the message states, on their ranks, or on the dimensions of local purifying systems. 

The proof has three steps. First, a coherent implementation of a private code gives a privacy test accepted with probability at least $f(\mathscr C_n)$. Second, a symmetric output state with a common receiver system of dimension $L$ satisfies a test bound involving $L/M$, independently of the shields. Third, a single filter converts a degradable code to this state-level setting with $\log L\le n\qN$ plus a controlled recovery-error correction. Section~\ref{sec:converse} makes the correction small enough while keeping all approximation errors exponentially small.

\section{Privacy tests and their overlaps}\label{sec:privacy}
This section isolates the operator statement that permits a private-communication version of the symmetric-channel argument. The sender's shield will be common to all tests. It is essential that no dimension factor is assigned to that shield.

\subsection{Private states and coherent implementations}
We first convert the joint fidelity of a private code into the acceptance probability of a privacy test. Retaining the local systems that purify the encoder and decoder produces the shield systems of that test. Figure~\ref{fig:coherent} illustrates the coherent implementation used below.

\begin{figure}
\centering
\resizebox{\linewidth}{!}{%
\begin{tikzpicture}[x=1cm,y=1cm,>=Latex,font=\small,
 qwire/.style={draw=black!75,line width=.65pt},
 box/.style={draw=figblue,line width=.8pt,fill=figblue!4,rounded corners=2pt},
 lab/.style={fill=white,inner sep=2pt}]
 \node[box,minimum width=1.45cm,minimum height=3cm] (prep) at (1.0,1.3) {$\ket\phi$};
 \node[above,font=\small\bfseries] at (1,3.0) {Sender};
 \node[box,minimum width=1.15cm,minimum height=1.3cm] (U) at (4.3,.2) {$U^{\otimes n}$};
 \node[box,minimum width=1.4cm,minimum height=1.6cm] (D) at (8,0) {$U_{\mathrm{dec}}$};
 \node[font=\small\bfseries] at (8,1.12) {Receiver};
 \node[draw=figteal,fill=figteal!5,line width=.9pt,rounded corners=2pt,
       minimum width=1.1cm,minimum height=3.8cm] (P) at (12.55,.95) {$\Pi^\gamma$};
 \draw[qwire] (1.725,2.4)--(12,2.4) node[pos=.15,lab] {$K$};
 \draw[qwire] (1.725,1.4)--(12,1.4) node[pos=.15,lab] {$S$};
 \draw[qwire,->] (1.725,.2)--(3.725,.2) node[midway,lab] {$A^n$};
 \draw[qwire,->] (4.875,.45)--(7.3,.45) node[midway,lab] {$B^n$};
 \draw[qwire] (4.875,-.1)--(5.35,-.1)--(5.35,-1.65)--(13.15,-1.65);
 \node[lab] at (7,-1.65) {$E^n$};
 \node[right] at (13.15,-1.65) {Eve};
 \draw[qwire] (8.7,.5)--(12,.5) node[midway,lab] {$\widehat K$};
 \draw[qwire] (8.7,-.5)--(12,-.5) node[midway,lab] {$T$};
 \draw[qwire,->] (13.1,.95)--(14.0,.95);
 \node[align=center,right] at (14.0,.95) {accept\\or reject};
 \node[align=center,text=figteal] at (12.55,3.4) {Privacy test\\(converse only)};
 \node[align=center,text=black!65,font=\footnotesize] at (7.0,-2.25)
 {Tracing out $S$ and $T$ recovers the original message--estimate--environment state.};
\end{tikzpicture}%
}\par
\caption{A coherent implementation of a private code. The sender retains separate key and shield systems $K$ and $S$. The channel dilation $U^{\otimes n}$ produces the receiver output $B^n$ and the environment~$E^n$. The receiver retains the environment $T$ of its coherent decoder $U_{\mathrm{dec}}$. The privacy test acts jointly on $K,S,\widehat K,T$; it is a converse test, not an allowed communication resource. Its acceptance probability is at least the optimized joint fidelity of the original code.}
\label{fig:coherent}
\end{figure}

Let $K$ and $\widehat K$ be $M$-dimensional key systems, and let $S$ and $T$ be the sender and receiver shields. Define
\begin{equation}\label{eq:bell}
 \ket{\Phi^M}_{K\widehat K}\coloneqq\frac1{\sqrt M}\sum_{m=1}^M\ket{mm}_{K\widehat K}.
\end{equation}
A twisting unitary has the form
\begin{equation}\label{eq:twisting}
 V_{\mathrm{tw}}\coloneqq\sum_{i,j=1}^M
 \ket i\!\bra i_K\otimes\ket j\!\bra j_{\widehat K}\otimes V^{ij}_{ST},
\end{equation}
where each $V^{ij}_{ST}$ is unitary. Let $\tau_{ST}\in\states(ST)$ be an arbitrary normalized density operator on the two shield systems; it may be entangled across $S:T$. A private state and its privacy-test projection are, respectively,
\begin{equation}\label{eq:private-test}
 \gamma_{KS\widehat K T}
 =V_{\mathrm{tw}}(\Phi^M_{K\widehat K}\otimes\tau_{ST})V_{\mathrm{tw}}^\dagger,
 \qquad
 \Pi^\gamma_{KS\widehat K T}
 =V_{\mathrm{tw}}(\Phi^M_{K\widehat K}\otimes I_{ST})V_{\mathrm{tw}}^\dagger.
\end{equation}
Tensor factors in this display are reordered canonically. The operator $\tau_{ST}$ specifies the shield state before twisting; it need not equal the marginal $\gamma_{ST}$ after twisting. We fix a twisting representation of $\gamma$ and write $\Pi^\gamma$ for its associated test; this notation does not assert that the test is uniquely determined by the density operator $\gamma$. The test accepts $\gamma$ with probability one. The private-state definition follows Refs.~\cite{HHHO2005,HHHO2009}. Verification by untwisting and testing the key systems appears in Refs.~\cite{HHHLO2008QKD,HHHLO2008Privacy}; Ref.~\cite[Sec.~4.2]{WTB2017} explicitly credits these works when defining the privacy test. We use that definition and Ref.~\cite[Definition~15.12]{KLW2026}. A privacy test is generally a nonlocal measurement. We use it only as a mathematical test in a converse, not as a free operation in the communication protocol.

The next lemma is the standard coherent correspondence between approximate secret correlations and private states, followed by the privacy-test acceptance bound; see Ref.~\cite[Chapter~15]{KLW2026} and Ref.~\cite[Sec.~2 and Lemma~9]{WTB2017}. We retain the proof to identify precisely the systems and the fidelity used in the converse.

\begin{lemma}[Coherent lift of the joint criterion]\label{lem:lift}
For every unassisted code $\mathscr C_n$, there is a coherent implementation with sender systems $KS$, receiver systems $\widehat K T$, and environment $E^n$, and there is a size-$M$ privacy-test projection $\Pi^\gamma$, such that its final state $\omega_{KS\widehat K T}$ satisfies
\begin{equation}\label{eq:lift}
 \Tr[\Pi^\gamma\omega_{KS\widehat K T}]\ge f(\mathscr C_n).
\end{equation}
The coherent implementation uses no initial shared resource.
\end{lemma}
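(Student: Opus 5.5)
The plan is to purify everything and then read off a privacy test from Uhlmann's theorem applied to the optimizer in~\eqref{eq:ideal}.

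\emph{Coherent encoder.} For each $m$, fix a purification $\ket{\psi^m}_{A^nS_0}$ of $\rho^m_{A^n}$, padding all $S_0$ to a common dimension. The sender prepares
\[
\frac1{\sqrt M}\sum_m\ket m_K\ket m_{K'}\ket{\psi^m}_{A^nS_0},
\]
and we set $S\coloneqq K'S_0$. Tracing $K'$ gives the classical mixture over $m$, as required, and no shared resource is used.

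\emph{Coherent decoder.} Let $U_{\mathrm{dec}}\colon B^n\to\widehat K T$ be a Naimark/Stinespring isometry, $U_{\mathrm{dec}}\ket\beta=\sum_j\ket j_{\widehat K}\otimes(\sqrt{\Lambda^j}\ket\beta)_T$ with $T\cong B^n$. Tracing $T$ applies the measurement channel $\rho\mapsto\sum_j\ket j\!\bra j\Tr[\Lambda^j\rho]$. The resulting global state $\ket{\omega}_{KS\widehat KTE^n}$ is pure, and tracing out $K'$, $S_0$ and $T$ returns exactly $\omega_{K\widehat KE^n}$ of~\eqref{eq:code-state}.

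\emph{Fidelity step.} Let $\sigma_{E^n}$ attain the maximum in~\eqref{eq:ideal}, so that $F(\omega_{K\widehat KE^n},\overline\Phi^M\otimes\sigma)=f(\mathscr C_n)$. The key fact, which I expect to be the main point to get right, is that every purification of the ideal state $\overline\Phi^M_{K\widehat K}\otimes\sigma_{E^n}$ on $K\widehat K\,E^n\,ST$ has the form
\[
\frac1{\sqrt M}\sum_m\ket{mm}_{K\widehat K}\otimes\ket{\chi^m}_{STE^n},
\]
where each $\ket{\chi^m}$ purifies $\sigma_{E^n}$. This form yields a twisting structure, because all $\ket{\chi^m}$ purify the same $\sigma$. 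Fix one purification $\ket{\chi}_{STE^n}$ of $\sigma$; the shield dimensions $|S||T|\ge\operatorname{rank}\sigma$ can be ensured by padding. Then $\ket{\chi^m}=(W^m_{ST}\otimes I)\ket\chi$ for some unitaries $W^m_{ST}$. Hence the purification equals
\[
V_{\mathrm{tw}}\bigl(\ket{\Phi^M}\otimes\ket\chi\bigr),\qquad V^{ij}\coloneqq W^i\ \text{(independent of $j$)},
\]
which is a legitimate twisting unitary as in~\eqref{eq:twisting}, with $\tau_{ST}\coloneqq\Tr_{E^n}\chi$.

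\emph{Applying Uhlmann.} By Uhlmann's theorem, since $\ket\omega$ purifies $\omega_{K\widehat KE^n}$ with purifying system $ST$, there is a purification $\ket\xi$ of the ideal state on the same systems with $|\langle\xi|\omega\rangle|^2=f(\mathscr C_n)$. If the ideal state needs a larger purifying space, we embed $ST$ isometrically into a larger space, which leaves the $KS\widehat KT$ state unchanged up to isometry. By the previous paragraph, $\ket\xi=V_{\mathrm{tw}}(\ket{\Phi^M}\otimes\ket\chi)$ for suitable $W^m$.

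\emph{Conclusion.} The projection $\Pi^\gamma\otimes I_{E^n}$ dominates $\ket\xi\!\bra\xi$, since $\ket\xi$ lies in the range of $V_{\mathrm{tw}}(\Phi^M\otimes I_{STE^n})V_{\mathrm{tw}}^\dagger$. Therefore
\[
\Tr[\Pi^\gamma\omega_{KS\widehat KT}]=\bra\omega(\Pi^\gamma\otimes I)\ket\omega\ge|\langle\xi|\omega\rangle|^2=f(\mathscr C_n).
\]

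The main care required is the dimension bookkeeping: the shield $ST$ must be large enough to host purifications of $\sigma$, and any isometric enlargement of $S$ or $T$ must be harmless. This holds because enlarging shields by appending ancillas in fixed states changes neither the code nor the correspondence with~\eqref{eq:code-state}.
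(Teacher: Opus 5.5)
Your route is the paper's. You copy the message into the sender shield, dilate the decoder, and apply Uhlmann's theorem with the optimal $\sigma_{E^n}$. You then read twisting unitaries off the block form $\frac1{\sqrt M}\sum_m\ket{mm}\ket{\chi^m}$ of purifications of $\overline\Phi^M\otimes\sigma_{E^n}$, and conclude because $\Pi^\gamma\otimes I_{E^n}$ dominates $\ket\xi\!\bra\xi$. Those steps are fine.

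The gap is your decoder. The isometry $U_{\mathrm{dec}}=\sum_j\ket j_{\widehat K}\otimes\sqrt{\Lambda^j}$ with $T\cong B^n$ does not dilate the measurement channel, since
\[
\Tr_T\bigl[U_{\mathrm{dec}}\rho\, U_{\mathrm{dec}}^\dagger\bigr]=\sum_{j,j'}\Tr\bigl[\sqrt{\Lambda^{j}}\rho\sqrt{\Lambda^{j'}}\bigr]\,\ket j\!\bra{j'}_{\widehat K}.
\]
This has nonzero off-diagonal entries for non-projective POVMs. For $M=2$ and $\Lambda^1=\Lambda^2=I/2$ it outputs $\ket{+}\!\bra{+}$ rather than $I/2$. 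Hence your claim that discarding $K'S_0T$ returns exactly $\omega_{K\widehat KE^n}$ is false, and your $\ket\omega$ does not purify the code state. Uhlmann's theorem, as you invoke it, then yields $|\langle\xi|\omega\rangle|^2=F(\omega'_{K\widehat KE^n},\overline\Phi^M\otimes\sigma_{E^n})$, where $\omega'$ retains these coherences, not $f(\mathscr C_n)$. Data processing under dephasing of $\widehat K$ gives only $F(\omega',\cdot)\le F(\omega,\cdot)$, which is the wrong direction.

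The paper's fix is to copy the outcome into the receiver shield as well: $U_{\mathrm{dec}}=\sum_j\ket j_{\widehat K}\ket j_{T_0}\otimes\sqrt{\Lambda^j}$ with $T=T_0T_1$. Tracing out $T_0$ removes the off-diagonal outcome terms, which gives~\eqref{eq:coherent-output-marginal}, and the rest of your argument then goes through unchanged.

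Alternatively, you could keep your decoder and show that equality happens to hold here. $K$ is classical because of $K'$, and the ideal state conditioned on $K=m$ is $\ket m\!\bra m_{\widehat K}\otimes\sigma_{E^n}$. So by~\eqref{eq:cq-fidelity} only the diagonal $(m,m)$ block of $\widehat K$ enters, and those blocks agree with those of $\omega$. That argument is missing from your proposal. Even with it, your isometry would not be a coherent implementation that reproduces the code state, which is what the lemma asserts.
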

\begin{proof}
Choose a purification $\ket{\phi_m}_{S_1A^n}$ of each message state $\rho^m_{A^n}$. The sender prepares
\begin{equation}\label{eq:coherent-encoder}
 \ket\phi_{KS_0S_1A^n}\coloneqq\frac1{\sqrt M}\sum_{m=1}^M
 \ket m_K\ket m_{S_0}\ket{\phi_m}_{S_1A^n},\qquad S\coloneqq S_0S_1.
\end{equation}
Tracing out $S_0S_1$ gives the original classical message--input state. The receiver implements the measurement by the isometry
\begin{equation}\label{eq:coherent-decoder}
 U_{\mathrm{dec}}\colon B^n\to\widehat K T_0T_1,\qquad
 U_{\mathrm{dec}}\coloneqq\sum_{j=1}^M\ket j_{\widehat K}\ket j_{T_0}\otimes\sqrt{\Lambda^j},
\end{equation}
where $T_1\cong B^n$ and $T\coloneqq T_0T_1$. The identity $\sum_j\Lambda^j=I_{B^n}$ implies $U_{\mathrm{dec}}^\dagger U_{\mathrm{dec}}=I_{B^n}$. Apply the channel isometry to the input share $A^n$ of $\ket\phi$ and then apply $U_{\mathrm{dec}}$ to its receiver output $B^n$. The resulting state is
\begin{equation}\label{eq:coherent-output}
 \ket\omega_{KS\widehat K T E^n}
 \coloneqq (I_{KS}\otimes U_{\mathrm{dec}}\otimes I_{E^n})
 (I_{KS}\otimes U^{\otimes n})\ket\phi_{KSA^n}.
\end{equation}
Both transformations are isometries, so this state is pure and normalized. Tracing out $S_0$ removes the off-diagonal message terms, and tracing out $T_0$ removes the off-diagonal decoder outcomes. Tracing out the remaining shield factors $S_1T_1$ therefore gives
\begin{equation}\label{eq:coherent-output-marginal}
 \Tr_{ST}[\omega_{KS\widehat K T E^n}]=\omega_{K\widehat K E^n},
\end{equation}
with the right-hand side exactly as in~\eqref{eq:code-state}. In particular, the decoder's local purification has not changed the environment marginal.

Let $\sigma_{E^n}$ achieve the maximum in~\eqref{eq:ideal}. By Uhlmann's theorem, after enlarging local shields by pure auxiliary systems when necessary, there is a purification $\ket\gamma_{KS\widehat K T E^n}$ of $\overline\Phi^M_{K\widehat K}\otimes\sigma_{E^n}$ such that
\begin{equation}\label{eq:uhlmann-lift}
 |\langle\gamma|\omega\rangle|^2=f(\mathscr C_n).
\end{equation}
This purification has the form
\begin{equation}\label{eq:gamma-conditional}
 \ket\gamma=\frac1{\sqrt M}\sum_{m=1}^M\ket{mm}_{K\widehat K}\ket{\chi_m}_{STE^n},
 \qquad \Tr_{ST}[\chi_m]=\sigma_{E^n}.
\end{equation}
Every $\ket{\chi_m}$ purifies the same state. Purification uniqueness therefore gives unitaries $V^{mm}_{ST}$ for which $\ket{\chi_m}=V^{mm}_{ST}\ket{\chi_1}$. Choose arbitrary unitary blocks for $i\ne j$ in~\eqref{eq:twisting}. It follows that the marginal $\gamma_{KS\widehat K T}$ is a private state of the form~\eqref{eq:private-test}. In particular,
\begin{equation}\label{eq:test-range}
 (\Pi^\gamma\otimes I_{E^n})\ket\gamma=\ket\gamma.
\end{equation}
The Cauchy--Schwarz inequality now gives
\begin{equation}\label{eq:lift-cs}
 |\langle\gamma|\omega\rangle|^2
 =|\langle\gamma|(\Pi^\gamma\otimes I_{E^n})|\omega\rangle|^2
 \le\langle\omega|(\Pi^\gamma\otimes I_{E^n})|\omega\rangle,
\end{equation}
which proves~\eqref{eq:lift}.
\end{proof}

\subsection{Decoded test projections}\label{subsec:decoded-test}
This subsection defines decoded test projections and explains why we retain the decoder's auxiliary input system when comparing their operator norms. The overlap argument will compare different possible receivers before their decoders are applied. We therefore represent ``decode, then test'' as one projection on the decoder's input space. This change of representation is an operator identity, not a new step in the communication protocol.

Let $R$ be a receiver input system. Implement a receiver decoder by an isometry from $R$ to a key system $\widehat K$ and a retained shield. Introduce a receiver-only auxiliary system $G$, initially in a fixed pure state $\ket0_G$, and enlarge the retained receiver shield $T$ if needed. The isometry then extends to a unitary
\begin{equation}\label{eq:decoder-unitary}
 D\colon R G\longrightarrow\widehat K T.
\end{equation}
Such a unitary realization is always possible: one can add pure auxiliary systems on the output side until its dimension is a multiple of $|R|$, choose $G$ to make the input and output dimensions equal, and extend an orthonormal set to an orthonormal basis. All added output systems are included in $T$. The decoder acts trivially on the sender systems $KS$.

For a fixed privacy-test projection $\Pi^\gamma$ on $KS\widehat K T$, define its \emph{decoded test projection} by
\begin{equation}\label{eq:decoded-projection}
 P\coloneqq(I_{KS}\otimes D^\dagger)\Pi^\gamma(I_{KS}\otimes D).
\end{equation}
It acts on $KSRG$. Unitarity and $(\Pi^\gamma)^2=\Pi^\gamma$ imply $P^\dagger=P$ and $P^2=P$. For an arbitrary input state $\theta_{KSR}$, set $\widetilde\theta\coloneqq\theta\otimes\ket0\!\bra0_G$. Cyclicity of the trace gives
\begin{equation}\label{eq:decoded-acceptance}
 \Tr[P\widetilde\theta]
 =\Tr\!\left[\Pi^\gamma(I_{KS}\otimes D)\widetilde\theta
 (I_{KS}\otimes D^\dagger)\right].
\end{equation}
Thus the two descriptions in Figure~\ref{fig:decoded-test} have exactly the same acceptance probability.

There is a reason for retaining $G$ in~\eqref{eq:decoded-projection}. Compressing it to its initial vector would instead give the measurement effect
\begin{equation}\label{eq:compressed-effect}
 \Lambda_{\mathrm{test}}\coloneqq(I_{KSR}\otimes\bra0_G)P(I_{KSR}\otimes\ket0_G),
 \qquad 0\le\Lambda_{\mathrm{test}}\le I_{KSR}.
\end{equation}
Although $\Tr[\Lambda_{\mathrm{test}}\theta]=\Tr[P\widetilde\theta]$, the operator $\Lambda_{\mathrm{test}}$ need not be a projection. For example, compressing $\ket\Phi\!\bra\Phi_{RG}$ for a two-qubit maximally entangled state against $\ket0_G$ gives $\frac12\ket0\!\bra0_R$. Our argument needs projections, so all operator norms below are taken on the enlarged space with $G$ retained. When comparing different decoder placements, we give each placement its own auxiliary system. These systems are disjoint and do not increase the dimension shared by the receivers. A closely related use of Naimark dilation occurs in sequential decoding of general measurements: Ref.~\cite[Sec.~3, Lemma~3.1]{Wilde2013Sequential} introduces a separate auxiliary system for each measurement effect and applies a projection bound on the enlarged space. Here the enlarged-space projections represent alternative decoder placements, not a sequence of measurements. The same unitary-dilation convention is used for entanglement tests in Ref.~\cite[Supplemental Material, Sec.~S1]{KBK2026}.

\begin{figure}
\centering
\begin{tikzpicture}[x=1cm,y=1cm,font=\small,>=Latex,
  wire/.style={line width=.65pt},
  gate/.style={draw=figblue,line width=.8pt,fill=figblue!5,rounded corners=2pt},
  test/.style={draw=figteal,line width=.8pt,fill=figteal!5,rounded corners=2pt}]
\node[anchor=west,font=\small\bfseries] at (0,3.65) {(a) Decode, then test};
\foreach \y/\lab in {2.9/K,2.15/S,1.25/R}{
  \node[anchor=east] at (.5,\y) {$\lab$};
}
\node[anchor=east] at (.5,.50) {$\ket0_G$};
\draw[gate] (1.5,.15) rectangle (2.5,1.6);
\node at (2,.875) {$D$};
\draw[test] (3.95,.15) rectangle (4.95,3.25);
\node at (4.45,1.70) {$\Pi^\gamma$};
\draw[wire] (.65,2.9)--(3.95,2.9);
\draw[wire] (.65,2.15)--(3.95,2.15);
\draw[wire] (.65,1.25)--(1.5,1.25);
\draw[wire] (.65,.50)--(1.5,.50);
\draw[wire] (2.5,1.25)--(3.95,1.25);
\draw[wire] (2.5,.50)--(3.95,.50);
\node[above,font=\footnotesize] at (3.20,1.25) {$\widehat K$};
\node[above,font=\footnotesize] at (3.20,.50) {$T$};
\draw[->,wire] (4.95,1.70)--(5.95,1.70);
\node[above,font=\footnotesize] at (5.48,1.70) {accept};
\node[align=center,font=\footnotesize] at (2.95,-.35)
  {Sender systems are unchanged by $D$.};
\node[font=\Large] at (6.60,1.70) {$\equiv$};
\node[anchor=west,font=\small\bfseries] at (7.25,3.65) {(b) Test on the enlarged input};
\foreach \y/\lab in {2.9/K,2.15/S,1.25/R}{
  \node[anchor=east] at (8.2,\y) {$\lab$};
  \draw[wire] (8.35,\y)--(10.0,\y);
}
\node[anchor=east] at (8.2,.50) {$\ket0_G$};
\draw[wire] (8.35,.50)--(10.0,.50);
\draw[test] (10.0,.15) rectangle (11.0,3.25);
\node at (10.5,1.70) {$P$};
\draw[->,wire] (11.0,1.70)--(12.2,1.70);
\node[above,font=\footnotesize] at (11.6,1.70) {accept};
\node[align=center,font=\footnotesize] at (10.1,-.35)
  {$P$ is a projection on $KSRG$, not just $KSR$.};
\node[font=\small] at (6.15,-1.0)
  {$P=(I_{KS}\otimes D^\dagger)\,\Pi^\gamma\,(I_{KS}\otimes D),
    \qquad\widetilde\theta=\theta_{KSR}\otimes\ket0\!\bra0_G$.};
\end{tikzpicture}\par
\caption{Two descriptions of the same test probability. In (a), the receiver appends $\ket0_G$, applies the unitary decoder $D$, and the joint privacy test $\Pi^\gamma$ is applied to $KS\widehat K T$. In (b), the decoded projection $P=(I_{KS}\otimes D^\dagger)\Pi^\gamma(I_{KS}\otimes D)$ is applied directly to the enlarged input state on $KSRG$. Sender key and shield wires are shown separately. The equality concerns acceptance probabilities; neither joint test is an additional allowed communication operation. The system $G$ remains part of the operator space even though its input state is fixed.}
\label{fig:decoded-test}
\end{figure}

\subsection{A shield-independent overlap bound}\label{subsec:overlap}
We now bound the overlap of two decoded privacy-test projections by the dimension shared by their receiver inputs, with a bound independent of both shield dimensions. A \emph{receiver placement} specifies which tensor factors are supplied to a decoder. Consider two such choices: the first receiver uses $XC$ and the second uses $CY$. Here $X$ is available only to the first receiver, $Y$ only to the second, and $C$ is the subsystem present in both choices. These are alternative tests on one Hilbert space, not two simultaneously executed decoders. Both tests also use the same sender key $K$ and sender shield $S$.

There are two parts to the argument. With disjoint receivers, matching the common key label produces a factor $1/M$; all shield operators combine into unitaries and therefore contribute norm one. With a shared receiver system $C$, an operator identity reduces the problem to disjoint receivers with a multiplicative factor at most $|C|$. Figure~\ref{fig:overlap} displays the systems in these two steps.

\begin{figure}
\centering
\begin{tikzpicture}[x=1cm,y=1cm,font=\small,>=Latex,
  sys/.style={draw=black!60,fill=white,rounded corners=2pt,minimum width=.67cm,minimum height=.54cm},
  first/.style={draw=figblue,line width=.9pt,rounded corners=3pt},
  second/.style={draw=figamber,dashed,line width=.9pt,rounded corners=3pt}]
\node[anchor=west,font=\small\bfseries] at (0,3.7) {(a) Shared receiver factor};
\node[sys] at (1.50,2.85) {$K$};
\node[sys] at (2.65,2.85) {$S$};
\draw[first] (.92,2.40) rectangle (3.17,3.28);
\draw[second] (1.02,2.30) rectangle (3.27,3.38);
\node[font=\footnotesize,anchor=west] at (3.55,2.85) {shared sender};
\node[sys] at (.5,1.05) {$G_1$};
\node[sys] at (1.50,1.05) {$X$};
\node[sys,fill=figamber!10] at (2.65,1.05) {$C$};
\node[sys] at (3.80,1.05) {$Y$};
\node[sys] at (4.80,1.05) {$G_2$};
\draw[first] (.02,.62) rectangle (3.16,1.48);
\draw[second] (2.12,.49) rectangle (5.27,1.60);
\node[text=figblue,font=\footnotesize] at (1.15,.20) {$P$: receiver $XCG_1$};
\node[text=figamber,font=\footnotesize] at (3.90,-.23) {$Q$: receiver $CYG_2$};
\node[font=\footnotesize] at (2.65,-.78) {$|C|=d_C$};
\node[anchor=west,font=\small\bfseries] at (6.65,3.7) {(b) Formal relabeling in the second test};
\node[sys] at (8.12,2.85) {$K$};
\node[sys] at (9.27,2.85) {$S$};
\draw[first] (7.54,2.40) rectangle (9.79,3.28);
\draw[second] (7.64,2.30) rectangle (9.89,3.38);
\node[font=\footnotesize,anchor=west] at (10.12,2.85) {still shared};
\node[sys] at (6.90,1.05) {$G_1$};
\node[sys] at (7.80,1.05) {$X$};
\node[sys] at (8.70,1.05) {$C$};
\node[sys] at (10.05,1.05) {$C'$};
\node[sys] at (10.95,1.05) {$Y$};
\node[sys] at (11.85,1.05) {$G_2$};
\draw[first] (6.47,.56) rectangle (9.13,1.54);
\draw[second] (9.62,.56) rectangle (12.28,1.54);
\node[text=figblue,font=\footnotesize] at (7.80,.17) {$P$: receiver $XCG_1$};
\node[text=figamber,font=\footnotesize] at (10.95,.17) {$Q'$: receiver $C'YG_2$};
\node[font=\footnotesize] at (9.40,-.78) {Disjoint receivers: $\left\| PQ'\right\|=1/M$};
\draw[black!30] (0,-1.17)--(12.28,-1.17);
\node[anchor=west,font=\small\bfseries] at (0,-1.65) {(c) Restore the product by an operator identity};
\node at (6.14,-2.28)
  {$PQ=\operatorname{Tr}_{C'}[(PQ')\mathbb F_{CC'}]
    \quad\Longrightarrow\quad
    \left\| PQ\right\|\le d_C\left\| PQ'\right\|=d_C/M.$};
\end{tikzpicture}\par
\caption{System supports in Lemma~\ref{lem:overlap}. The sender systems $K,S$ belong to both tests in every panel. In (a), the first receiver uses $X,C,G_1$ and the second uses $C,Y,G_2$, so only $C$ is a shared receiver factor. In (b), $Q'$ acts on a separate formal tensor factor $C'$ in place of $C$. The receiver supports are now disjoint, and their overlap is $1/M$ by~\eqref{eq:cross-action}. Panel (c) records the operator identity restoring the original product. The partial trace increases the norm by at most the factor $d_C=|C'|$. The auxiliary systems $G_1,G_2$ are included to make both decoded tests projections; they are never shared. The panels compare mathematical tests, not simultaneous decoding operations.}
\label{fig:overlap}
\end{figure}
\FloatBarrier

\begin{lemma}[Overlap of decoded privacy tests]\label{lem:overlap}
Let $K,S,X,C,Y,G_1,G_2$ be distinct tensor factors, let $|K|=M$ and $|C|=d_C$, and fix an orthonormal basis of $K$. The sender systems are $KS$. The two receiver inputs are $XC$ and $CY$, with separate pure auxiliary systems $G_1$ and $G_2$. Let
\begin{equation}\label{eq:two-decoders}
 D_1\colon XCG_1\longrightarrow K_1T_1,
 \qquad D_2\colon CYG_2\longrightarrow K_2T_2
\end{equation}
be unitary decoder realizations, where $|K_1|=|K_2|=M$ and $T_1,T_2$ contain all retained receiver outputs. When $C$ is nontrivial, $K_1T_1$ and $K_2T_2$ are alternative output spaces; they are not simultaneous subsystems of the common input space. For $i=1,2$, let $\Pi^{\gamma_i}$ be an arbitrary size-$M$ privacy-test projection on $KSK_iT_i$, with the fixed basis of $K$. Define
\begin{equation}\label{eq:two-pulled-tests}
 P\coloneqq(I_{KS}\otimes D_1^\dagger)\Pi^{\gamma_1}(I_{KS}\otimes D_1),
 \quad
 Q\coloneqq(I_{KS}\otimes D_2^\dagger)\Pi^{\gamma_2}(I_{KS}\otimes D_2),
\end{equation}
where $P$ and $Q$ are extended by identities on $YG_2$ and $XG_1$, respectively. They act on the common input space $KSXCYG_1G_2$ and satisfy
\begin{equation}\label{eq:overlap}
 \op{PQ}\le\min\left\{1,\frac{d_C}{M}\right\}.
\end{equation}
In particular, neither the sender shield dimension nor a receiver shield or auxiliary dimension appears in the bound.
\end{lemma}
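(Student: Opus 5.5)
The plan follows the two-step structure sketched before the lemma: first the disjoint-receiver case, then reduction of the shared factor $C$ by a swap identity. The bound $\op{PQ}\le 1$ is trivial because $P$ and $Q$ are projections. So the work is to prove $\op{PQ}\le d_C/M$.

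\textbf{Step 1: disjoint receivers.} Replace $C$ in the second test by a formal copy $C'$. Let $Q'$ be the resulting projection on $KSC'YG_2$, extended by identities on $XCG_1$. I will show $\op{PQ'}\le 1/M$, and in fact equality when the key space is nontrivial.

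Write $\Pi^{\gamma_i}=W_i(\Phi^M_{KK_i}\otimes I_{ST_i})W_i^\dagger$, where $W_i=\sum_{a,b}\ket a\!\bra a_K\otimes\ket b\!\bra b_{K_i}\otimes V_i^{ab}$. Set
\begin{equation*}
J_i\coloneqq (I_{KS}\otimes D_i^\dagger)W_i(\ket{\Phi^M}_{KK_i}\otimes I_{ST_i}),
\end{equation*}
so that $J_i$ is an isometry from $ST_i$ with $P=J_1J_1^\dagger$ and $Q'=J_2J_2^\dagger$. Then $\op{PQ'}=\op{J_1^\dagger J_2}$.

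Expanding, $J_i=\frac1{\sqrt M}\sum_a \ket a_K\otimes (D_i^\dagger\ket a_{K_i}\otimes I)V_i^{aa}$. In $J_1^\dagger J_2$ the $K$ inner product forces equal labels $a$, giving
\begin{equation*}
J_1^\dagger J_2=\frac1M\sum_a \bigl[(V_1^{aa})^\dagger(\cdots)\bigr]\bigl[(\cdots)V_2^{aa}\bigr].
\end{equation*}
Because the receiver factors $XCG_1$ and $C'YG_2$ are disjoint, the operators $D_1(\cdot)$ and $D_2^\dagger(\cdot)$ act on different tensor factors and commute. Each term then takes the form $A_a^\dagger B_a$ with $A_a,B_a$ of norm at most one.

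To get $1/M$ rather than $1$ from the sum, I will organize the sum as a product of block operators:
\begin{equation*}
\textstyle\sum_a A_a^\dagger B_a=\bigl(\sum_a \bra a\otimes A_a\bigr)^\dagger\bigl(\sum_a\bra a\otimes B_a\bigr).
\end{equation*}
Each factor is a block row of operators. Its norm is bounded by the norm of the corresponding isometry, namely $\sqrt M\,J_i$ with the $K$ label moved to an index register. This gives the bound $\le 1/M\cdot\sqrt M\sqrt M\cdot(1/M)$ in the appropriate normalization; I must check this bookkeeping carefully.

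Alternatively, one can show directly that $PQ'P=\frac1M P$. Tracing the second receiver's $K_2$ against the sender key yields the maximally mixed factor, and the shared-sender twisting products $V_1^{aa\dagger}V_2^{aa}$ remain unitary on $S$. This is the content of the cross-action identity referenced in Figure~\ref{fig:overlap}.

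\textbf{Step 2: restoring the shared factor.} Use the identity
\begin{equation*}
PQ=\Tr_{C'}\bigl[(PQ')\,\FS_{CC'}\bigr],
\end{equation*}
where $\FS$ is the swap. I verify it on product operators: for $X_{C}$ and $Y_{C'}$ one has $\Tr_{C'}[(X_C\otimes Y_{C'})\FS_{CC'}]=X_CY_C$, and the identity extends by linearity. Since $P$ acts trivially on $C'$ and $Q'$ trivially on $C$, this reproduces $PQ$.

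Then I use the bound $\op{\Tr_{C'}Z}\le d_C\op Z$ for any $Z$. This follows because $\Tr_{C'}Z=\sum_k(I\otimes\bra k)Z(I\otimes\ket k)$ is a sum of $d_C$ contractions of $Z$. Together with $\op{\FS}=1$, this gives
\begin{equation*}
\op{PQ}\le d_C\op{PQ'}=d_C/M.
\end{equation*}

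The main obstacle is Step 1: obtaining exactly $1/M$ uniformly in the shields. In particular, the twisting blocks $V_1^{aa}$ and $V_2^{aa}$ both act on $S$ and need not commute. I will handle this by never reordering them and by writing $J_1^\dagger J_2$ as a normalized sum whose block form has norm $1/M$.
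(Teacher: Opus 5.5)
Your two-step architecture is the paper's, and Step~2 (the identity $PQ=\Tr_{C'}[(PQ')\FS_{CC'}]$ together with $\op{\Tr_{C'}Z}\le d_C\op Z$) is correct as written. The gap is in Step~1, which carries the substance of the lemma: neither of your two mechanisms actually yields $1/M$.

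In the block-operator route, $(\sum_a\bra a\otimes A_a)^\dagger(\sum_a\bra a\otimes B_a)=\sum_{a,b}\ket a\!\bra b\otimes A_a^\dagger B_b$, which is not $\sum_aA_a^\dagger B_a$. You need block columns $\sum_a\ket a\otimes A_a$. More seriously, bounding those columns by $\op{\sqrt M\,J_i}=\sqrt M$, or using only $\op{A_a},\op{B_a}\le1$, gives $\op{\sum_aA_a^\dagger B_a}\le M$. That yields only the trivial bound $\op{J_1^\dagger J_2}\le1$. The extra $1/M$ in your ``$1/M\cdot\sqrt M\sqrt M\cdot(1/M)$'' has no source, because orthogonality of the \emph{sender} key was already spent collapsing the double sum to $\sum_a$.

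Your alternative identity has the wrong constant. The correct statement is $PQ'P=M^{-2}P$. For instance, with trivial shields, trivial $C$, and identity decoders, $\Tr[PQ']=1/M$ while $\Tr P=M$. Since $\op{PQ'}^2=\op{PQ'P}$, the constant $1/M$ would give only $\op{PQ'}=M^{-1/2}$ and a final bound $d_C/\sqrt M$.

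The missing ingredient is orthogonality of the \emph{receiver} key labels, transported through the unitary decoders. Put $L_a\coloneqq(\bra a_{K_1}\otimes I_{T_1})D_1$ and $R_a\coloneqq D_2^\dagger(\ket a_{K_2}\otimes I_{T_2})$. Unitarity then gives $L_aL_b^\dagger=\delta_{ab}I_{T_1}$ and $R_a^\dagger R_b=\delta_{ab}I_{T_2}$. Your expansion becomes $J_1^\dagger J_2=\frac1M\sum_a\Xi_a$ with $\Xi_a\coloneqq(V_1^{aa})^\dagger(L_a\otimes R_a)V_2^{aa}$, with identities elsewhere and the two twisting blocks kept in this order on $S$. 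The relations above give $\Xi_a^\dagger\Xi_b=0=\Xi_a\Xi_b^\dagger$ for $a\ne b$, hence $\op{\sum_a\Xi_a}=\max_a\op{\Xi_a}\le1$.

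In your block language, this is the crosswise split $A_a=(R_a^\dagger\otimes I)V_1^{aa}$ and $B_a=(L_a\otimes I)V_2^{aa}$. For this split, $\sum_aA_a^\dagger A_a=\sum_aR_aR_a^\dagger\otimes I=I$ and $\sum_aB_a^\dagger B_a=I$. The natural split pairing $V_1^{aa}$ with $D_1$, which is exactly your $\sqrt M\,J_1$, gives $\sum_aA_a^\dagger A_a=M\,I$ and fails.

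The paper makes this structure visible by first conjugating with $\mathsf U=I_{KS}\otimes D_1\otimes D_2$, which is legitimate only because the receiver supports are disjoint. Afterwards the range isometries of the undecoded tests satisfy $J_1^\dagger J_2=\frac1M\sum_j\ket j_{K_2}\!\bra j_{K_1}\otimes(V_1^{jj})^\dagger V_2^{jj}$, whose norm is plainly $1/M$.

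Relatedly, your $J_1$ must be tensored with $I_{C'YG_2}$, and $J_2$ with $I_{XCG_1}$, before $P=J_1J_1^\dagger$ and $Q'=J_2J_2^\dagger$ hold on the common space.
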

\begin{proof}
\emph{Disjoint receiver systems.} First take $d_C=1$, so $C$ is trivial. The decoder unitaries act on $XG_1$ and $YG_2$, respectively, and hence on disjoint tensor factors. Define the unitary $ \mathsf U$ as follows:
\begin{equation}\label{eq:joint-decoder-unitary}
 \mathsf U\coloneqq I_{KS}\otimes D_1\otimes D_2
 \colon KSXG_1YG_2\longrightarrow KSK_1T_1K_2T_2.
\end{equation}
With identities on unused output factors understood, set
\begin{equation}\label{eq:undecoded-tests}
 \widehat P\coloneqq\Pi^{\gamma_1}\otimes I_{K_2T_2},
 \qquad \widehat Q\coloneqq\Pi^{\gamma_2}\otimes I_{K_1T_1}.
\end{equation}
The operator $P$ is the identity on the input of $D_2$, and $Q$ is the identity on the input of $D_1$. Consequently, the definitions in~\eqref{eq:two-pulled-tests} and the identities $D_iD_i^\dagger=I$ give
\begin{equation}\label{eq:decoder-conjugation}
 \mathsf U P\mathsf U^\dagger=\widehat P,
 \qquad \mathsf U Q\mathsf U^\dagger=\widehat Q,
 \qquad \mathsf U(PQ)\mathsf U^\dagger=\widehat P\widehat Q.
\end{equation}
The last equality follows by inserting $\mathsf U^\dagger\mathsf U=I$ between $P$ and $Q$. Unitary invariance of the operator norm therefore implies
\begin{equation}\label{eq:decoder-norm-invariance}
 \op{PQ}=\op{\mathsf U(PQ)\mathsf U^\dagger}
 =\op{\widehat P\widehat Q}.
\end{equation}
Thus both decoders have been removed from the calculation by one unitary change of coordinates; no commutation of the two privacy tests is assumed. Reorder the output factors as $KK_1K_2ST_1T_2$ for the remaining calculation.

Write $U_i$ for the diagonal twisting block of the first test, acting on $ST_1$, and $V_i$ for that of the second, acting on $ST_2$. Extend both by identities to the full shield system $\mathsf S\coloneqq ST_1T_2$. Only diagonal blocks enter a privacy test because $\ket{\Phi^M}$ has equal key labels. Define two isometries with explicit domains and the same codomain:
\begin{equation}\label{eq:range-domains}
 J_1\colon K_2\mathsf S\longrightarrow KK_1K_2\mathsf S,
 \qquad
 J_2\colon K_1\mathsf S\longrightarrow KK_1K_2\mathsf S.
\end{equation}
Their action on a key-basis vector and an arbitrary shield vector $\ket\xi_{\mathsf S}$ is
\begin{equation}\label{eq:range-isometries}
 \begin{aligned}
 J_1(\ket j_{K_2}\otimes\ket\xi)
 &\coloneqq\frac1{\sqrt M}\sum_{i=1}^M
 \ket i_K\ket i_{K_1}\ket j_{K_2}\otimes U_i\ket\xi,\\
 J_2(\ket j_{K_1}\otimes\ket\xi)
 &\coloneqq\frac1{\sqrt M}\sum_{i=1}^M
 \ket i_K\ket j_{K_1}\ket i_{K_2}\otimes V_i\ket\xi.
 \end{aligned}
\end{equation}
Orthogonality of the key basis and unitarity of the blocks give $J_1^\dagger J_1=I$ and $J_2^\dagger J_2=I$. Expanding the corresponding projections gives $\widehat P=J_1J_1^\dagger$ and $\widehat Q=J_2J_2^\dagger$, including the identity on the unused receiver key. Equivalently, the image of $J_1$ is the range of $\widehat P$, and the image of $J_2$ is the range of $\widehat Q$.

We next compute $J_1^\dagger J_2$. In the second line of~\eqref{eq:range-isometries}, $J_1^\dagger$ requires the labels on $K$ and $K_1$ to agree. Thus only the summand $i=j$ survives, and
\begin{equation}\label{eq:cross-action}
 J_1^\dagger J_2(\ket j_{K_1}\otimes\ket\xi)
 =\frac1M\ket j_{K_2}\otimes U_j^\dagger V_j\ket\xi.
\end{equation}
Equivalently,
\begin{equation}\label{eq:cross-isometry}
 J_1^\dagger J_2
 =\frac1M\sum_{j=1}^M\ket j_{K_2}\bra j_{K_1}\otimes U_j^\dagger V_j.
\end{equation}
The order $U_j^\dagger V_j$ is retained. Both factors can act on $S$ and need not commute, but their product is unitary. For an arbitrary vector $\sum_j\ket j_{K_1}\ket{\xi_j}_{\mathsf S}$, orthogonality on $K_2$ therefore gives
\begin{equation}\label{eq:cross-norm}
 \begin{aligned}
 \left\|J_1^\dagger J_2\sum_j\ket j_{K_1}\otimes\ket{\xi_j}\right\|^2
 &=\left\|\sum_j\frac1M\ket j_{K_2}\otimes U_j^\dagger V_j\ket{\xi_j}\right\|^2\\
 &=\frac1{M^2}\sum_j\left\|U_j^\dagger V_j\ket{\xi_j}\right\|^2
 =\frac1{M^2}\sum_j\left\|\ket{\xi_j}\right\|^2.
 \end{aligned}
\end{equation}
The first equality uses~\eqref{eq:cross-action} and linearity. In the second equality, every cross term with two distinct labels vanishes because $\langle j|k\rangle_{K_2}=0$ for $j\ne k$. The final equality uses unitarity of $U_j^\dagger V_j$. Since the squared norm of the input vector is $\sum_j\left\|\ket{\xi_j}\right\|^2$, this calculation proves $\op{J_1^\dagger J_2}=1/M$.

To pass from this norm to the projection product, use $\widehat P=J_1J_1^\dagger$ and $\widehat Q=J_2J_2^\dagger$. The isometries and their adjoints all have operator norm one. Submultiplicativity therefore gives
\begin{equation}\label{eq:projection-norm-upper}
 \begin{aligned}
 \op{\widehat P\widehat Q}
 &=\op{J_1(J_1^\dagger J_2)J_2^\dagger}\\
 &\le\op{J_1}\,\op{J_1^\dagger J_2}\,\op{J_2^\dagger}
 =\op{J_1^\dagger J_2}.
 \end{aligned}
\end{equation}
Conversely, $J_1^\dagger\widehat P\widehat QJ_2=J_1^\dagger J_2$, because $J_i^\dagger J_i=I$. Applying submultiplicativity a second time gives
\begin{equation}\label{eq:projection-norm-lower}
 \begin{aligned}
 \op{J_1^\dagger J_2}
 &=\op{J_1^\dagger(\widehat P\widehat Q)J_2}\\
 &\le\op{J_1^\dagger}\,\op{\widehat P\widehat Q}\,\op{J_2}
 =\op{\widehat P\widehat Q}.
 \end{aligned}
\end{equation}
Combining the two inequalities with~\eqref{eq:decoder-norm-invariance} proves
\begin{equation}\label{eq:disjoint-overlap}
 \op{PQ}=\op{\widehat P\widehat Q}
 =\op{J_1^\dagger J_2}=\frac1M.
\end{equation}
This proves the assertion for disjoint receiver systems, including arbitrary and possibly very large shields.

\emph{A shared receiver system.} Now let $C$ have dimension $d_C$. Work again with the decoded projections on the input space in~\eqref{eq:two-pulled-tests}. Introduce an additional tensor factor $C'\cong C$, and define $Q'$ by replacing $C$ with $C'$ in the definition of $Q$. All other tensor factors, in particular the common sender systems $KS$, are unchanged. This is a relabeling of an operator onto a larger Hilbert space; no state is copied, and no cloning operation is being asserted.

The receiver supports of $P$ and $Q'$ are $XCG_1$ and $C'YG_2$, which are disjoint. Thus the result just proved gives $\op{PQ'}=1/M$, even though $P$ and $Q'$ still share $KS$. Let $\FS_{CC'}$ be the unitary swap on $CC'$. The link to the original product is the identity
\begin{equation}\label{eq:swap-contraction}
 PQ=\Tr_{C'}\!\left[(PQ')\FS_{CC'}\right].
\end{equation}
We verify it explicitly, since the order of the operators on $KS$ matters. Set $\mathsf R\coloneqq KSXYG_1G_2$ and write all tensors in the order $\mathsf R,C,C'$. For a fixed basis of $C$ and the corresponding basis of $C'$, expand
\begin{equation}\label{eq:matrix-unit-expansion}
 \begin{aligned}
 P&=\sum_{i,j=1}^{d_C} P_{ij}\otimes\ket i\!\bra j_C\otimes I_{C'},\\
 Q'&=\sum_{k,l=1}^{d_C} Q_{kl}\otimes I_C\otimes\ket k\!\bra l_{C'},\\
 \FS_{CC'}&=\sum_{r,s=1}^{d_C}\ket r\!\bra s_C\otimes\ket s\!\bra r_{C'}.
 \end{aligned}
\end{equation}
The coefficients $P_{ij}$ and $Q_{kl}$ act on $\mathsf R$. Multiplication of $P$, $Q'$, and $ \FS_{CC'}$ gives $P_{ij}Q_{kl}$ in that order. On $C$ we obtain $\delta_{jr}\ket i\!\bra s$, while tracing the $C'$ factor gives $\delta_{ls}\delta_{kr}$. Summing these constraints yields
\begin{equation}\label{eq:swap-expanded}
 \Tr_{C'}[(PQ')\FS_{CC'}]
 =\sum_{i,j,l=1}^{d_C} P_{ij}Q_{jl}\otimes\ket i\!\bra l_C=PQ.
\end{equation}
In particular, no commutation of the coefficients is used.

For an arbitrary operator $Z$ on a system tensored with $C'$, writing the partial trace as $ \Tr_{C'}Z = \sum_{i=1}^{d_C}(I\otimes\bra i)Z(I\otimes\ket i)$ and using triangle inequality for the operator norm gives
\begin{equation}\label{eq:partial-trace-norm}
 \op{\Tr_{C'}Z}
 \le\sum_{i=1}^{d_C}\op{(I\otimes\bra i)Z(I\otimes\ket i)}
 \le d_C\op Z.
\end{equation}
Combining~\eqref{eq:swap-contraction}, this inequality, and unitarity of the swap, we conclude that
\begin{equation}\label{eq:shared-overlap}
 \op{PQ}\le d_C\op{(PQ')\FS_{CC'}}
 =d_C\op{PQ'}=\frac{d_C}{M}.
\end{equation}
Since each of $P$ and $Q$ is a projection, its norm is at most one and $\op{PQ}\le1$ as well. This proves~\eqref{eq:overlap}.
\end{proof}

The proof separates the two kinds of shared systems. The sender shield $S$ is shared even in the disjoint-receiver case, but its contribution is an ordered product of unitaries and introduces no dimension factor. Only the shared \emph{receiver} subsystem $C$ is removed by a partial trace in~\eqref{eq:swap-contraction}; this is the source of the factor $d_C$. The construction of $C'$ is only a device for proving an operator identity, not an extension of the physical code.

\section{Symmetric outputs with a common receiver system}\label{sec:symmetric}
We now use the overlap bound to control a whole family of decoder placements with identical acceptance probabilities. The goal is a bound that depends on the key size, the dimensions of the output pairs, and the dimension of one common receiver system, but not on the shields. We first prove this bound for a symmetric state; Section~\ref{sec:reduction} will construct the required state from a degradable code.

The construction used later in Section~\ref{sec:reduction} need not preserve a uniform key marginal or be a deterministic coding operation. This is why we formulate the bound for an arbitrary normalized symmetric state. A recoverable filter will produce one such state, and a separate fidelity comparison will relate its test probability to the original code.

\subsection{Decoder placements and their common Hilbert space}\label{subsec:placements}
This subsection constructs the projections whose average will bound the test probability. Each bit string selects one member of every symmetric output pair. The symmetry makes all selections statistically equivalent, while Lemma~\ref{lem:overlap} quantifies how much two selections can overlap. Figure~\ref{fig:placements} illustrates this distinction.

We use receiver-output labels $B_{j,0},B_{j,1}$ to emphasize that the argument is about alternative decoder inputs. Define
\begin{equation}\label{eq:pair-space}
 \mathbf B\coloneqq\bigotimes_{j=1}^n(B_{j,0}\otimes B_{j,1}),
 \qquad |B_{j,0}|=|B_{j,1}|=d.
\end{equation}
Let $\theta_{KSC\mathbf B}$ be an arbitrary state invariant under swapping $B_{j,0}$ and $B_{j,1}$ for each $j$ separately. Suppose that $|K|=M$ and $|C|=L$. There is no assumption on the marginal on $K$, or on correlations involving $C$. For $x\in\{0,1\}^n$, let $\bar x$ be its bitwise complement and set
\begin{equation}\label{eq:placement}
 B_x\coloneqq\bigotimes_{j=1}^nB_{j,x_j},\qquad
 s(x,y)\coloneqq\bigl|\{j:x_j=y_j\}\bigr|.
\end{equation}
Thus $\mathbf B$ is canonically identified with $B_xB_{\bar x}$ for every $x$. A fixed decoder on $CB_{0^n}$, followed by a fixed size-$M$ privacy test with sender systems $KS$, defines the actual test probability $p$. Identical copies of this decoder and test can be placed on $CB_x$. Independent exchange symmetry implies that each placement has the same probability $p$.

The labels in~\eqref{eq:pair-space} do not assign both outputs to the physical receiver. In the degradable-channel application we will identify $B_{j,0}$ with $E'_j$ and $B_{j,1}$ with $E_j$, as specified in~\eqref{eq:output-dictionary}. The constructed state's decoder uses $CE'^n$, whereas the original code's decoder uses $B^n$. Other selections are mathematical tests on the constructed state. Retaining their tensor factors in a converse does not make them accessible in the coding protocol. The privacy-test reduction avoids a separate estimate of secrecy, but the environment remains part of the isometric description that exposes the exchange symmetry.

We next define these tests on a common Hilbert space, using the convention of Section~\ref{subsec:decoded-test}. Give each placement a distinct auxiliary system $G_x$, initialized in a fixed pure state, and a unitary decoder
\begin{equation}\label{eq:placement-decoders}
 D_x\colon CB_xG_x\longrightarrow\widehat K_xT_x,
 \qquad |\widehat K_x|=M.
\end{equation}
Let $\Pi^{\gamma_x}_{KS\widehat K_xT_x}$ be the relabeled copy of the fixed privacy test. With $G_{\ne x}\coloneqq\bigotimes_{y\ne x}G_y$, define
\begin{equation}\label{eq:placement-projection}
 P_x\coloneqq
 \bigl[(I_{KS}\otimes D_x^\dagger)\Pi^{\gamma_x}
 (I_{KS}\otimes D_x)\bigr]\otimes I_{B_{\bar x}}\otimes I_{G_{\ne x}}.
\end{equation}
Canonical permutations place these tensor factors in the order
\begin{equation}\label{eq:common-hilbert}
 \mathcal H_{\mathrm{tot}}\coloneqq
 K\otimes S\otimes C\otimes\mathbf B
 \otimes\bigotimes_{x\in\{0,1\}^n}G_x.
\end{equation}
Every $P_x$ is an orthogonal projection on this single space. For the normalized state
$\widetilde\theta\coloneqq\theta\otimes\bigotimes_x\ket0\!\bra0_{G_x}$,
\eqref{eq:decoded-acceptance} and symmetry give
\begin{equation}\label{eq:average-acceptance}
 p=\Tr[P_x\widetilde\theta]=\Tr[A\widetilde\theta]\le\op A,
 \qquad A\coloneqq2^{-n}\sum_xP_x.
\end{equation}
Here $A\ge0$, so its expectation in a normalized state is at most its largest eigenvalue.

For distinct $x,y$, define their common receiver factor by
\begin{equation}\label{eq:shared-placement-factor}
 C_{x,y}\coloneqq C\otimes\bigotimes_{j:x_j=y_j}B_{j,x_j}.
\end{equation}
At a coordinate where $x_j\ne y_j$, the selections use different tensor factors. More explicitly, in Lemma~\ref{lem:overlap} take
\begin{equation}\label{eq:placement-lemma-identification}
 X=\bigotimes_{j:x_j\ne y_j}B_{j,x_j},\qquad
 Y=\bigotimes_{j:x_j\ne y_j}B_{j,y_j},\qquad
 G_1=G_x,\quad G_2=G_y,
\end{equation}
and take the shared receiver system to be $C_{x,y}$. The unselected output factors and all other auxiliary systems carry identities, which do not change the operator norm. Lemma~\ref{lem:overlap} gives
\begin{equation}\label{eq:placement-overlap}
 \op{P_xP_y}\le\min\left\{1,\frac{Ld^{s(x,y)}}M\right\}.
\end{equation}
The numerator is exactly the dimension of the shared receiver factor:
\begin{equation}\label{eq:shared-placement-dimension}
 |C_{x,y}|=|C|\prod_{j:x_j=y_j}|B_{j,x_j}|
 =L\,d^{s(x,y)}.
\end{equation}
Thus $L$ comes from the common system $C$, and each of the $s(x,y)$ agreeing coordinates contributes one factor $d$. Neither the common sender shield $S$ nor a dedicated $G_x$ contributes. The restriction $x\ne y$ matters because a placement shares its own auxiliary system with itself. The signed matrix used below avoids the diagonal altogether, so no diagonal overlap estimate is needed.

Only one decoder is used in the actual protocol. The family $(P_x)_x$ consists of alternative mathematical tests, not a prescription for executing all decoders or copying their inputs. Also, $\op A$ is a maximum over the full space~\eqref{eq:common-hilbert}, whereas $\widetilde\theta$ has specified pure auxiliary inputs. Enlarging the set of vectors in this maximum can only weaken the upper bound in~\eqref{eq:average-acceptance}. The unitary extensions away from the initialized auxiliary subspaces therefore need not represent additional operational resources.

\begin{figure}
\centering
\resizebox{\linewidth}{!}{%
\begin{tikzpicture}[x=1cm,y=1cm,font=\small,>=Latex,
 outputsystem/.style={draw=black!60,fill=white,rounded corners=2pt,minimum width=1.2cm,minimum height=.7cm},
 xs/.style={draw=figblue,line width=1.15pt,rounded corners=4pt},
 ys/.style={draw=figamber,dashed,line width=1.05pt,rounded corners=5pt}]
 \node[outputsystem,fill=figamber!8,minimum width=.85cm] at (.15,.85) {$C$};
 \node[font=\footnotesize,align=center] at (.15,-.15) {shared\\$|C|=L$};
 \foreach \j in {1,...,5}{
  \pgfmathsetmacro{\xx}{1.8+1.85*(\j-1)}
  \node[font=\footnotesize] at (\xx,2.65) {use $\j$};
  \node[outputsystem] (z\j) at (\xx,1.6) {$B_{\j,0}$};
  \node[outputsystem] (o\j) at (\xx,.1) {$B_{\j,1}$};
  \draw[<->,draw=black!40] (\xx,1.15)--(\xx,.55);
 }
 \foreach \j/\r in {1/z,2/z,3/o,4/z,5/o}{
  \node[xs,fit=(\r\j),inner sep=3pt] {};
 }
 \foreach \j/\r in {1/o,2/o,3/z,4/z,5/o}{
  \node[ys,fit=(\r\j),inner sep=6.5pt] {};
 }
 \node[text=figblue,anchor=west] at (10.5,1.8) {$x=00101$};
 \draw[xs] (10.5,1.3)--(11.4,1.3);
 \node[anchor=west,font=\footnotesize] at (11.55,1.3) {solid};
 \node[text=figamber,anchor=west] at (10.5,.45) {$y=11001$};
 \draw[ys] (10.5,-.05)--(11.4,-.05);
 \node[anchor=west,font=\footnotesize] at (11.55,-.05) {dashed};
 \node[font=\footnotesize,align=center] at (5.5,-1.0)
 {Two shared selected outputs: $s(x,y)=2$.\quad Common receiver dimension: $Ld^2$.};
\end{tikzpicture}%
}\par
\caption{Two of the $2^n$ receiver placements, illustrated for $n=5$. The solid selection is $x=00101$ and the dashed selection is $y=11001$; they agree at positions $4$ and $5$. Both also use $C$, so the common receiver factor is $C B_{4,0}B_{5,1}$ and has dimension $Ld^2$. The sender key and shield, which belong to every test, are not shown. Each placement has its own auxiliary input system. The two rows represent interchangeable output factors, not two outputs available simultaneously to the actual receiver. In Section~\ref{sec:reduction} they are identified with $E'^n$ and $E^n$.}
\label{fig:placements}
\end{figure}

\subsection{Signed averaging}\label{subsec:signed-averaging}
We now turn the pairwise bound~\eqref{eq:placement-overlap} into a small norm for the uniform average $A$. A two-test argument alone gives only a constant restriction on the common acceptance probability. Averaging the norms of all products is also insufficient for the rate threshold we need: two uniformly chosen strings typically agree on about half of their coordinates, whereas we need to use pairs with only a small fraction of common outputs.

The construction of Ref.~\cite{KBK2026} resolves this problem by working on the space of \emph{placement labels}. It builds a real matrix $W_t$ close in operator norm to the uniform matrix $U_0$, but with entries supported only on pairs satisfying $s(x,y)\le t$. The entries may have either sign; $W_t$ is an auxiliary operator, not a stochastic matrix or a physical measurement. Its absolute row sums bound the contribution of the signed entries.

The two lemmas below have distinct roles. Lemma~\ref{lem:weights} constructs $W_t$ using a low-degree approximation of a Boolean function. The Fourier transform on the Boolean cube is just the tensor-product Hadamard transform, and we give the required identities explicitly. Lemma~\ref{lem:averaging} then uses only elementary operator inequalities to combine the approximation error $\zeta$, the absolute row-sum bound $\ell$, and the overlap bound $\eta$ into
$\op A\le\zeta+\sqrt{\ell\eta}$.
In our application, $\eta=\min\{1,Ld^t/M\}$. For $t$ a sufficiently small positive fraction of $n$, the exponentially small ratio $L/M$ offsets both $d^t$ and the row-sum factor. Section~\ref{sec:converse} makes this choice precise.

\begin{lemma}[Signed Boolean weights; Ref.~\cite{KBK2026}, Lemma S3]\label{lem:weights}
There are universal constants $\Capp\ge1$ and $a>0$ such that, for every positive integer $n$ and integer $t$ satisfying
\begin{equation}\label{eq:t-range}
 4\Capp\sqrt n\le t\le n/2,
\end{equation}
there is a real symmetric matrix $W_t$ indexed by $\{0,1\}^n$ for which, with $\mathbf1$ the column vector of $2^n$ ones and $U_0\coloneqq2^{-n}\mathbf1\mathbf1^{\mathsf T}$,
\begin{equation}\label{eq:weight-properties}
 \begin{aligned}
 &W_t\mathbf1=\mathbf1,\qquad \op{W_t-U_0}\le2\exp(-at^2/n),\\
 &\max_x\sum_y|(W_t)_{xy}|\le\left(\sum_{j=0}^t\binom nj\right)^{1/2},\\
 &(W_t)_{xy}\ne0\quad\Longrightarrow\quad s(x,y)\le t.
 \end{aligned}
\end{equation}
One may take $a=1/(4\Capp^2)$ after increasing $\Capp$ if necessary.
\end{lemma}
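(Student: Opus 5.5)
The plan is to build $W_t$ as a function of the commuting, translation-invariant operators on the Boolean cube, so that its Hadamard eigenvalues are controlled by a univariate polynomial. This turns all four properties in~\eqref{eq:weight-properties} into statements about that polynomial and about the support of the matrix.

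\emph{Spectral set-up.} Let $\mathsf A\coloneqq\sum_{i=1}^nX_i$ be the adjacency matrix of the hypercube, where $X_i$ flips bit $i$. Let $\mathsf F\coloneqq\prod_iX_i$ be the complement map $x\mapsto\bar x$. Both are real symmetric and commute. Each Hadamard character $\chi_S(x)=(-1)^{S\cdot x}$ is a joint eigenvector, with $\mathsf A\chi_S=(n-2|S|)\chi_S$ and $\mathsf F\chi_S=(-1)^{|S|}\chi_S$.

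For a real polynomial $r$ of degree at most $t$, the entry $r(\mathsf A)_{xy}$ is nonzero only when $x,y$ have Hamming distance at most $t$: a product of at most $t$ bit flips changes at most $t$ coordinates. Hence $\mathsf F\,r(\mathsf A)$ is supported on pairs with $\mathrm{dist}(\bar x,y)\le t$. Since $\mathrm{dist}(\bar x,y)=s(x,y)$, this gives exactly the support condition $s(x,y)\le t$.

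\emph{Choice of the polynomial.} I will invoke the de~Wolf approximation (Appendix~\ref{app:boolean}) in the following form. For $4\Capp\sqrt n\le t\le n/2$ there is a real univariate polynomial $q$ of degree at most $t$ such that
\begin{equation*}
q(0)=1,\qquad |q(k)|\le\varepsilon_t\coloneqq2\exp(-at^2/n)\quad\text{for }k=1,\dots,n.
\end{equation*}
This is a low-degree approximation of the indicator of $\{k=0\}$, equivalently of the OR/NOR function after symmetrization. Its degree is of order $\sqrt{n\log(1/\varepsilon)}$, which is exactly where the exponent $t^2/n$ comes from.

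Define $r$ through $q\bigl((n-\lambda)/2\bigr)=r(\lambda)$; this is still a polynomial of degree at most $t$. Set $W_t\coloneqq\mathsf F\,r(\mathsf A)$. On $\chi_S$ its eigenvalue is $(-1)^{|S|}q(|S|)$. This equals $1$ for $S=\emptyset$ and has modulus at most $\varepsilon_t$ otherwise.

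\emph{Verifying the properties.}
\begin{itemize}[leftmargin=2em]
\item Symmetry: $W_t$ is real symmetric because $\mathsf F$ and $r(\mathsf A)$ are commuting real symmetric matrices.
\item Normalization: $\mathbf1=\chi_\emptyset$, so $W_t\mathbf1=\mathbf1$.
\item Approximation: $U_0$ is the orthogonal projection onto $\chi_\emptyset$, so $\op{W_t-U_0}=\max_{S\ne\emptyset}|q(|S|)|\le\varepsilon_t$.
\item Support: shown above.
\item Row sums: by Cauchy--Schwarz over the at most $\sum_{j\le t}\binom nj$ nonzero entries of row $x$, with $y$ ranging over a Hamming ball of radius $t$ around $\bar x$,
\begin{equation*}
\sum_y|(W_t)_{xy}|\le\Bigl(\sum_{j=0}^t\tbinom nj\Bigr)^{1/2}\,\bigl\|W_t^{\mathsf T}e_x\bigr\|\le\Bigl(\sum_{j=0}^t\tbinom nj\Bigr)^{1/2}\op{W_t}.
\end{equation*}
Moreover $\op{W_t}=\max\{1,\varepsilon_t\}=1$ as soon as $\varepsilon_t\le1$. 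The lower bound $t\ge4\Capp\sqrt n$ guarantees this, since it gives $\varepsilon_t\le2e^{-4}$ when $a=1/(4\Capp^2)$.
\end{itemize}

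\emph{Main obstacle.} The substantive step is the existence of $q$ with the stated quantitative error uniformly in $n$. I would obtain it from de~Wolf's theorem~\cite{deWolf2008}, which gives an $\varepsilon$-approximation of OR on $n$ bits of degree $O(\sqrt{n\log(1/\varepsilon)})$. Minsky--Papert symmetrization turns it into a univariate polynomial on $\{0,\dots,n\}$, which I would then shift and rescale so that $q(0)=1$ exactly.

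The normalization costs at most a factor $(1-\varepsilon)^{-1}\le2$. Setting $\varepsilon=\exp(-t^2/(\Capp^2n))$ and solving the degree bound for $\varepsilon$ fixes the constants. Enlarging $\Capp$ absorbs the implied constant and yields $a=1/(4\Capp^2)$. The constraint $t\le n/2$ is needed only so that the degree stays within the regime where de~Wolf's bound applies.
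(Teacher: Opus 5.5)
Your proposal is correct and is essentially the paper's proof. Your $\mathsf F\,r(\mathsf A)$ is exactly the paper's matrix $(W_t)_{xy}=\widehat q(\bar x\oplus y)$ built from the symmetric function $z\mapsto q(|z|)$, with the same Hadamard eigenvalues $(-1)^{|z|}q(z)$; your walk-counting support argument and your use of $\op{W_t}\le1$ in the row-sum estimate just restate the paper's Walsh-support and Parseval steps. The one thing to fix is the constant in $\varepsilon$: with de~Wolf's degree bound $\Capp(\sqrt n+\sqrt{n\ln(1/\varepsilon)})$ and the same $\Capp$, the choice $\varepsilon=\exp(-t^2/(\Capp^2n))$ gives degree up to $\Capp\sqrt n+t$, which exceeds $t$. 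Take $\varepsilon=\exp(-t^2/(4\Capp^2n))$ as the paper does; the degree is then at most $t/4+t/2$, and you get $a=1/(4\Capp^2)$ directly.
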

\begin{proof}
We include a proof for completeness.

\emph{Step 1: a polynomial selecting the zero string.}
For $z\in\{0,1\}^n$, let $|z|$ be its number of nonzero entries and define
\begin{equation}\label{eq:NOR}
 \operatorname{NOR}_n(z)\coloneqq
 \begin{cases}1,&z=0^n,\\0,&z\ne0^n.\end{cases}
\end{equation}
The polynomial approximation theorem of de Wolf~\cite[Theorem~1]{deWolf2008}, specialized in Appendix~\ref{app:boolean}, gives a real multilinear polynomial $p$ such that
\begin{equation}\label{eq:approx-degree-main}
 \deg p\le\Capp\bigl(\sqrt n+\sqrt{n\ln(1/\epsilon)}\bigr),
 \qquad |p(z)-\operatorname{NOR}_n(z)|\le\epsilon
\end{equation}
for every $z$, whenever $2^{-n}\le\epsilon\le1/3$. This approximation theorem is the external input; the matrix construction and its bounds are proved here.

Set $a\coloneqq1/(4\Capp^2)$ and
\begin{equation}\label{eq:boolean-epsilon}
 \epsilon\coloneqq\exp\!\left(-\frac{t^2}{4\Capp^2n}\right).
\end{equation}
The lower bound in~\eqref{eq:t-range} gives $\epsilon\le e^{-4}<1/3$. The upper bound gives $\ln(1/\epsilon)\le n/(16\Capp^2)\le n\ln2$, so $\epsilon\ge2^{-n}$. Thus~\eqref{eq:approx-degree-main} applies and gives
\begin{equation}\label{eq:boolean-degree-check}
 \deg p\le\Capp\sqrt n+t/2\le3t/4\le t.
\end{equation}
Since $p(0^n)\ge1-\epsilon>0$, normalize the polynomial by setting $q(z)\coloneqq p(z)/p(0^n)$. Then
\begin{equation}\label{eq:normalized-poly}
 q(0^n)=1,\qquad |q(z)|\le\frac\epsilon{1-\epsilon}\le2\epsilon
 \quad(z\ne0^n),\qquad \deg q\le t.
\end{equation}
In particular, $|q(z)|\le1$ for every $z$.

\emph{Step 2: low degree becomes a restriction on shared outputs.}
For bit strings $e,z$, let $e\cdot z\coloneqq\sum_j e_jz_j$ modulo two. Define the Walsh coefficients by
\begin{equation}\label{eq:walsh}
 \widehat q(e)\coloneqq2^{-n}\sum_zq(z)(-1)^{e\cdot z},
 \qquad q(z)=\sum_e\widehat q(e)(-1)^{e\cdot z}.
\end{equation}
The inverse formula follows from character orthogonality:
\begin{equation}\label{eq:character-orthogonality}
 2^{-n}\sum_z(-1)^{e\cdot z}(-1)^{f\cdot z}
 =\prod_{j=1}^n\frac{1+(-1)^{e_j+f_j}}2
 =\delta_{e,f}.
\end{equation}
Each monomial in $q$ involves at most $t$ coordinates. Replacing a coordinate by $z_j=[1-(-1)^{z_j}]/2$ expresses such a monomial as a sum of characters involving only those coordinates. Consequently,
\begin{equation}\label{eq:walsh-support}
 \widehat q(e)=0\quad\text{whenever }|e|>t.
\end{equation}
Let $\oplus$ denote addition modulo two in each coordinate, and recall that $\bar x$ is the bitwise complement of $x$. Define
\begin{equation}\label{eq:W-construction}
 (W_t)_{xy}\coloneqq\widehat q(\bar x\oplus y).
\end{equation}
The string $\bar x\oplus y$ has a one exactly where $x$ and $y$ agree. Therefore $|\bar x\oplus y|=s(x,y)$, and~\eqref{eq:walsh-support} proves the required support restriction. Also $\bar x\oplus y=\bar y\oplus x$, so $W_t$ is real and symmetric. For a fixed $x$, the substitution $e=\bar x\oplus y$ is a bijection, giving
\begin{equation}\label{eq:row-sum}
 \sum_y(W_t)_{xy}=\sum_e\widehat q(e)=q(0^n)=1.
\end{equation}

\emph{Step 3: diagonalize the matrix in the Hadamard basis.}
For each $z$, define the vector $\chi_z$ on the label space by
$\chi_z(x)\coloneqq2^{-n/2}(-1)^{z\cdot x}$.
Equation~\eqref{eq:character-orthogonality} shows that these vectors form an orthonormal basis. Substituting $y=\bar x\oplus e$ in the matrix product gives
\begin{equation}\label{eq:W-spectrum}
 \begin{aligned}
 (W_t\chi_z)(x)
 &=2^{-n/2}\sum_e\widehat q(e)(-1)^{z\cdot(\bar x\oplus e)}\\
 &=(-1)^{|z|}\chi_z(x)\sum_e\widehat q(e)(-1)^{z\cdot e}\\
 &=(-1)^{|z|}q(z)\chi_z(x).
 \end{aligned}
\end{equation}
The sign $(-1)^{|z|}$ arises from complementing every bit of $x$. Thus $W_t$ has eigenvalue one on the uniform vector $\chi_{0^n}$ and eigenvalues of magnitude at most $2\epsilon$ on all orthogonal characters. Since $U_0=\chi_{0^n}\chi_{0^n}^\dagger$,
\begin{equation}\label{eq:W-approx}
 \op{W_t-U_0}=\max_{z\ne0^n}|q(z)|
 \le2\epsilon=2\exp(-at^2/n).
\end{equation}

\emph{Step 4: bound the absolute row sums.}
By~\eqref{eq:character-orthogonality}, Parseval's identity takes the form
\begin{equation}\label{eq:parseval-main}
 \sum_e|\widehat q(e)|^2=2^{-n}\sum_z|q(z)|^2\le1.
\end{equation}
For any row, the support restriction and Cauchy--Schwarz now imply
\begin{equation}\label{eq:W-row-l1}
 \begin{aligned}
 \sum_y|(W_t)_{xy}|
 &=\sum_{|e|\le t}|\widehat q(e)|\\
 &\le\left(\sum_{j=0}^t\binom nj\right)^{1/2}
       \left(\sum_{|e|\le t}|\widehat q(e)|^2\right)^{1/2}
 \le\left(\sum_{j=0}^t\binom nj\right)^{1/2}.
 \end{aligned}
\end{equation}
Equations~\eqref{eq:row-sum}, \eqref{eq:W-approx}, and~\eqref{eq:W-row-l1}, together with the support restriction, prove all claims.
\end{proof}

The construction separates two properties that might otherwise appear contradictory. In the coordinate basis, $W_t$ connects only almost complementary strings. In the Hadamard basis, it acts almost as a projection onto the uniform vector. Signed entries make these properties compatible. Because $t<n$, $s(x,x)=n$ also gives $(W_t)_{xx}=0$, so the diagonal projection overlaps never enter.

The next lemma explains how to use this matrix without interpreting its entries as probabilities. It compares two quadratic forms on an auxiliary label space tensored with the physical Hilbert space of the tests. In this representation, a matrix entry naturally multiplies a product $P_xP_y$, exactly the quantity controlled by the overlap lemma (Lemma~\ref{lem:overlap}).

\begin{lemma}[Average of projections; cf. Ref.~\cite{KBK2026}, Lemma S1]\label{lem:averaging}
Let $(P_x)_{x\in\{0,1\}^n}$ be orthogonal projections on a common finite-dimensional Hilbert space $\mathcal H$, and set $A\coloneqq2^{-n}\sum_xP_x$. Let $W$ be a real symmetric matrix and let $\zeta,\ell,\eta\ge0$ satisfy
\begin{equation}\label{eq:general-weights}
 \op{W-U_0}\le\zeta,\qquad
 \max_x\sum_y|W_{xy}|\le\ell,\qquad
 W_{xy}\ne0\ \Longrightarrow\ \op{P_xP_y}\le\eta.
\end{equation}
Then
\begin{equation}\label{eq:averaging-bound}
 \op A\le\zeta+\sqrt{\ell\eta}.
\end{equation}
\end{lemma}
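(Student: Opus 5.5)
The plan is a quadratic-form argument on an auxiliary label space tensored with $\mathcal H$. It proves the stronger bound $\op A\le\zeta+\ell\eta$ and then deduces~\eqref{eq:averaging-bound} by splitting cases on $\ell\eta$.

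\emph{Setup.} Since $A\ge0$, choose a unit eigenvector $\psi\in\mathcal H$ with $A\psi=\lambda\psi$ and $\lambda=\op A$. Let $\{e_x\}$ be the standard basis of $\mathbb R^{2^n}$, and define
\begin{equation*}
 \Psi\coloneqq\sum_x e_x\otimes P_x\psi .
\end{equation*}
Three quantities are then explicit:
\begin{itemize}
\item $\|\Psi\|^2=\sum_x\langle\psi|P_x|\psi\rangle=2^n\lambda$, using $P_x^2=P_x$;
\item $\langle\Psi|(U_0\otimes I)|\Psi\rangle=2^{-n}\bigl\|\sum_xP_x\psi\bigr\|^2=2^n\lambda^2$;
\item $\langle\Psi|(W\otimes I)|\Psi\rangle=\sum_{x,y}W_{xy}\langle\psi|P_xP_y|\psi\rangle$.
\end{itemize}

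\emph{Comparison step.} Subtract the two quadratic forms. Because $\op{(W-U_0)\otimes I}\le\zeta$, we get
\begin{equation*}
 2^n\lambda^2\le\zeta\,2^n\lambda+\sum_{x,y}|W_{xy}|\,\bigl|\langle\psi|P_xP_y|\psi\rangle\bigr| .
\end{equation*}

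\emph{Off-diagonal terms.} Set $a_x\coloneqq\|P_x\psi\|$. For each pair with $W_{xy}\ne0$, write $\langle\psi|P_xP_y|\psi\rangle=\langle P_x\psi|P_xP_y|P_y\psi\rangle$. The hypothesis $\op{P_xP_y}\le\eta$ on the support of $W$ bounds its modulus by $\eta a_xa_y$.

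Then apply the Schur test, using $a_xa_y\le(a_x^2+a_y^2)/2$ and the symmetry of $W$:
\begin{equation*}
 \sum_{x,y}|W_{xy}|a_xa_y\le\ell\sum_xa_x^2=\ell\,2^n\lambda .
\end{equation*}
Hence $2^n\lambda^2\le(\zeta+\ell\eta)2^n\lambda$. If $\lambda>0$, dividing by $2^n\lambda$ gives $\lambda\le\zeta+\ell\eta$; if $\lambda=0$ the bound is trivial.

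\emph{Case split.} If $\ell\eta<1$, then $\ell\eta<\sqrt{\ell\eta}$, so $\lambda\le\zeta+\sqrt{\ell\eta}$. If $\ell\eta\ge1$, then $\sqrt{\ell\eta}\ge1\ge\op A$, since $A$ is an average of projections. Either way~\eqref{eq:averaging-bound} holds.

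\emph{Main obstacle.} The delicate step is the off-diagonal estimate. Only entries with $W_{xy}\ne0$ may use $\eta$, and the signed entries must be handled through absolute values, via $\ell$, rather than as probabilities. The diagonal entries require a check: they fall outside the $\eta$-hypothesis unless $W_{xx}=0$, whereas the hypothesis as stated covers every nonzero entry including the diagonal. In the application the diagonal vanishes anyway (see the remark after Lemma~\ref{lem:weights}).
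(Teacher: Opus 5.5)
Your argument is correct and shares the paper's skeleton: the same block vector $\sum_x e_x\otimes P_x\psi$, the same evaluations of its norm and of the $U_0$ quadratic form, and the same use of $\op{W-U_0}\le\zeta$. The difference is how you estimate the signed cross terms.

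The paper bounds $|\langle v|P_xP_y|v\rangle|\le\op{P_xP_y}\le\eta$ directly. The signed term is then at most $2^n\ell\eta$, independently of $\lambda$, which gives the quadratic inequality $\lambda^2-\zeta\lambda-\ell\eta\le0$ and hence the square root. You instead insert $P_x=P_x^2$ to obtain the weight $\eta\,\|P_x\psi\|\,\|P_y\psi\|$. The Schur test, which uses the symmetry of $W$ to turn the row-sum bound into a column-sum bound, then gives $2^n\ell\eta\lambda$. Both sides are proportional to $\lambda$, so you get the linear bound $\op A\le\zeta+\ell\eta$.

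This is strictly stronger whenever $0<\ell\eta<1$. In Proposition~\ref{prop:symmetric} it would replace $\bigl(\sum_{j\le t}\binom nj\bigr)^{1/4}\min\{1,Ld^t/M\}^{1/2}$ by the square of that term. That doubles this term's contribution to the exponent, although the final $\gamma$ remains limited by the other two error terms.

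The paper's version uses only the row-sum bound, not symmetry, and never divides by $\lambda$. Your version needs the trivial $\lambda=0$ case and the case split, both of which are fine. When $\ell\eta=0$, use $\le$ rather than your strict $<$.

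Your closing concern about diagonal entries is unnecessary. The hypothesis covers every nonzero entry of $W$, diagonal or not, and your estimate uses nothing beyond it.
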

\begin{proof}
Since $A\ge0$, its operator norm $\lambda\coloneqq\op A$ is its largest eigenvalue. Choose a unit vector $\ket v\in\mathcal H$ satisfying $A\ket v=\lambda\ket v$. Introduce the label space $\mathcal H_{\mathrm{lab}}\coloneqq\mathbb C^{2^n}$, with orthonormal basis $(\ket x)_x$, and the vector
\begin{equation}\label{eq:block-vector}
 \ket w\coloneqq\sum_x\ket x_{\mathrm{lab}}\otimes P_x\ket v
 \quad\in\mathcal H_{\mathrm{lab}}\otimes\mathcal H.
\end{equation}
Equivalently, $\ket w$ is the direct sum of the vectors $P_x\ket v$. The first tensor factor records a mathematical placement label. It is not an extra register available in the communication protocol. Matrices $W,U_0$ act on this first factor, whereas $I_{\mathcal H}$ is the identity on the full space of the decoded tests, including their auxiliary systems.

Orthogonality of the label basis and the identities $P_x^\dagger P_x=P_x$ give
\begin{equation}\label{eq:block-norm}
 \left\|\ket w\right\|^2=\sum_x\bra vP_x\ket v
 =2^n\bra vA\ket v=2^n\lambda.
\end{equation}
For the uniform matrix, expanding on the label factor gives
\begin{equation}\label{eq:block-uniform}
 \begin{aligned}
 \bra w(U_0\otimes I_{\mathcal H})\ket w
 &=2^{-n}\sum_{x,y}\bra vP_xP_y\ket v\\
 &=2^{-n}\left\|\sum_xP_x\ket v\right\|^2
 =2^{-n}\left\|2^n\lambda\ket v\right\|^2=2^n\lambda^2.
 \end{aligned}
\end{equation}
This is why the tensor factor structure is useful: the uniform label matrix collects all cross terms of the physical sum $\sum_xP_x\ket v$.

For $W$, the same expansion yields
$\bra w(W\otimes I_{\mathcal H})\ket w=\sum_{x,y}W_{xy}\bra vP_xP_y\ket v$.
When $W_{xy}\ne0$, the third assumption in~\eqref{eq:general-weights} and $\left\|\ket v\right\|=1$ imply $|\bra vP_xP_y\ket v|\le\op{P_xP_y}\le\eta$. The absolute row-sum bound in the same equation therefore gives
\begin{equation}\label{eq:block-signed}
 \begin{aligned}
 |\bra w(W\otimes I_{\mathcal H})\ket w|
 &\le\sum_{x,y}|W_{xy}|\,|\bra vP_xP_y\ket v|\\
 &\le\eta\sum_x\sum_y|W_{xy}|
 \le2^n\ell\eta.
 \end{aligned}
\end{equation}
The products $P_xP_y$ need not be Hermitian; the inequality
$|\bra vT\ket v|\le\left\|\ket v\right\|^2\op T$ holds for every operator $T$.

To compare the two quadratic forms, use the first assumption in~\eqref{eq:general-weights}, not just the overlap and row-sum estimates. Since tensoring with an identity preserves the operator norm,
\begin{equation}\label{eq:block-approximation}
 \begin{aligned}
 |\bra w[(U_0-W)\otimes I_{\mathcal H}]\ket w|
 &\le\op{(U_0-W)\otimes I_{\mathcal H}}\,\left\|\ket w\right\|^2\\
 &=\op{U_0-W}\,\left\|\ket w\right\|^2
 \le\zeta\,2^n\lambda.
 \end{aligned}
\end{equation}
Now write $U_0=W+(U_0-W)$ in~\eqref{eq:block-uniform}. Equations~\eqref{eq:block-signed} and~\eqref{eq:block-approximation}, which use all three assumptions in~\eqref{eq:general-weights}, imply
\begin{equation}\label{eq:quadratic-average}
 2^n\lambda^2\le2^n\ell\eta+2^n\zeta\lambda,
 \qquad \lambda^2-\zeta\lambda-\ell\eta\le0.
\end{equation}
The nonnegative $\lambda$ is at most the larger root of this quadratic. Since $\sqrt{a+b}\le\sqrt a+\sqrt b$ for $a,b\ge0$,
\begin{equation}\label{eq:quadratic-root}
 \lambda\le\frac{\zeta+\sqrt{\zeta^2+4\ell\eta}}2
 \le\zeta+\sqrt{\ell\eta}.
\end{equation}
This proves the lemma, including the case $\lambda=0$; no division by $\lambda$ was used.
\end{proof}

\subsection{The symmetric-output bound}\label{subsec:symmetric-bound}
We combine the weight construction and the averaging lemma to obtain the bound used in the rest of the proof. It holds for every state and decoder described in Section~\ref{subsec:placements}, including states with a nonuniform key marginal.

\begin{proposition}[Symmetric-output privacy bound]\label{prop:symmetric}
In the state-level setting of Section~\ref{subsec:placements}, every integer $t$ satisfying~\eqref{eq:t-range} gives
\begin{equation}\label{eq:symmetric-bound}
 p\le2\exp(-at^2/n)
 +\left(\sum_{j=0}^t\binom nj\right)^{1/4}
 \min\left\{1,\frac{Ld^t}M\right\}^{1/2}.
\end{equation}
\end{proposition}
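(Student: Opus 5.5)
The plan is to chain the three ingredients already in place: the acceptance identity~\eqref{eq:average-acceptance}, the overlap bound~\eqref{eq:placement-overlap}, and the pair of Lemmas~\ref{lem:weights} and~\ref{lem:averaging}. Fix an integer $t$ in the range~\eqref{eq:t-range}.

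First, build the projections $P_x$ of~\eqref{eq:placement-projection} on the common space~\eqref{eq:common-hilbert}. By~\eqref{eq:average-acceptance}, exchange symmetry gives $p=\Tr[A\widetilde\theta]\le\op A$ with $A=2^{-n}\sum_xP_x$. So it suffices to bound $\op A$.

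Next, take $W=W_t$ from Lemma~\ref{lem:weights}. That lemma supplies
\begin{equation*}
 \zeta=2\exp(-at^2/n),\qquad \ell=\Bigl(\sum_{j=0}^t\binom nj\Bigr)^{1/2},
\end{equation*}
together with the support property that $(W_t)_{xy}\ne0$ forces $s(x,y)\le t$. Because $t\le n/2<n$, every nonzero entry has $x\ne y$. Hence the identification~\eqref{eq:placement-lemma-identification} with Lemma~\ref{lem:overlap} is legitimate: the auxiliary systems $G_x$ and $G_y$ are distinct, and the shared receiver factor is $C_{x,y}$. The bound~\eqref{eq:placement-overlap} then gives
\begin{equation*}
 \op{P_xP_y}\le\min\{1,Ld^{s(x,y)}/M\}\le\min\{1,Ld^t/M\}\eqqcolon\eta,
\end{equation*}
using $d\ge1$ and $s(x,y)\le t$. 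Lemma~\ref{lem:averaging} now yields $\op A\le\zeta+\sqrt{\ell\eta}$. Finally, $\sqrt{\ell}=\bigl(\sum_{j\le t}\binom nj\bigr)^{1/4}$ and $\sqrt\eta=\min\{1,Ld^t/M\}^{1/2}$, which is exactly~\eqref{eq:symmetric-bound}.

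The only step needing care is the monotonicity in the overlap bound. The agreement count $s(x,y)$ can be any value up to $t$, so we need $d^{s}\le d^t$, which holds since $d\ge1$. We also need that the minimum with $1$ is preserved, which is immediate. Beyond this, we confirm that Lemma~\ref{lem:overlap} applies with the shared sender systems $KS$ and arbitrary shields, so that no shield dimension enters $\eta$. With those checks done, the argument is a direct substitution.
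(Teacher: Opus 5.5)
Your proposal is correct and follows the paper's proof essentially step for step. You bound $p\le\op A$ via~\eqref{eq:average-acceptance} and take $W_t$ from Lemma~\ref{lem:weights} with the same $\zeta$ and $\ell$. Its support (which excludes the diagonal) together with~\eqref{eq:placement-overlap} gives $\eta=\min\{1,Ld^t/M\}$, and Lemma~\ref{lem:averaging} then yields~\eqref{eq:symmetric-bound}.
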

\begin{proof}
Use the projections~\eqref{eq:placement-projection} on~\eqref{eq:common-hilbert} and the matrix $W_t$ from Lemma~\ref{lem:weights}. Its support excludes $x=y$. On that support, $s(x,y)\le t$, so~\eqref{eq:placement-overlap} gives $\eta\coloneqq\min\{1,Ld^t/M\}$. Set
$\zeta\coloneqq2\exp(-at^2/n)$ and
$\ell\coloneqq(\sum_{j=0}^t\binom nj)^{1/2}$.
All assumptions of Lemma~\ref{lem:averaging} now follow from~\eqref{eq:weight-properties}. Applying that lemma and~\eqref{eq:average-acceptance} proves~\eqref{eq:symmetric-bound}.

\end{proof}

No condition on the key marginal or on the preparation probability of the state was used. This state-uniformity will allow us to apply the bound after a single filter in Section~\ref{sec:reduction}.

\section{Compression by a single filter}\label{sec:filter}
This section constructs a replacement of controlled dimension for one part of a pure state, while allowing recovery using quantum side information. The construction is the step that connects the dimension of the common receiver system in Proposition~\ref{prop:symmetric} to a conditional entropy. It uses one linear filter and Uhlmann's theorem. Figure~\ref{fig:single-filter} illustrates the construction and the reference system that it preserves.

\subsection{Why a filter is sufficient}
We first explain the statement needed for the converse. Suppose a pure state $\psi_{ABR}$ has a system $A$ that is to be replaced by a system $C$ of controlled dimension. A receiver who has $CB$ should be able to reconstruct the original $AB$ state, including its correlations with $R$. For this purpose, it suffices that the filtered state's marginal on $R$ remain close to $\psi_R$: Uhlmann's theorem then provides the recovery on its purifying systems $CB$.

The filter is not required to be a trace-preserving encoder. Proposition~\ref{prop:symmetric} bounds a test on every normalized symmetric state, not only states prepared deterministically by a communication protocol. We may therefore apply a linear operator $Z\colon A\to C$ and normalize its nonzero output. We will prove directly that the resulting state is recoverable. No estimate of the probability of physically implementing the filter is used.

This distinction makes the reduction simpler than employing an approach based on operational state merging~\cite{HOW2007,DBWR2014}. There is no shared entangled seed, forward message, or sum over measurement outcomes. The proof uses the same general principle that a nearly unchanged reference marginal permits recovery, but only the existence of one suitable filtered state is needed. Gaussian constructions have previously been used in quantum coding~\cite{HSW2008}; below we derive the particular second-moment estimate used here in full.

\Needspace{21\baselineskip}
\subsection{The one-shot statement}
The next lemma gives a dimension bound in terms of conditional smooth max-entropy. This is the quantity that will reduce to the coherent information in the channel application. Smooth min- and max-entropies, including the subnormalized smoothing convention, are defined in Appendix~\ref{app:entropies}. We use the Schatten norm conventions of Section~\ref{sec:setup} and set $[x]_+\coloneqq\max\{x,0\}$.

\begin{lemma}[A recoverable single filter]\label{lem:single-filter}
Let $\psi_{ABR}$ be a normalized pure state, and let $0<\delta<1/2$. There are a finite-dimensional system $C$, a linear operator $Z\colon A\to C$, and a recovery channel $\mathcal R\colon CB\to AB$ such that
\begin{equation}\label{eq:filter-state}
 \ket\chi_{CBR}\coloneqq\frac{(Z\otimes I_{BR})\ket\psi_{ABR}}{\sqrt\nu},
 \qquad \nu\coloneqq\bra\psi(Z^\dagger Z\otimes I_{BR})\ket\psi>0,
\end{equation}
satisfies
\begin{equation}\label{eq:filter-recovery}
 P\!\left((\mathcal R\otimes\id_R)(\chi),\psi\right)\le\delta,
\end{equation}
and
\begin{equation}\label{eq:filter-dimension}
 \log|C|\le\left[\hmax^{\delta^2/16}(A|B)_\psi+4\log(1/\delta)+4\right]_++1.
\end{equation}
Smoothing in~\eqref{eq:filter-dimension} is over subnormalized states using purified distance, while~\eqref{eq:filter-recovery} uses sine distance between normalized states. No lower bound on the probability of an implementation of $Z$ as a postselected operation is asserted or required.
\end{lemma}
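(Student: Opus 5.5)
The plan is to take $Z$ to be a random complex Gaussian matrix, reduce recovery to closeness of reference marginals via Uhlmann's theorem, and control that closeness by a second-moment (decoupling-type) estimate expressed in the dual entropy $\hmin(A|R)_\psi=-\hmax(A|B)_\psi$. Write $\ket\psi_{ABR}$. For any $Z\colon A\to C$, the unnormalized filtered reference marginal is
\begin{equation*}
 \Tr_{CB}\!\left[(Z\otimes I)\psi(Z^\dagger\otimes I)\right]=\Tr_{A}\!\left[(Z^\dagger Z\otimes I_R)\psi_{AR}\right],
\end{equation*}
and $\nu$ is its trace. If the normalized version $\chi_R$ satisfies $P(\chi_R,\psi_R)\le\delta$, then Uhlmann's theorem gives an isometry $CB\to AB\,\widetilde B$ mapping $\ket\chi_{CBR}$ to a vector whose overlap with $\ket\psi_{ABR}\otimes\ket0$ has squared modulus $F(\chi_R,\psi_R)$. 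Tracing out the junk system yields the channel $\mathcal R$, and monotonicity gives~\eqref{eq:filter-recovery}. So the whole task reduces to finding one $Z$ with $\nu>0$ and $\chi_R$ close to $\psi_R$.

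\textbf{Choice of $Z$.} Let $|C|=2^k$, with $k$ the smallest integer exceeding the bracket in~\eqref{eq:filter-dimension}; this accounts for the $[\cdot]_+$ and the $+1$. Take $Z$ with i.i.d.\ complex Gaussian entries of variance $1/|C|$. The two moment identities are
\begin{equation*}
 \mathbb E[Z^\dagger Z]=I_A,
 \qquad
 \mathbb E[(Z^\dagger Z)^{\otimes2}]=I_{AA'}+\FS_{AA'}/|C|.
\end{equation*}
Both follow from Wick's theorem, and I will derive them explicitly. Set $X\coloneqq\Tr_A[((Z^\dagger Z-I)\otimes I)\psi_{AR}]$. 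Then $\mathbb E X=0$, and for a state $\sigma_R$ Hölder's inequality gives
\begin{equation*}
 \trn X\le\left\|\sigma_R^{-1/4}X\sigma_R^{-1/4}\right\|_2 .
\end{equation*}
The swap trick together with the fourth-moment identity gives
\begin{equation*}
 \mathbb E\left\|\sigma_R^{-1/4}X\sigma_R^{-1/4}\right\|_2^2\le|C|^{-1}\,\Tr\!\left[\bigl(\sigma_R^{-1/4}\psi_{AR}\sigma_R^{-1/4}\bigr)^2\right]\le|C|^{-1}2^{-\hmin(A|R)}.
\end{equation*}
The last step uses that the conditional collision entropy is at least the min-entropy. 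By duality, $\hmin(A|R)_\psi=-\hmax(A|B)_\psi$.

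\textbf{Smoothing.} Replace $\psi$ by a pure subnormalized $\widetilde\psi_{ABR}$ within purified distance $\delta^2/16$ that attains the smooth max-entropy of $A|B$, equivalently the smooth min-entropy of $A|R$ by smooth duality. Apply the estimate above to $\widetilde\psi$. Because $\mathbb E[Z^\dagger Z]=I$ and $\widetilde\psi$ is close to $\psi$, the first moments also show that the error from using $\widetilde\psi$ instead of $\psi$ in the random map is controlled in expectation by $O(\delta^2)$. Markov's inequality then yields one realization with $\trn X=O(\delta^2)$ for the smoothed state and a comparable error for the smoothing discrepancy; the slack $4\log(1/\delta)+4$ is chosen to make these constants work. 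In particular $|\nu-1|$ is small, so $\nu>0$. Normalizing costs at most a factor $2$, so $\trn{\chi_R-\psi_R}=O(\delta^2)$. The Fuchs--van de Graaf inequality $1-F\le\trn{\cdot}$ then converts this to $P\le\delta$.

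\textbf{Main obstacle.} I expect the hardest part to be the bookkeeping around smoothing. Subnormalized optimizers, the extra $\sigma_R$ in the weighted norm (which must be supported appropriately, so $\psi$ is restricted to the support of $\sigma$), and keeping the Markov-selected realization good simultaneously for the fluctuation term and for the smoothing term must all be handled together. The plan is to apply Markov to the sum of the two nonnegative random error terms, so that a single draw of $Z$ controls both, and then to check that the resulting constants fit the $\delta^2/16$ smoothing and the $4\log(1/\delta)+4$ slack.
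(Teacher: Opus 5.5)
Your proposal is correct and follows essentially the same route as the paper. The shared steps are: a complex Gaussian $Z$ with $\mathbb{E}[Z^\dagger Z]=I_A$; the $\sigma_R^{-1/4}$-weighted second-moment bound combined with H\"older; transfer of the smoothing error through the first-moment contraction $\mathbb{E}\|\Gamma_Z(\Delta)\|_1\le\|\Delta\|_1$ (via positive and negative parts); one good realization; a factor two for normalization; Fuchs--van de Graaf; and Uhlmann. The constants close exactly as $2u+\sqrt{2^h/|C|}+2u\le\delta^2/2$ for $u=\delta^2/16$. One small correction to your bookkeeping: $\psi$ never needs to be restricted to $\supp\sigma_R$, because the weighted estimate is applied only to the smoothed marginal, and the domination $\widetilde\psi_{AR}\le 2^{h}I_A\otimes\sigma_R$ already places it in $A\otimes\supp\sigma_R$.
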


\begin{figure}
\centering
\begin{tikzpicture}[x=1cm,y=1cm,>=Latex,font=\small,
 wire/.style={draw=black!75,line width=.7pt},
 box/.style={draw=figblue,fill=figblue!4,rounded corners=2pt,line width=.8pt},
 lab/.style={fill=white,inner sep=2pt}]
 \node[box,minimum width=1.05cm,minimum height=2.95cm] at (.6,1.5) {$\ket\psi$};
 \node[box,draw=figamber,fill=figamber!5,minimum width=1.15cm,minimum height=.72cm] (Z) at (3.3,2.5) {$Z$};
 \node[box,minimum width=1.8cm,minimum height=1.9cm,align=center] (rec) at (9.0,2.0) {recovery\\$\mathcal R$};
 \draw[wire,->] (1.125,2.5)--(Z.west) node[midway,lab] {$A$};
 \draw[wire,->] (Z.east)--(8.1,2.5) node[pos=.55,lab] {$C$};
 \draw[wire,->] (1.125,1.5)--(8.1,1.5) node[pos=.2,lab] {$B$};
 \draw[wire,->] (1.125,.5)--(13.4,.5) node[pos=.11,lab] {$R$};
 \draw[wire,->] (9.9,2.5)--(13.4,2.5) node[midway,lab] {$A$};
 \draw[wire,->] (9.9,1.5)--(13.4,1.5) node[midway,lab] {$B$};
 \draw[densely dashed,black!40] (6.0,.05)--(6.0,3.08);
 \node[font=\footnotesize,align=center] at (6.0,3.6) {normalized filtered state\\$\chi_{CBR}$};
 \draw[decorate,decoration={brace,amplitude=4pt}] (13.6,2.65)--(13.6,.35);
 \node[anchor=west,align=center] at (13.85,1.5) {$\widehat\chi$};
 \node[font=\footnotesize,align=center,text=figamber] at (3.3,-.35) {filter and normalize\\by $\sqrt\nu$};
 \node[font=\footnotesize,align=center] at (10.7,-.35)
 {$P(\chi_R,\psi_R)\le\delta$\\$\Longrightarrow\ P(\widehat\chi_{ABR},\psi_{ABR})\le\delta$};
\end{tikzpicture}\par
\caption{The single-filter construction for a pure state $\psi_{ABR}$. A linear operator $Z$ replaces $A$ by $C$, and its nonzero output is normalized to obtain $\chi_{CBR}$. The system $B$ remains available as quantum side information. Controlling the change in the reference marginal, $P(\chi_R,\psi_R)\le\delta$, gives a recovery channel on $CB$ with error at most $\delta$ for the entire state $\psi_{ABR}$. Both $A$ and $B$ are reconstructed. No operation acts on $R$, but its marginal can change upon normalization; the proof controls this change. The filter and normalization are a mathematical construction, not a deterministic coding operation.}
\label{fig:single-filter}
\end{figure}

We prove the lemma in the next two subsections. First, a Gaussian calculation bounds how much an unnormalized filter changes the reference marginal. Second, smoothing gives the dimension in~\eqref{eq:filter-dimension}, and normalization and Uhlmann's theorem give recovery.

\subsection{A Gaussian estimate for the reference marginal}\label{subsec:gaussian}
We now establish the elementary random-matrix bound underlying Lemma~\ref{lem:single-filter}. It is convenient to prove it for an arbitrary positive semidefinite operator $\tau_{AR}$ with $\Tr\tau\le1$. Suppose that, for a normalized state $\sigma_R$ and a number $\lambda>0$,
\begin{equation}\label{eq:gaussian-domination}
 \tau_{AR}\le\lambda I_A\otimes\sigma_R.
\end{equation}
For an integer $L\ge1$, set $|C|=L$ and choose the matrix entries of $Z\colon A\to C$ as
\begin{equation}\label{eq:gaussian-entries}
 Z_{ai}\coloneqq\frac{\xi_{ai}+\mathrm i\eta_{ai}}{\sqrt{2L}},
 \qquad 1\le a\le L,\quad 1\le i\le|A|,
\end{equation}
where all $\xi_{ai},\eta_{ai}$ are independent real standard Gaussian random variables. Thus $\mathbb E[Z_{ai}]=0$ and $\mathbb E[|Z_{ai}|^2]=1/L$.

Define a completely positive linear transformation, not necessarily trace preserving, by
\begin{equation}\label{eq:gaussian-transform}
 \Gamma_Z(X_{AR})\coloneqq
 \Tr_C[(Z\otimes I_R)X_{AR}(Z^\dagger\otimes I_R)],
 \qquad \tau_R^Z\coloneqq\Gamma_Z(\tau_{AR}).
\end{equation}
We will prove
\begin{equation}\label{eq:gaussian-bound}
 \mathbb E\,\trn{\tau_R^Z-\tau_R}\le\sqrt{\lambda/L}.
\end{equation}

\emph{Step 1: compute the second moments of the filter.}
Set $Y\coloneqq Z^\dagger Z$. Independence gives $\mathbb E[Y]=I_A$. More explicitly, the fourth moments are
\begin{equation}\label{eq:gaussian-fourth}
 \mathbb E[\overline{Z_{ai}}Z_{aj}\overline{Z_{bk}}Z_{b\ell}]
 =L^{-2}\bigl(\delta_{ij}\delta_{k\ell}+\delta_{ab}\delta_{i\ell}\delta_{jk}\bigr).
\end{equation}
To verify this formula, independence and invariance of each complex Gaussian under multiplication by a phase force conjugated and unconjugated entries to occur in matching pairs. Each pairing contributes $(1/L)^2$. When all four entries coincide, both pairings contribute, consistently with $\mathbb E[|Z_{ai}|^4]=2/L^2$. These observations cover all index choices. Summing~\eqref{eq:gaussian-fourth} over $a,b$ and subtracting the product of the means gives
\begin{equation}\label{eq:wishart-covariance}
 \mathbb E[Y_{ij}]=\delta_{ij},\qquad
 \mathbb E[(Y_{ji}-\delta_{ji})(Y_{\ell k}-\delta_{\ell k})]
 =L^{-1}\delta_{jk}\delta_{i\ell}.
\end{equation}
In particular, $\mathbb E[\Gamma_Z(X)]=\Tr_A X$ for every operator $X_{AR}$.

\emph{Step 2: compute a weighted squared norm.}
Write $\tau_{AR}=\sum_{i,j}\ket i\!\bra j_A\otimes\tau_{ij}$. Then
\begin{equation}\label{eq:gaussian-blocks}
 \tau_R^Z-\tau_R=\sum_{i,j}(Y_{ji}-\delta_{ji})\tau_{ij}.
\end{equation}
All inverse powers of $\sigma_R$ below are taken on its support. This is sufficient:~\eqref{eq:gaussian-domination} implies that $\tau_{AR}$ is supported on $A\otimes\supp\sigma_R$, and hence so are its reference marginals before and after filtering. Set
\begin{equation}\label{eq:omega-weighted}
 \Omega_{AR}\coloneqq(I_A\otimes\sigma_R^{-1/4})\tau_{AR}(I_A\otimes\sigma_R^{-1/4}),
 \qquad \Omega_{ij}\coloneqq\sigma_R^{-1/4}\tau_{ij}\sigma_R^{-1/4}.
\end{equation}
The difference in~\eqref{eq:gaussian-blocks} is Hermitian. Expanding its weighted squared Hilbert--Schmidt norm and applying~\eqref{eq:wishart-covariance} yields
\begin{equation}\label{eq:gaussian-exact-moment}
 \begin{aligned}
 \mathbb E\left\|\sigma_R^{-1/4}(\tau_R^Z-\tau_R)\sigma_R^{-1/4}\right\|_2^2
 &=\sum_{i,j,k,\ell}\frac{\delta_{jk}\delta_{i\ell}}{L}\Tr[\Omega_{ij}\Omega_{k\ell}]\\
 &=\frac1L\sum_{i,j}\Tr[\Omega_{ij}\Omega_{ji}]
 =\frac1L\Tr[\Omega_{AR}^2].
 \end{aligned}
\end{equation}
Equation~\eqref{eq:gaussian-domination} implies $0\le\Omega_{AR}\le\lambda I_A\otimes\sigma_R^{1/2}$. The trace of the product of two positive semidefinite operators is nonnegative, so
\begin{equation}\label{eq:gaussian-weight-bound}
 \begin{aligned}
 \Tr[\Omega_{AR}^2]
 &\le\lambda\Tr[\Omega_{AR}(I_A\otimes\sigma_R^{1/2})]\\
 &=\lambda\Tr[\tau_{AR}]\le\lambda.
 \end{aligned}
\end{equation}
The first inequality uses $\Tr[\Omega(\lambda I_A\otimes\sigma_R^{1/2}-\Omega)]\ge0$; it does not square an operator inequality.

\emph{Step 3: pass to trace norm.}
For an operator $X$ supported on $\supp\sigma_R$, Schatten H\"older gives
\begin{equation}\label{eq:weighted-holder}
 \begin{aligned}
 \trn X
 &=\left\|\sigma_R^{1/4}(\sigma_R^{-1/4}X\sigma_R^{-1/4})\sigma_R^{1/4}\right\|_1\\
 &\le\left\|\sigma_R^{1/4}\right\|_4^2\,
       \left\|\sigma_R^{-1/4}X\sigma_R^{-1/4}\right\|_2
 =\left\|\sigma_R^{-1/4}X\sigma_R^{-1/4}\right\|_2.
 \end{aligned}
\end{equation}
Here $\left\|\sigma_R^{1/4}\right\|_4^2=(\Tr\sigma_R)^{1/2}=1$. Apply~\eqref{eq:weighted-holder} to $X=\tau_R^Z-\tau_R$, and use Cauchy--Schwarz for the expectation, followed by~\eqref{eq:gaussian-exact-moment} and~\eqref{eq:gaussian-weight-bound}. This proves~\eqref{eq:gaussian-bound}. Neither the dimension of $R$ nor the smallest nonzero eigenvalue of $\sigma_R$ appears in this estimate.

\subsection{Smoothing, normalization, and recovery}\label{subsec:filter-recovery}
We now apply the Gaussian estimate to a nearby operator satisfying the smooth-min-entropy bound, but apply the final filter to the original pure state. This distinction will preserve its exact symmetry in the channel application.

\begin{proof}[Proof of Lemma~\ref{lem:single-filter}]
Set $u\coloneqq\delta^2/16$, $h\coloneqq\hmax^u(A|B)_\psi$, and $\rho_{AR}\coloneqq\psi_{AR}$. Smooth entropy duality~\cite{TCR2010} gives $\hmin^u(A|R)_\psi=-h$. By the definitions in Appendix~\ref{app:entropies}, there are $\widetilde\rho_{AR}\in\substates(AR)$ and $\sigma_R\in\states(R)$ satisfying
\begin{equation}\label{eq:filter-smoothing}
 \widetilde\rho_{AR}\le2^hI_A\otimes\sigma_R,
 \qquad P(\widetilde\rho_{AR},\rho_{AR})\le u,
 \qquad \trn{\widetilde\rho_{AR}-\rho_{AR}}\le2u.
\end{equation}
The last inequality follows from the generalized trace-distance bound for purified distance~\cite{TCR2010}: since the center is normalized, $\frac12\trn{\widetilde\rho-\rho}+\frac12(1-\Tr\widetilde\rho)\le u$. Dropping the nonnegative trace-deficit term gives the claimed trace-norm bound. Also, $F(\widetilde\rho,\rho)\ge1-u^2$ and $F(\widetilde\rho,\rho)\le\Tr\widetilde\rho$, so $\Tr\widetilde\rho\ge1-u^2>0$. The compact smoothing ball therefore excludes zero, and the relevant finite-dimensional entropy optimizations attain their extrema. Set
\begin{equation}\label{eq:filter-rank-choice}
 L\coloneqq\left\lceil\max\{1,16\,2^h/\delta^4\}\right\rceil.
\end{equation}
Choose the Gaussian filter of Section~\ref{subsec:gaussian} with this $L$.

To transfer the estimate from $\widetilde\rho$ to $\rho$, let $\Delta$ be an arbitrary Hermitian operator on $AR$ and write $\Delta=\Delta_+-\Delta_-$ for its positive and negative parts. Positivity of $\Gamma_Z$ and the triangle inequality give
\begin{equation}\label{eq:filter-positive-parts}
 \left\|\Gamma_Z(\Delta)\right\|_1
 \le\Tr[\Gamma_Z(\Delta_+)]+\Tr[\Gamma_Z(\Delta_-)]
 =\Tr[\Gamma_Z(|\Delta|)].
\end{equation}
Taking the expectation and using $\mathbb E[Z^\dagger Z]=I_A$ yields
\begin{equation}\label{eq:filter-expectation-contraction}
 \mathbb E\left\|\Gamma_Z(\Delta)\right\|_1\le\left\|\Delta\right\|_1.
\end{equation}
This is an estimate after averaging; an individual Gaussian filter need not contract trace norm.

With $\rho_R^Z\coloneqq\Gamma_Z(\rho_{AR})$ and $\widetilde\rho_R^Z\coloneqq\Gamma_Z(\widetilde\rho_{AR})$, the triangle inequality now gives
\begin{equation}\label{eq:filter-total-error}
 \begin{aligned}
 \mathbb E\left\|\rho_R^Z-\rho_R\right\|_1
 &\le\mathbb E\left\|\rho_R^Z-\widetilde\rho_R^Z\right\|_1
     +\mathbb E\left\|\widetilde\rho_R^Z-\widetilde\rho_R\right\|_1
     +\left\|\widetilde\rho_R-\rho_R\right\|_1\\
 &\le2u+\sqrt{2^h/L}+2u\le\delta^2/2.
 \end{aligned}
\end{equation}
Here the middle term uses~\eqref{eq:gaussian-bound}; the last term uses trace-norm contraction under the partial trace for Hermitian operators. The choices of $u,L$ give $4u=\delta^2/4$ and $\sqrt{2^h/L}\le\delta^2/4$.

There is consequently a fixed realization of $Z$ for which $e\coloneqq\left\|\rho_R^Z-\rho_R\right\|_1\le\delta^2/2$. For this realization set $\nu\coloneqq\Tr\rho_R^Z$. Since $|\nu-1|\le e<1$, the number $\nu$ is positive and~\eqref{eq:filter-state} defines a normalized pure state with marginal $\chi_R=\rho_R^Z/\nu$. Its normalization error is explicit:
\begin{equation}\label{eq:filter-normalization}
 \begin{aligned}
 \left\|\chi_R-\rho_R\right\|_1
 &\le\left\|\rho_R^Z/\nu-\rho_R^Z\right\|_1+\left\|\rho_R^Z-\rho_R\right\|_1\\
 &=|1-\nu|+e\le2e\le\delta^2.
 \end{aligned}
\end{equation}
In particular, no inverse success-probability factor remains in this estimate. For normalized states, the Fuchs--van de Graaf inequalities~\cite{FvdG1999} imply
\begin{equation}\label{eq:filter-sine}
 P(\chi_R,\rho_R)^2=1-F(\chi_R,\rho_R)
 \le\left\|\chi_R-\rho_R\right\|_1\le\delta^2.
\end{equation}
Indeed, if $r=\sqrt{F(\chi_R,\rho_R)}$, then $1-r\le\frac12\left\|\chi_R-\rho_R\right\|_1$ and $1-r^2\le2(1-r)$.

The states $\chi_{CBR}$ and $\psi_{ABR}$ are purifications of the two reference marginals in~\eqref{eq:filter-sine}. Uhlmann's theorem~\cite{Uhlmann1976}, with a sufficiently large auxiliary output $T_{\mathrm{aux}}$, gives an isometry $V_{\mathrm{rec}}\colon CB\to ABT_{\mathrm{aux}}$ such that
\begin{equation}\label{eq:filter-uhlmann}
 F\!\left((V_{\mathrm{rec}}\otimes I_R)\chi(V_{\mathrm{rec}}^\dagger\otimes I_R),
          \psi_{ABR}\otimes\ket0\!\bra0_{T_{\mathrm{aux}}}\right)
 =F(\chi_R,\rho_R).
\end{equation}
Canonical permutations of tensor factors are understood. Define $\mathcal R(X)\coloneqq\Tr_{T_{\mathrm{aux}}}[V_{\mathrm{rec}}XV_{\mathrm{rec}}^\dagger]$. Fidelity is nondecreasing under a partial trace, so~\eqref{eq:filter-sine} and~\eqref{eq:filter-uhlmann} prove~\eqref{eq:filter-recovery}.

Finally, if $b\coloneqq h+4\log(1/\delta)+4$, then~\eqref{eq:filter-rank-choice} gives $L=\lceil\max\{1,2^b\}\rceil\le2\max\{1,2^b\}$. Thus $\log L\le[b]_++1$, proving~\eqref{eq:filter-dimension}.
\end{proof}

The filter $Z$ in the lemma is one fixed operator, not an average over Gaussian choices. Multiplication by $1/\max\{1,\op Z\}$ makes it a contraction and hence a possible measurement outcome, without changing the normalized state~\eqref{eq:filter-state}. Its implementation probability after this rescaling is not estimated. Recovery, not that probability, is what compares its output with the original state.

\begin{remark}[No enlargement is needed]
The bound~\eqref{eq:filter-dimension} need not be strictly smaller than $\log|A|$ for every state and error. If the integer chosen in~\eqref{eq:filter-rank-choice} is at least $|A|$, take $C=A$, $Z=I_A$, and the identity recovery instead. This choice gives zero recovery error and satisfies the same dimension bound. Thus the filter can always be chosen with $|C|\le|A|$, although its useful assertion is the conditional-entropy bound, not a strict dimension reduction in every instance.
\end{remark}

\section{From a degradable code to symmetric outputs}\label{sec:reduction}
We now combine the single-filter lemma with the symmetric dilation of a degradable channel. The dilation identifies an additional receiver system $F$ alongside two interchangeable outputs $E',E$. We replace $F^n$ by a system $C$ of controlled dimension, preserve all exchange symmetries exactly, and use recovery to compare the original privacy test with a test on the filtered state. No communication protocol is added to the original unassisted code.

\subsection{The symmetric factorization}
We first expose the symmetry needed by Proposition~\ref{prop:symmetric}. Figure~\ref{fig:factorization} displays two factorizations of the same isometry. The following statement is the symmetric-dilation construction of Morgan and Winter~\cite{MW2014}, including their harmless maximally mixed-qubit extension. Appendix~\ref{app:dilation} proves it and verifies that the extension preserves the joint error criterion. Figure~\ref{fig:symmetrization} in that appendix explains how an added qubit converts an intermediate exchange symmetry into an exactly symmetric range.

\begin{lemma}[Symmetric factorization]\label{lem:factorization}
Let $\cN\colon A\to B$ be finite dimensional and degradable. After appending an independent maximally mixed qubit to the receiver output, if necessary, there are isometries
\begin{equation}\label{eq:factorization-maps}
 U\colon A\to BE,\qquad V\colon B\to FE',\qquad
 J\colon A\to FG,\qquad W\colon G\to E'E,
\end{equation}
with $E'\cong E$, such that
\begin{equation}\label{eq:factorization}
 (V\otimes I_E)U=(I_F\otimes W)J,\qquad \FS_{E'E}W=W.
\end{equation}
The isometry $U$ dilates the channel, $V$ dilates a degrading channel, and $W$ can be chosen as the inclusion of $G=\Sym^2(E)$ into $E'E$. The appended qubit leaves $\qN$ unchanged. Every original private code has a corresponding code with exactly the same optimized joint fidelity and the same actual-marginal joint fidelity.
\end{lemma}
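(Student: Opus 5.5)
The plan follows Morgan and Winter's symmetric-dilation idea. Fix a Stinespring isometry $U\colon A\to BE$ and a degrading channel $\cD$ with $\cN^c=\cD\circ\cN$, and let $V\colon B\to FE'$ be a Stinespring isometry of $\cD$, with $E'\cong E$.

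\emph{Exchange symmetry on the range.} Degradability gives $\Tr_{F}[(V\otimes I_E)U\rho U^\dagger(V^\dagger\otimes I_E)]$ on $E'$ equal to $\cN^c(\rho)$, which is the $E$ marginal. Hence the two marginals on $F E'$-discarded versions agree only on one side; exact symmetry of the joint state on $E'E$ is not automatic. We therefore pass to a symmetrized dilation. Consider the isometry $\widetilde U\coloneqq(V\otimes I_E)U\colon A\to FE'E$. The channels $\rho\mapsto\Tr_{E'E\setminus E}$ and $\rho\mapsto\Tr_{F E}$ agree on their outputs. By uniqueness of Stinespring dilations, there is a partial isometry on the $F E$ purifying system relating the two. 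Following Morgan and Winter, we append a qubit $Q$ in the maximally mixed state, realized by a Bell pair with the receiver holding $Q$ and its purifying partner $Q'$ absorbed into the output. Enlarge $E$ to $E\otimes Q'$ and $E'$ correspondingly, and define the new dilation as a coherent superposition $\frac1{\sqrt2}(\ket0\widetilde U+\ket1\FS\widetilde U)$-type construction. This makes the range of the combined isometry lie in the $+1$ eigenspace of the swap $\FS_{E'E}$.

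\emph{Factorization through $\Sym^2(E)$.} Let $\Pi_{\Sym}$ be the projection onto $\Sym^2(E)$. Once $(I_F\otimes\FS_{E'E})\widetilde U=\widetilde U$, we have $\widetilde U=(I_F\otimes\Pi_{\Sym})\widetilde U$. Define $J\coloneqq(I_F\otimes W^\dagger)\widetilde U\colon A\to FG$ with $G=\Sym^2(E)$ and $W$ the inclusion. Then $J$ is an isometry, $(I_F\otimes W)J=\widetilde U$, and $\FS W=W$.

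\emph{Preservation of $\qN$ and of codes.} The appended qubit is independent and maximally mixed on the receiver side, and its purification sits with the environment. Both $H(B)$ and $H(E)$ therefore increase by exactly one bit, so $\qN$ is unchanged. For codes, the new receiver can discard $Q$, so every original code is a code for the new channel. Moreover, the environment state becomes $\omega_{E^n}\otimes\pi^{\otimes n}$ with a fixed $\pi$. Taking the ideal environment state to be $\sigma\otimes\pi^{\otimes n}$ and using multiplicativity of fidelity under tensor products gives the same optimized fidelity: one inequality comes from monotonicity under the partial trace, the other from this product choice. The actual-marginal fidelity is preserved by the same product structure.

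\emph{Expected obstacle.} The main obstacle is the symmetrization step: getting an \emph{exact} swap invariance while keeping the new channel degradable and differing from $\cN$ only by an independent maximally mixed qubit. I would first try to follow Morgan and Winter's explicit qubit construction and check the swap identity directly on basis vectors, rather than invoking an abstract uniqueness argument.
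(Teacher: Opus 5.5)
\textbf{Gap: the symmetrization step.} Your $\Sym^2(E)$ factorization is correct and agrees with the paper's appendix. So is your check that the appended qubit preserves $\qN$ and both joint fidelities. The gap is the symmetrization step, which carries the whole lemma and which you describe only as a ``$\frac1{\sqrt2}(\ket0\widetilde U+\ket1\FS\widetilde U)$-type construction.''

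Whether this works depends on where the flag lives. Suppose it is a qubit on which the swap does not act, as in the paper's first step, where it is a receiver qubit $H$ inside $F$. Then $\FS_{E'E}$ interchanges the two branches, and you obtain only $(X_{\mathrm{flag}}\otimes\FS_{E'E})T_1=T_1$, not $\FS_{E'E}T_1=T_1$. Tensoring on a fixed Bell pair does not repair this: $\ket{\psi^+}$ is swap invariant, so the enlarged swap still acts as $X_{\mathrm{flag}}$ on the rest.

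The paper removes this residual involution with a second step that your plan lacks. It splits $T_1=T_++T_-$ into the $\pm1$ eigenspaces of the flag involution, which are symmetric and antisymmetric under $\FS_{EE'}$. It then applies the receiver-only controlled phase $P_+\otimes I+P_-\otimes Z$ to the appended qubit. The antisymmetric sector is thereby paired with $\ket{\psi^-}$, and every term has total sign $+1$ under $EE_0\leftrightarrow E'E'_0$.

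\textbf{A one-step alternative, and what it still needs.} Your sketch could instead be completed in one step, but only if the appended pair itself serves as the flag, and this needs exactly the verification you omit.
\begin{itemize}
\item By degradability, $\FS_{EE'}\widetilde U$ (with your $\widetilde U=(V\otimes I_E)U$) has $E$-marginal $\cN^c$. Stinespring uniqueness then gives an isometry $V_1\colon B\to FE'$, after enlarging $F$ if needed, with $(V_1\otimes I_E)U=\FS_{EE'}\widetilde U$. This is the use of uniqueness your first paragraph gestures at but never applies.
\item Apply the receiver-only isometry $V\otimes\ket0_{E'_0}\!\bra0_{B_0}+V_1\otimes\ket1_{E'_0}\!\bra1_{B_0}$ after $U\otimes\ket{\psi^+}_{B_0E_0}$. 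This produces $\frac1{\sqrt2}(\widetilde U\otimes\ket{01}_{E'_0E_0}+\FS_{EE'}\widetilde U\otimes\ket{10}_{E'_0E_0})$, which is invariant under $EE_0\leftrightarrow E'E'_0$.
\item You must still check that its $E'E'_0$ marginal is $\cN^c(\rho)\otimes\pi$, so that this isometry dilates a degrading channel for $\cN\otimes\pi$.
\end{itemize}
Without either the paper's sector argument or such an explicit construction together with the degrading check, exact swap invariance with preserved degradability is asserted rather than proved. You correctly named this as the main obstacle.
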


\begin{figure}
\centering
\begin{tikzpicture}[x=1cm,y=1cm,>=Latex,font=\small,
 wire/.style={draw=black!75,line width=.7pt},
 box/.style={draw=figblue,fill=figblue!4,rounded corners=2pt,line width=.8pt},
 lab/.style={fill=white,inner sep=1.7pt}]
 \node[anchor=west,font=\small\bfseries] at (0,5.35)
 {(a) Channel dilation followed by a degrading isometry};
 \node[box,minimum width=1.15cm,minimum height=1.25cm] (U) at (2.1,3.9) {$U$};
 \node[box,minimum width=1.15cm,minimum height=1.25cm] (V) at (5.1,4.2) {$V$};
 \draw[wire,->] (.15,3.9)--(U.west) node[midway,lab] {$A$};
 \draw[wire,->] (2.675,4.2)--(4.525,4.2) node[midway,lab] {$B$};
 \draw[wire,->] (5.675,4.5)--(10.05,4.5) node[pos=.48,lab] {$F$};
 \draw[wire,->] (5.675,3.9)--(10.05,3.9) node[pos=.48,lab] {$E'$};
 \draw[wire,->] (2.675,3.6)--(3.2,3.6)--(3.2,2.7)--(10.05,2.7)
 node[pos=.68,lab] {$E$};
 \draw[decorate,decoration={brace,amplitude=4pt}] (10.2,4.58)--(10.2,3.82);
 \node[anchor=west] at (10.55,4.2) {receiver};
 \node[anchor=west] at (10.55,2.7) {environment};
 \draw[black!25] (0,2.02)--(13.0,2.02);
 \node[anchor=west,font=\small\bfseries] at (0,1.48)
 {(b) Same isometry, with an explicitly symmetric output pair};
 \node[box,minimum width=1.15cm,minimum height=1.25cm] (J) at (2.1,-.05) {$J$};
 \node[box,draw=figteal,fill=figteal!4,minimum width=1.15cm,minimum height=1.25cm]
 (W) at (5.1,-.9) {$W$};
 \draw[wire,->] (.15,-.05)--(J.west) node[midway,lab] {$A$};
 \draw[wire,->] (2.675,.25)--(10.05,.25) node[pos=.56,lab] {$F$};
 \draw[wire,->] (2.675,-.35)--(3.4,-.35)--(3.4,-.9)--(4.525,-.9)
 node[pos=.81,lab] {$G$};
 \draw[wire,->] (5.675,-.6)--(10.05,-.6) node[pos=.48,lab] {$E'$};
 \draw[wire,->] (5.675,-1.2)--(10.05,-1.2) node[pos=.48,lab] {$E$};
 \draw[decorate,decoration={brace,amplitude=4pt}] (10.2,.33)--(10.2,-.68);
 \node[anchor=west] at (10.55,-.18) {receiver};
 \node[anchor=west] at (10.55,-1.2) {environment};
 \draw[<->,figteal] (8.65,-.72)--(8.65,-1.08);
 \node[text=figteal,font=\footnotesize] at (6.9,-1.78)
 {interchangeable outputs: $\mathbb F_{E'E}W=W$};
 \node at (6.5,-2.48) {$(V\otimes I_E)U=(I_F\otimes W)J$};
\end{tikzpicture}\par
\caption{The two factorizations in Lemma~\ref{lem:factorization}. Panel (a) first dilates the original channel and then the degrading channel, producing $F,E',E$. Panel (b) represents the same isometry as $J$ followed by the symmetric inclusion $W$. The output state is identical for every input, including inputs entangled with a reference. The system $F$ is unchanged by exchange of $E'$ and $E$ and will occur in every alternative receiver selection until it is replaced by $C$. No ownership of $F$ is transferred to the sender. The lower factorization is used to exhibit exact exchange symmetry, not to introduce an auxiliary communication protocol.}
\label{fig:factorization}
\end{figure}

We henceforth use $B,E$ for the possibly enlarged systems in Lemma~\ref{lem:factorization}, and set
\begin{equation}\label{eq:channel-dimensions}
 a_A\coloneqq|A|,\qquad d\coloneqq|E|,\qquad d_F\coloneqq|F|.
\end{equation}
These dimensions depend only on the channel factorization. Define the channel to the degraded receiver systems by
\begin{equation}\label{eq:T-channel}
 \cT(\rho)\coloneqq V\cN(\rho)V^\dagger.
\end{equation}
Isometric invariance of entropy and degradability give
\begin{equation}\label{eq:conditional-coherent}
 \begin{aligned}
 H(F|E')_{\cT(\rho)}
 &=H(FE')_{\cT(\rho)}-H(E')_{\cT(\rho)}\\
 &=H(\cN(\rho))-H(\cN^c(\rho))\le\qN.
 \end{aligned}
\end{equation}
This conditional entropy is the asymptotic value of the logarithmic dimension bound in Lemma~\ref{lem:single-filter}.

For the coherent encoder $\ket\phi_{RA^n}$ of Lemma~\ref{lem:lift}, set $R\coloneqq KS$. Define
\begin{equation}\label{eq:symmetric-output-state}
 \ket\psi_{RF^nE'^nE^n}\coloneqq
 (I_R\otimes[(V\otimes I_E)U]^{\otimes n})\ket\phi
 =(I_{RF^n}\otimes W^{\otimes n})(I_R\otimes J^{\otimes n})\ket\phi.
\end{equation}
This is the original coherent channel output after $V^{\otimes n}$, before decoding. For every $j$, exchange of $E'_j$ and $E_j$ leaves this ket unchanged, since $\FS_{E'E}W=W$.

\subsection{Filtering while preserving the exact symmetry}\label{subsec:filter-symmetry}
We next replace the common system $F^n$ without breaking the symmetry. Applying Proposition~\ref{prop:symmetric} directly with $C=F^n$ would introduce the factor $d_F^n$ and generally give a threshold involving $\log d_F$, rather than $\qN$. The single-filter lemma replaces this dimension by a quantity governed by $\hmax^u(F^n|E'^n)_\psi$. Figure~\ref{fig:filter-reduction} shows the filter and recovery for the coherent code.

Apply Lemma~\ref{lem:single-filter} with its systems $(A,B,R)$ identified as $(F^n,E'^n,KSE^n)$. The reference thus includes the sender key, the sender shield, and the original environment. We obtain $Z\colon F^n\to C$ and a normalized pure state
\begin{equation}\label{eq:filtered-code-state}
 \ket\chi_{KSC E'^nE^n}\coloneqq
 \frac{(I_{KS}\otimes Z\otimes I_{E'^nE^n})\ket\psi}{\sqrt\nu},
 \qquad \nu\coloneqq\left\|(I_{KS}\otimes Z\otimes I_{E'^nE^n})\ket\psi\right\|^2>0.
\end{equation}
The dimension $L\coloneqq|C|$ obeys~\eqref{eq:filter-dimension}, and a recovery channel $\mathcal R_{\mathrm{rec}}\colon CE'^n\to F^nE'^n$ satisfies
\begin{equation}\label{eq:code-filter-recovery}
 P\!\left((\id_{KS E^n}\otimes\mathcal R_{\mathrm{rec}})(\chi),\psi\right)\le\delta,
\end{equation}
with tensor factors ordered as in~\eqref{eq:symmetric-output-state}.

Let $\mathsf F_j$ denote the swap on $E'_jE_j$, extended by identities. It commutes with the filter on $F^n$, so
\begin{equation}\label{eq:filtered-symmetry}
 \mathsf F_j\ket\chi
 =\frac{(I_{KS}\otimes Z\otimes I_{E'^nE^n})\mathsf F_j\ket\psi}{\sqrt\nu}
 =\ket\chi.
\end{equation}
This symmetry is exact. The smoothing optimizer used in the proof of Lemma~\ref{lem:single-filter} need not be symmetric: it is used only to prove existence of a good filter. The actual filter in~\eqref{eq:filtered-code-state} acts on the original state $\psi$.

In the notation of Section~\ref{sec:symmetric}, identify
\begin{equation}\label{eq:output-dictionary}
 B_{j,0}\equiv E'_j,\qquad B_{j,1}\equiv E_j,\qquad B_{0^n}\equiv E'^n.
\end{equation}
The decoder on the constructed state will use $CE'^n$. Other bit strings select alternative tensor factors for mathematical tests. They do not give the physical receiver access to $E^n$. The filtered state need not have a uniform key marginal, and need not be the output of a deterministic private code. Neither property is assumed in Proposition~\ref{prop:symmetric}.

\begin{figure}
\centering
\begin{tikzpicture}[x=1cm,y=1cm,>=Latex,font=\small,
 wire/.style={draw=black!75,line width=.7pt},
 box/.style={draw=figblue,fill=figblue!4,rounded corners=2pt,line width=.8pt},
 lab/.style={fill=white,inner sep=1.6pt}]
 \node[box,minimum width=.95cm,minimum height=5.1cm] at (.55,2.45) {$\ket\psi$};
 \node[box,draw=figamber,fill=figamber!5,minimum width=.95cm,minimum height=.72cm] (Z) at (3.0,2.8) {$Z$};
 \node[box,minimum width=1.3cm,minimum height=2.1cm,align=center] (rec) at (7.0,2.2) {$\mathcal R_{\mathrm{rec}}$};
 \node[box,minimum width=1.5cm,minimum height=2.1cm,align=center] (inv) at (9.9,2.2) {$(\mathcal V^{\leftarrow})^{\otimes n}$};
 \node[box,minimum width=1.3cm,minimum height=2.1cm,align=center] (dec) at (12.6,2.2) {$U_{\mathrm{dec}}$};
 \draw[wire,->] (1.025,4.8)--(14.7,4.8) node[pos=.08,lab] {$K$};
 \draw[wire,->] (1.025,4.0)--(14.7,4.0) node[pos=.08,lab] {$S$};
 \draw[wire,->] (1.025,2.8)--(Z.west) node[midway,lab] {$F^n$};
 \draw[wire,->] (Z.east)--(6.35,2.8) node[pos=.43,lab] {$C$};
 \draw[wire,->] (1.025,1.55)--(6.35,1.55) node[pos=.2,lab] {$E'^n$};
 \draw[wire,->] (1.025,0)--(14.7,0) node[pos=.08,lab] {$E^n$};
 \draw[wire,->] (7.65,2.8)--(9.15,2.8) node[midway,above=2pt,font=\footnotesize] {$F^n$};
 \draw[wire,->] (7.65,1.55)--(9.15,1.55) node[midway,below=2pt,font=\footnotesize] {$E'^n$};
 \draw[wire,->] (10.65,2.2)--(11.95,2.2) node[midway,above=2pt,font=\footnotesize] {$B^n$};
 \draw[wire,->] (13.25,2.8)--(14.7,2.8) node[midway,above=2pt] {$\widehat K$};
 \draw[wire,->] (13.25,1.55)--(14.7,1.55) node[midway,below=2pt] {$T$};
 \draw[<->,figteal,line width=.8pt] (4.15,.2)--(4.15,1.35);
 \node[text=figteal,font=\footnotesize,align=center] at (3.75,-.65) {exact exchange symmetry\\before recovery};
 \draw[densely dashed,black!40] (5.3,-.13)--(5.3,5.08);
 \node[fill=white,font=\footnotesize,align=center] at (5.3,5.55) {normalize here:\\$\chi_{KSC E'^nE^n}$};
 \draw[decorate,decoration={brace,amplitude=4pt}] (6.35,3.4)--(13.25,3.4);
 \node[font=\footnotesize,fill=white] at (9.8,3.67) {composite decoder on $CE'^n$};
 \node[font=\footnotesize,align=center] at (10.25,-.65)
 {$\log|C|\le n\qN+b_{\cN}(n,\delta)$\\$p_\chi\ge f(\mathscr C_n)-\delta$};
\end{tikzpicture}\par
\caption{The filter reduction for a degradable code. Starting from $\psi_{KSF^nE'^nE^n}$, apply one fixed filter $Z$ only to $F^n$ and normalize to obtain $\chi$. Before recovery, the state retains every exchange symmetry $E'_j\leftrightarrow E_j$. The composite decoder acts on $CE'^n$: it first recovers $F^nE'^n$, then reverses the degrading isometry on its range, and finally applies the original coherent decoder. The systems $K,S,E^n$ are references for recovery. The final state on the original systems is within sine distance~$\delta$ of the original coherent output. The filter is not a shared entangled resource or an additional allowed encoder; it constructs a state to which the state-level test bound applies.}
\label{fig:filter-reduction}
\end{figure}

\subsection{A uniform dimension estimate for correlated inputs}
The original encoder may correlate all channel uses. We therefore need a bound on $\hmax^u(F^n|E'^n)_\psi$ uniform over its input, including when $u$ is exponentially small. The next proposition combines the postselection estimate of Morgan and Winter~\cite{MW2014} with a dimension-uniform finite-block asymptotic equipartition estimate~\cite{Tomamichel2012}.

\begin{proposition}[Uniform filter dimension]\label{prop:overhead}
Fix the factorization in Lemma~\ref{lem:factorization}. For $n\ge1$ and $0<\delta<1/2$, define
\begin{equation}\label{eq:overhead-parameters}
 g_n\coloneqq(n+1)^{a_A^2},\qquad c_F\coloneqq8\log(2d_F+1),\qquad
 u\coloneqq\frac{\delta^2}{16},\qquad v\coloneqq\frac{u^2}{32g_n},
\end{equation}
and set
\begin{equation}\label{eq:b-definition}
 b_{\cN}(n,\delta)\coloneqq c_F\sqrt{n\log(2/v^2)}+3\log g_n+8\log(1/\delta)+19.
\end{equation}
For every coherent encoder, the filter and recovery in~\eqref{eq:filtered-code-state}--\eqref{eq:code-filter-recovery} can be chosen with
\begin{equation}\label{eq:L-overhead}
 \log L\le n\qN+b_{\cN}(n,\delta).
\end{equation}
The estimate is uniform over all correlated input states on $A^n$.
\end{proposition}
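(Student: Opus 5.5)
By Lemma~\ref{lem:single-filter}, applied with $(A,B,R)=(F^n,E'^n,KSE^n)$, it suffices to show
\[
 \hmax^{u}(F^n|E'^n)_\psi\le n\qN+c_F\sqrt{n\log(2/v^2)}+3\log g_n+4\log(1/\delta)+14,
\]
uniformly over the coherent encoder. Indeed, inserting this into~\eqref{eq:filter-dimension} and adding $4\log(1/\delta)+4$ and the final $+1$ gives~\eqref{eq:L-overhead}. The $[\cdot]_+$ is harmless because the right-hand side is positive. The marginal of $\psi$ on $F^nE'^n$ is $\cT^{\otimes n}(\rho_{A^n})$, where $\rho_{A^n}$ is the reduced input of $\phi$. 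So the task is a smooth max-entropy bound for $\cT^{\otimes n}$ that holds for every, possibly correlated, input state.

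\emph{Step 1: postselection.} First, twirl over input permutations. The channel $\cT^{\otimes n}$ commutes with permutations that act simultaneously on $A^n$ and $F^nE'^n$. Conditional max-entropy is invariant under these permutations. It is also quasi-concave, up to a $\log$ of the number of terms, under mixing with a classical flag. Using these facts, I would reduce to a permutation-invariant input. Then I would invoke the postselection (de Finetti) operator inequality~\cite{CKR2009,MW2014}: every permutation-invariant $\rho_{A^n}$ satisfies $\rho_{A^n}\le g_n\int\sigma^{\otimes n}\,d\mu(\sigma)$, where $g_n=(n+1)^{a_A^2}$ and $\mu$ is a measure on states of $A$, purified appropriately. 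Applying $\cT^{\otimes n}$ gives $\omega_{F^nE'^n}\le g_n\,\bar\omega$, with $\bar\omega$ a mixture of i.i.d.\ outputs. I would then transfer smooth max-entropy through this domination. Pass to the dual min-entropy with the purifying reference, use the operator inequality with smoothing parameter $u$ for $\omega$, and discretize the de Finetti integral to a finite net. Together these cost $\log g_n$ in the entropy plus an additive term, and they require the smoothing for the i.i.d.\ components to be as small as $v=u^2/(32g_n)$. This accounts for the $3\log g_n$ and the smoothing $v$ in $b_\cN$.

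\emph{Step 2: i.i.d.\ AEP.} For each product input $\sigma^{\otimes n}$, the dimension-uniform AEP of Tomamichel~\cite{Tomamichel2012,TCR2009} gives
\[
 \hmax^{v}(F^n|E'^n)_{\cT(\sigma)^{\otimes n}}\le nH(F|E')_{\cT(\sigma)}+4\log(2d_F+1)\sqrt{n\log(2/v^2)} .
\]
The coefficient depends only on $d_F$, through $\log\upsilon\le\log(2d_F+1)$. By~\eqref{eq:conditional-coherent}, $H(F|E')_{\cT(\sigma)}\le\qN$ for every $\sigma$. This estimate is uniform in $\sigma$, so it survives the mixture over $\mu$, with a further factor of two that is absorbed into $c_F=8\log(2d_F+1)$.

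\emph{Main obstacle.} The delicate step is the transfer of a smooth max-entropy bound through a mixture $g_n\int\sigma^{\otimes n}d\mu$ with exponentially small smoothing. Max-entropy of a mixture is not simply the maximum over its components. To handle it, I would work on the min-entropy side via duality and smooth each component separately. I would then combine the smoothed components using the operator inequality, keeping track of how purified distance degrades when the combined state is normalized; this degradation is where $v\propto u^2/g_n$ comes from. All constants would be collected into the $+19$.
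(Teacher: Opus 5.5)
Your reduction to the intermediate bound $\hmax^{u}(F^n|E'^n)_\psi\le n\qN+c_F\sqrt{n\log(2/v^2)}+3\log g_n+4\log(1/\delta)+14$ matches the paper exactly, as do the final bookkeeping and the overall architecture (postselection, then a $d_F$-uniform AEP, then $H(F|E')\le\qN$). The gap is in Step~1, which carries the real content of the proposition and which you leave as the ``main obstacle.'' Moreover, two of the tools you name would not give the stated constants.

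\emph{Permutation twirl.} This step must cost no additive term. The paper uses $\hmax^{u}(F^n|E'^n)_\theta\le\hmax^{u/\sqrt2}(F^n|E'^n)_{\overline\theta}$ (Morgan--Winter, Lemma~10). A penalty of $\log$ of the number of terms, as you suggest, would be $\log n!=\Theta(n\log n)$, and nothing in $b_{\cN}$ absorbs that.

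\emph{Finite mixture.} Discretizing $\int\sigma^{\otimes n}\,d\mu(\sigma)$ by a net introduces an error that must be pushed into the smoothing. The net precision must therefore scale with $v$, and the net cardinality then gives $\log J$ of order $a_A^2\log(1/v)$. That is, the coefficients of $\log g_n$ and $\log(1/\delta)$ become proportional to $a_A^2$, which overshoots $3\log g_n+8\log(1/\delta)$. The paper instead uses an exact decomposition. The normalized projector onto $\Sym^n(A\otimes A')$ lies in a real space of dimension $D_n^2$, so Carath\'eodory gives $\Theta=\sum_{i\le J}p_i\cT(\sigma_i)^{\otimes n}$ with $J\le D_n^2\le g_n^2$. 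The mixture lemma then costs only $\log J\le 2\log g_n$.

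\emph{Transfer through $\overline\theta\le g_n\Theta$.} You give no mechanism here. The one you gesture at (dualizing to the purifying reference) does not apply directly, because the domination lives on $F^nE'^n$, not on $F^n$ and the reference. The missing idea is to compare max- and min-entropy on the \emph{same} conditioning system with nearly maximal smoothing, $\hmax^{u/\sqrt2}\le\hmin^{1-u^2/8}$. In that ball one needs only $F(\zeta,\overline\theta)\ge\alpha=u^2/8$. Operator monotonicity of the square root gives $F(\zeta,\Theta)\ge F(\zeta,\overline\theta)/g_n$, so the ball embeds into the ball of radius $1-\beta$ about $\Theta$, with $\beta=u^2/(16g_n)$. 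Converting back with Morgan--Winter's Lemma~8 at $s=\beta/2=v$ costs $\log(32g_n/u^2)$. This is the source of $v$, of the remaining $\log g_n$, and of $2\log(1/u)+5$. A dual-side argument can also be completed, by writing purifications as $\ket{\overline\theta}=(I\otimes K)\ket{\Theta}$ with $\op{K}^2\le g_n$ (Douglas factorization), but you would have to supply it.

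Step~2 is also used outside its range of validity. Tomamichel's finite-block AEP holds only for $n\ge\frac85 g(v)$, where $g(v)=-\log(1-\sqrt{1-v^2})\approx\log(2/v^2)$. The proposition is claimed for every $n\ge1$ and every $\delta\in(0,1/2)$, and $v=u^2/(32g_n)$ is tiny. So the condition fails for all small $n$ (already at $n=1$), and it remains a real constraint in the application, where $v$ is exponentially small in $n$. The paper treats $n<\frac85\log(2/v^2)$ separately using the trivial bounds $\hmax^v(F^n|E'^n)\le n\log d_F$ and $H(F|E')\ge-\log d_F$. These give $2n\log d_F\le 2\sqrt{8/5}\,\log d_F\sqrt{n\log(2/v^2)}\le c_F\sqrt{n\log(2/v^2)}$. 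Relatedly, the mixture over $\sigma$ needs no multiplicative factor on the second-order term. It enters only through the additive $\log J$ and a $\max_\sigma$, so your ``factor of two for the mixture'' does not correspond to any step of a correct argument.
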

\begin{proof}
Appendix~\ref{app:entropies} proves the following estimates. For an arbitrary input $\rho_{A^n}$ and $0<u<1$,
\begin{equation}\label{eq:postselection-main}
 \begin{aligned}
 \hmax^u(F^n|E'^n)_{\cT^{\otimes n}(\rho)}
 \le{}&\max_{\sigma\in\states(A)}
 \hmax^{u^2/(32g_n)}(F^n|E'^n)_{\cT(\sigma)^{\otimes n}}\\
 &+3\log g_n+6+2\log(1/u).
 \end{aligned}
\end{equation}
For every state $\theta_{FE'}$, every $n\ge1$, and every $0<v<1$,
\begin{equation}\label{eq:aep-main}
 \hmax^v(F^n|E'^n)_{\theta^{\otimes n}}
 \le nH(F|E')_\theta+c_F\sqrt{n\log(2/v^2)}.
\end{equation}
The constant depends only on $d_F$, not on the spectrum of $\theta$. Permutation averaging in~\eqref{eq:postselection-main} is an entropy-estimation device; it does not change the original code or the state to which the filter is applied.

Combining~\eqref{eq:conditional-coherent}, \eqref{eq:postselection-main}, and~\eqref{eq:aep-main} yields
\begin{equation}\label{eq:hmax-combined}
 \hmax^u(F^n|E'^n)_\psi
 \le n\qN+c_F\sqrt{n\log(2/v^2)}+3\log g_n+6+2\log(1/u).
\end{equation}
In Lemma~\ref{lem:single-filter}, use $u=\delta^2/16$. Then
\begin{equation}\label{eq:smoothing-algebra}
 2\log(1/u)=4\log(1/\delta)+8.
\end{equation}
Adding the filter term $4\log(1/\delta)+4$ and the rounding term $1$ gives the constant $6+8+4+1=19$ in~\eqref{eq:b-definition}. Since $\qN\ge0$ and the remaining upper-bound terms are positive, the positive part in~\eqref{eq:filter-dimension} is bounded by this same expression. This proves~\eqref{eq:L-overhead}.
\end{proof}

The filter and recovery may depend on the particular code and on $\delta$. Uniformity means that the same dimension bound holds for every code; it does not assert one filter that works for all inputs. This is sufficient for a converse, which bounds each code separately.

The dependence on exponentially small error is explicit:
\begin{equation}\label{eq:logv}
 \log(2/v^2)=8\log(1/\delta)+2\log g_n+27.
\end{equation}
Consequently, for each fixed $c>0$,
\begin{equation}\label{eq:overhead-exponent}
 \limsup_{n\to\infty}\frac{b_{\cN}(n,2^{-cn})}{n}\le c_F\sqrt{8c}+8c.
\end{equation}
For fixed positive $c$, the correction is generally linear in $n$, not $o(n)$. Its linear coefficient can be made arbitrarily small by choosing $c$ small. This suffices for the exponential strong converse at every fixed positive rate gap.

\subsection{Returning to the original privacy test}
We now compose recovery with the original decoder and compare test probabilities. This comparison is the reason that a filtered state is sufficient: it relates the original code directly to a normalized symmetric state, without interpreting the filter as a successful run of a communication protocol.

Let $\mathcal V^{\leftarrow}\colon FE'\to B$ be an inverse of the degrading isometry on its range, extended to a channel by
\begin{equation}\label{eq:inverse-channel}
 \mathcal V^{\leftarrow}(X)\coloneqq V^\dagger XV+\Tr[(I_{FE'}-VV^\dagger)X]\,\tau_B,
\end{equation}
where $\tau_B$ is an arbitrary fixed state. This channel is completely positive and trace preserving, and $\mathcal V^{\leftarrow}(V\rho V^\dagger)=\rho$. Apply the composite receiver channel
\begin{equation}\label{eq:composite-filter-decoder}
 \mathcal D_{\chi}\coloneqq\mathcal D_{\mathrm{coh}}\circ
 (\mathcal V^{\leftarrow})^{\otimes n}\circ\mathcal R_{\mathrm{rec}},
\end{equation}
where $\mathcal D_{\mathrm{coh}}(X)\coloneqq U_{\mathrm{dec}}XU_{\mathrm{dec}}^\dagger$ uses the original coherent decoder~\eqref{eq:coherent-decoder}. Denote the resulting state on the original key and shield systems by $\widehat\chi_{KS\widehat K T}$. Equation~\eqref{eq:code-filter-recovery} and data processing imply
\begin{equation}\label{eq:original-output-recovery}
 P(\widehat\chi_{KS\widehat K T},\omega_{KS\widehat K T})\le\delta.
\end{equation}
Retaining local dilations of the composite channel simply enlarges the receiver shield. Extend the original twisting blocks by identities on these additional systems, so the privacy-test probability remains unchanged.

Let $p_\chi$ be the acceptance probability of this test on the recovered filtered state. Lemma~\ref{lem:lift} and the test-probability estimate~\eqref{eq:test-probability} give
\begin{equation}\label{eq:aux-acceptance}
 p_\chi=\Tr[\Pi^\gamma\widehat\chi]\ge\Tr[\Pi^\gamma\omega]-\delta
 \ge f(\mathscr C_n)-\delta.
\end{equation}
Before the composite decoder, $\chi$ has exactly the symmetry in~\eqref{eq:filtered-symmetry}. Proposition~\ref{prop:symmetric} therefore applies with common receiver system $C$, using~\eqref{eq:output-dictionary}. The filter has already been fixed, so every test placement acts on this one state $\chi$. The alternative tests are relabelings of the same composite decoder on $CB_x$; exact exchange symmetry makes their acceptance probabilities equal. No filter is chosen anew for an alternative placement.

The logical implication is
\begin{equation}\label{eq:filter-logic}
 f(\mathscr C_n)\le p_\chi+\delta
 \le\bigl[\text{bound for every normalized symmetric state}\bigr]+\delta.
\end{equation}
The first inequality is a consequence of recovery, and the second is state-uniform. Neither multiplies nor divides by a probability of realizing $Z$. This distinguishes the argument from an unjustified replacement of an unconditional code fidelity by a postselected success probability.

\section{The finite-block converse and exponential decay}\label{sec:converse}
We can now state a finite-block inequality from which the main theorem follows by explicit parameter choices. Keeping this inequality separate also identifies the three contributions: recovery error, approximation of the uniform averaging matrix, and the common-receiver overlap.

\begin{theorem}[Finite-block private converse]\label{thm:finite}
Let $\cN\colon A\to B$ be finite dimensional and degradable. Fix a symmetric factorization as in Lemma~\ref{lem:factorization}, let $d=|E|$, and define $b_{\cN}$ by~\eqref{eq:overhead-parameters}--\eqref{eq:b-definition}. For every unassisted $(n,M)$ code, every $0<\delta<1/2$, and every integer $t$ satisfying~\eqref{eq:t-range},
\begin{equation}\label{eq:finite-block}
 f(\mathscr C_n)\le\delta+2e^{-at^2/n}
 +\left(\sum_{j=0}^t\binom nj\right)^{1/4}
 \min\left\{1,\frac{2^{n\qN+b_{\cN}(n,\delta)}d^t}{M}\right\}^{1/2}.
\end{equation}
All constants in this inequality depend only on the fixed channel factorization and the universal approximation constants.
\end{theorem}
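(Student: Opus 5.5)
The plan is to chain together the reductions already assembled in Sections~\ref{sec:privacy}--\ref{sec:reduction}, in the order suggested by~\eqref{eq:filter-logic}. Fix an unassisted $(n,M)$ code, $\delta\in(0,1/2)$, and $t$ in the range~\eqref{eq:t-range}. First, pass to the possibly enlarged channel of Lemma~\ref{lem:factorization}. That lemma guarantees that the optimized joint fidelity $f(\mathscr C_n)$ is unchanged, and that $\qN$ is unchanged as well. Then apply Lemma~\ref{lem:lift} to obtain the coherent encoder $\ket\phi_{KSA^n}$, the coherent decoder $U_{\mathrm{dec}}$, and a size-$M$ privacy test $\Pi^\gamma$. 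These satisfy $\Tr[\Pi^\gamma\omega]\ge f(\mathscr C_n)$.

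Second, form the symmetric output state $\psi$ of~\eqref{eq:symmetric-output-state}. Then invoke Proposition~\ref{prop:overhead} to obtain a filter $Z\colon F^n\to C$, the normalized state $\chi$ of~\eqref{eq:filtered-code-state}, and the recovery $\mathcal R_{\mathrm{rec}}$. The filter dimension obeys $L=|C|\le 2^{n\qN+b_{\cN}(n,\delta)}$. By~\eqref{eq:filtered-symmetry}, $\chi$ is exactly invariant under each swap $E'_j\leftrightarrow E_j$ separately. Using the composite decoder $\mathcal D_\chi$ of~\eqref{eq:composite-filter-decoder}, the argument of~\eqref{eq:original-output-recovery}--\eqref{eq:aux-acceptance} gives $f(\mathscr C_n)\le p_\chi+\delta$. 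Here $p_\chi$ is the acceptance probability of ``decode on $CE'^n$ via $\mathcal D_\chi$, then apply $\Pi^\gamma$,'' after the receiver shield has been enlarged by a local dilation of $\mathcal D_\chi$ and the twisting blocks have been extended by identities.

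Third, apply Proposition~\ref{prop:symmetric} to $\theta\coloneqq\chi_{KSCE'^nE^n}$ under the dictionary~\eqref{eq:output-dictionary}. The output pairs have dimension $d=|E|=|E'|$, the common system is $C$ with $|C|=L$, and the decoder on $CB_{0^n}$ is a unitary realization of the dilated $\mathcal D_\chi$, as in Section~\ref{subsec:decoded-test}. This yields
\[
p_\chi\le2e^{-at^2/n}+\Bigl(\sum_{j=0}^t\tbinom nj\Bigr)^{1/4}\min\{1,Ld^t/M\}^{1/2}.
\]
Since $\min\{1,\cdot\}$ is monotone in $L$, the dimension bound of Proposition~\ref{prop:overhead} can be substituted directly. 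Adding $\delta$ then gives~\eqref{eq:finite-block}. The constants depend only on $a_A$, $d_F$, $d$, and the universal constants $a$ and $\Capp$, which is the uniformity claimed in the theorem.

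The main point to verify carefully is the third step, namely the compatibility of the test used in the recovery comparison with the hypotheses of Proposition~\ref{prop:symmetric}. Three things must be checked. First, $\mathcal D_\chi$ is a channel, not an isometry, so it must be realized as a unitary $D\colon CE'^nG\to\widehat K T'$ with an enlarged shield $T'$. The extended privacy test must remain a genuine size-$M$ privacy test whose acceptance on $\chi\otimes\ket0\!\bra0_G$ equals $p_\chi$; I would check this via~\eqref{eq:decoded-acceptance}. Second, the same $\chi$, not a re-filtered state, must be used for every placement $x$. Exact symmetry then gives equal probabilities, as in~\eqref{eq:average-acceptance}. Third, Proposition~\ref{prop:symmetric} requires no assumption on the key marginal or on the normalization constant $\nu$. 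The remaining bookkeeping, namely monotonicity of the bound in $L$ and the requirement $d\ge1$, is routine.
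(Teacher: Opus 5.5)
Your proposal is correct and follows the paper's own route. You lift the code to a privacy test (Lemma~\ref{lem:lift}), pass through the symmetric factorization and the single filter of Proposition~\ref{prop:overhead} to obtain the exactly symmetric state $\chi$ with $f(\mathscr C_n)\le p_\chi+\delta$, then apply Proposition~\ref{prop:symmetric} with common system $C$ and use monotonicity in $L$. The three compatibility checks you single out are exactly the points the paper settles between~\eqref{eq:original-output-recovery} and~\eqref{eq:filter-logic}.
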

\begin{proof}
Apply Lemma~\ref{lem:lift} to the original code. Use Lemma~\ref{lem:factorization} and Proposition~\ref{prop:overhead} to construct the normalized filtered state~\eqref{eq:filtered-code-state}. Its common receiver system has dimension $L\le2^{n\qN+b_{\cN}(n,\delta)}$. As explained after~\eqref{eq:aux-acceptance}, Proposition~\ref{prop:symmetric} bounds its test acceptance by~\eqref{eq:symmetric-bound}. Combining that bound with~\eqref{eq:aux-acceptance}, and using monotonicity of the right-hand side of~\eqref{eq:symmetric-bound} in $L$, gives~\eqref{eq:finite-block}.
\end{proof}

\begin{proof}[Proof of Theorem~\ref{thm:main}]
Fix $\Delta>0$. Choose $c>0$ sufficiently small that
\begin{equation}\label{eq:choose-c}
 c_F\sqrt{8c}+8c<\Delta/4.
\end{equation}
Set $\delta_n\coloneqq2^{-cn}$. By~\eqref{eq:overhead-exponent}, there is a channel- and gap-dependent threshold after which
\begin{equation}\label{eq:small-overhead}
 b_{\cN}(n,\delta_n)\le n\Delta/4.
\end{equation}
Next, choose $0<\tau<1/2$ such that
\begin{equation}\label{eq:choose-tau}
 \tau\log d+\tfrac12\hbin(\tau)<\Delta/4,
 \qquad
 \hbin(\tau)\coloneqq-\tau\log\tau-(1-\tau)\log(1-\tau).
\end{equation}
Such a choice exists because both terms on the left tend to zero as $\tau\downarrow0$. Set $t_n\coloneqq\lfloor\tau n\rfloor$. For sufficiently large $n$, this integer obeys~\eqref{eq:t-range}, and $t_n\ge\tau n/2$.

We use the binomial estimate
\begin{equation}\label{eq:binomial}
 \sum_{j=0}^{\lfloor\tau n\rfloor}\binom nj\le2^{n\hbin(\tau)}.
\end{equation}
For completeness, in the binomial expansion of $1=(\tau+1-\tau)^n$, each term with $j\le\tau n$ has $\tau^j(1-\tau)^{n-j}\ge2^{-n\hbin(\tau)}$, because $\tau/(1-\tau)<1$. Summing those terms proves~\eqref{eq:binomial}.

Assume $\log M\ge n(\qN+\Delta)$. Dropping the minimum in~\eqref{eq:finite-block} in favor of its second argument only increases the bound. Using~\eqref{eq:small-overhead}--\eqref{eq:binomial}, the final term in~\eqref{eq:finite-block} is at most
\begin{equation}\label{eq:third-term}
 2^{n\hbin(\tau)/4}
 2^{[-n\Delta+b_{\cN}(n,\delta_n)+t_n\log d]/2}
 \le2^{n[-3\Delta/8+\tau\log d/2+\hbin(\tau)/4]}
 \le2^{-n\Delta/4}.
\end{equation}
Consequently,
\begin{equation}\label{eq:three-exponentials}
 f(\mathscr C_n)\le2^{-cn}+2\exp(-a\tau^2n/4)+2^{-n\Delta/4}
\end{equation}
for all sufficiently large $n$, uniformly over codes. Choose
\begin{equation}\label{eq:gamma-choice}
 0<\gamma<\min\left\{c,\frac{a\tau^2}{4\ln2},\frac\Delta4\right\}.
\end{equation}
Increasing the threshold $n_0$ absorbs the constant prefactor in~\eqref{eq:three-exponentials}, proving~\eqref{eq:main}. Equation~\eqref{eq:error-comparison} gives the same conclusion for maximal joint infidelity. The actual marginal is a feasible choice in~\eqref{eq:ideal}, so $f_{\mathrm{marg}}\le f$ gives the actual-marginal conclusion with the same exponent. Finally, the known capacity formula $P(\cN)=\qN$ identifies this threshold with the unassisted private capacity.
\end{proof}

\subsection{Antidegradable channels}
No filtering is needed for antidegradable channels. Their outputs have a symmetric extension, and the state-level form of Proposition~\ref{prop:symmetric} is sufficient even if this extension is mixed.

\begin{corollary}[Antidegradable channels]\label{cor:anti}
For every finite-dimensional antidegradable channel $\cN\colon A\to B$ and every $R>0$, there are $\gamma>0$ and $n_0$ such that every unassisted $(n,M)$ code with $n\ge n_0$ and $\log M\ge nR$ satisfies $f(\mathscr C_n)\le2^{-\gamma n}$. Thus the unassisted private capacity has an exponential strong converse at zero under the joint criteria~\eqref{eq:error}, \eqref{eq:max-error}, and~\eqref{eq:marginal-definition-main}.
\end{corollary}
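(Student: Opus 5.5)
The plan is to apply Proposition~\ref{prop:symmetric} directly, with a trivial common system ($L=1$), to a symmetric extension built from the antidegradable structure. Since $\cN=\cA\circ\cN^c$, let $V_{\cA}\colon E\to B'G'$ be a Stinespring isometry of $\cA$, with $B'\cong B$.

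\textbf{Step 1: construct the symmetric state.} Start from the coherent encoder $\ket\phi_{KSA^n}$ of Lemma~\ref{lem:lift}. Apply $U^{\otimes n}$ and then $V_{\cA}^{\otimes n}$ to $E^n$. This gives a pure state on $KS\,B^n\,B'^n\,G'^n$.

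The pair state on $B_jB'_j$ need not be exchange-symmetric. To fix this, I would symmetrize coordinatewise. For each $j$, append a classical flag $Z_j$, uniformly distributed, which records whether $B_j$ and $B'_j$ have been swapped. Both $Z_j$ and $G'_j$ are then traced out. Since the marginals on $KS B_j$ and $KS B'_j$ agree, with the other outputs handled consistently, the swapped and unswapped versions are both valid states. Averaging over $z\in\{0,1\}^n$ gives a mixed state $\theta_{KS\mathbf B}$ invariant under each swap $B_{j,0}\leftrightarrow B_{j,1}$ separately, with $d=|B|$ and $C$ trivial.

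\textbf{Step 2: relate the acceptance probability to $f(\mathscr C_n)$.} Take the placement $x=0^n$ and use the original coherent decoder. The relevant marginal $\theta_{KS B_{0^n}}$ equals the average over $z$ of the marginal on $KS$ together with the selected outputs. Each selected output is either $B_j$ or $B'_j$. The key point is that $\cA$ produces, from $E_j$, an output whose joint state with $KS$ and with the other coordinates reproduces the true channel output.

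Here lies the main obstacle. Antidegradability only guarantees that the state on $KS B'^n$, obtained by applying $\cA^{\otimes n}$ to $E^n$, equals the state on $KS B^n$. Mixed selections, with some coordinates taken from $B$ and some from $B'$, are not directly controlled.

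I would sidestep this by using the Morgan--Winter style extension: apply the symmetric dilation of Appendix~\ref{app:dilation} to the antidegradable case. There, $\cN^c$ is degradable, so one first writes the isometry as $J$ followed by a symmetric inclusion $W\colon G\to E'E$, with $E'$ playing the role of the receiver. Concretely, I would invoke Lemma~\ref{lem:factorization} applied to $\cN^c$ (which is degradable). After the qubit extension, this identifies the receiver output $B$ (up to isometry) with $E'$ together with no extra $F$, since $F$ is absorbed into the environment side. Then the state $\psi_{KS E'^n E^n}$ is exactly symmetric under each $E'_j\leftrightarrow E_j$, with the $F^n$ systems belonging to Eve's side and hence traced out. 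This gives a mixed symmetric state with trivial $C$, which is why the corollary remarks that mixed extensions suffice. If this identification fails, the fallback is the coordinatewise averaging argument, checking the hybrid marginals via the product structure of $\cA^{\otimes n}$ acting on the purification.

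\textbf{Step 3: bound and conclude.} Decoding $E'^n\cong B^n$, up to the isometry $V^{\leftarrow}$ of~\eqref{eq:inverse-channel}, gives exactly the original test probability, so $p\ge f(\mathscr C_n)$ by Lemma~\ref{lem:lift}, with no $\delta$ loss. Proposition~\ref{prop:symmetric} with $L=1$ then gives
\[
 f(\mathscr C_n)\le2e^{-at^2/n}+\Bigl(\sum_{j\le t}\binom nj\Bigr)^{1/4}\bigl(d^t2^{-nR}\bigr)^{1/2}.
\]
Choose $t=\lfloor\tau n\rfloor$ with $\tau\log d+\tfrac12\hbin(\tau)<R/2$. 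By the binomial estimate~\eqref{eq:binomial}, the second term is at most $2^{-nR/4}$, and the first is $\exp(-\Omega(n))$. This yields $\gamma$ and $n_0$ as in the proof of Theorem~\ref{thm:main}. The criteria $\varepsilon_{\max}$ and $\varepsilon_{\mathrm{marg}}$ then follow from~\eqref{eq:error-comparison} and from $f_{\mathrm{marg}}\le f$.
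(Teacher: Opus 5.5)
Your proof is correct, and Step~1 together with your fallback is exactly the paper's argument. However, the ``main obstacle'' in Step~2 is not a real obstacle, so the detour is unnecessary. The identity $\cN=\cA\circ\cN^c$ is an identity of channels, so the single-use extension $\rho\mapsto(\id_B\otimes\cA)(U\rho U^\dagger)$ has both one-output marginal channels equal to $\cN$. The $n$-use map is a tensor product of single-use maps, so tracing out the unselected member of each pair factorizes across uses. Every hybrid selection therefore sees $\bigotimes_j\cN=\cN^{\otimes n}$ acting on $A^n$. It thus has the original output state on $KS$ and the selected systems, however correlated or entangled the input $\phi_{KSA^n}$ is. The paper uses exactly this. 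It averages the extension with its output-swapped version (equivalent to your independent uniform flags) and notes that the $n$-fold tensor power is invariant under each swap separately, with marginal $\cN^{\otimes n}$ on $B_0^n$. It then applies Proposition~\ref{prop:symmetric} with $L=1$ and $d=|B|$.

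Your primary route, applying Lemma~\ref{lem:factorization} to the degradable channel $\cN^c$, also works. The exchange-symmetric pair is Bob's qubit-enlarged output and Eve's degraded copy of it, and $F^n$ lies on Eve's side and can be traced out. Bob's joint marginal with $KS$ is unchanged because his appended qubits are independent of everything else. This route buys nothing here, though, and costs Stinespring uniqueness, the qubit extension, and $d=2|B|$. The appeal to $\mathcal V^{\leftarrow}$ in Step~3 is also misplaced. In this route $\mathcal V^{\leftarrow}$ would invert the degrading isometry acting on Eve's output, and it requires $F$, which you traced out. Instead, apply the original coherent decoder directly to Bob's copy, or, by exact symmetry, to the other copy. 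This gives acceptance $\Tr[\Pi^\gamma\omega]\ge f(\mathscr C_n)$ with no loss, after which your parameter choice and binomial estimate match the paper's.
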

\begin{proof}
Let $\cA\colon E\to B_1$ be an antidegrading channel. Apply $\cA$ to the environment of a Stinespring dilation of $\cN$, retaining the original output $B_0$. Both marginal channels of this extension equal $\cN$, including on inputs entangled with a reference. Average the extension with its output-swapped version to make it swap invariant. Its $n$-fold tensor power is invariant under independent swaps of each pair.

Use the coherent encoder and privacy test from Lemma~\ref{lem:lift}. The marginal on the sender's systems and $B_0^n$ is unchanged by replacing the original channel with this symmetric extension. Proposition~\ref{prop:symmetric} therefore applies with $d=|B|$ and $L=1$, and gives
\begin{equation}\label{eq:anti-finite}
 f(\mathscr C_n)\le2\exp(-at^2/n)
 +\left(\sum_{j=0}^t\binom nj\right)^{1/4}
 \min\left\{1,\frac{d^t}M\right\}^{1/2}.
\end{equation}
Choose $t=\lfloor\tau n\rfloor$, with $\tau>0$ small enough that $\tau\log d+\hbin(\tau)/2<R/2$. The argument used in~\eqref{eq:third-term} then makes the second term exponentially small; the first term is also exponentially small. This proves the assertion and the zero-capacity converse directly.
\end{proof}

The criteria in Corollary~\ref{cor:anti} are joint infidelities. Prior results with separate decoding and secrecy parameters, including Ref.~\cite{BKH2025}, concern a different finite-error formulation. Appendix~\ref{app:marginal} gives the precise implication for that formulation: the sum of the decoding error and the trace-distance secrecy error is at least $1-2^{-\gamma n/2}$. The coherent lift permits the optimized joint fidelity itself to pass to the test without a loss.

\subsection{The quantum erasure channel}
The erasure channel provides a useful example in which the two cases cover the full parameter range. For $p\in[0,1]$ and input dimension $d_0$, let
\begin{equation}\label{eq:erasure}
 \mathcal E_{p,d_0}(\rho)\coloneqq(1-p)\rho\oplus p\ket e\!\bra e,
\end{equation}
where the erasure flag is orthogonal to the input space. Its complement is equivalent to $\mathcal E_{1-p,d_0}$. The erasure-channel capacities were studied in Ref.~\cite{BDS1997}. In the terminology of Ref.~\cite{DS2005}, the channel is degradable for $p\le1/2$ and antidegradable for $p\ge1/2$. For an arbitrary input,
\begin{equation}\label{eq:erasure-entropies}
 \begin{aligned}
 H(\mathcal E_{p,d_0}(\rho))&=\hbin(p)+(1-p)H(\rho),\\
 H(\mathcal E_{p,d_0}^{\,c}(\rho))&=\hbin(p)+pH(\rho),\\
 I_{\mathrm c}(\rho,\mathcal E_{p,d_0})&=(1-2p)H(\rho).
 \end{aligned}
\end{equation}
The capacity formula and the preceding results therefore give the following consequence.

\begin{corollary}[All erasure probabilities]\label{cor:erasure}
For every $p\in[0,1]$ and finite $d_0$, the unassisted private capacity of the erasure channel is
\begin{equation}\label{eq:erasure-capacity}
 P(\mathcal E_{p,d_0})=[1-2p]_+\log d_0,
\end{equation}
and it has an exponential strong converse under the joint criteria~\eqref{eq:error}, \eqref{eq:max-error}, and~\eqref{eq:marginal-definition-main}. The notation $[x]_+$ denotes $\max\{x,0\}$.
\end{corollary}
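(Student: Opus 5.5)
The plan is to split at $p=1/2$ and use the two converses already proved. First I will determine the capacity value, then obtain the strong converse from Theorem~\ref{thm:main} when $p<1/2$ and from Corollary~\ref{cor:anti} when $p\ge1/2$.

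\emph{Capacity formula.} For $p\le1/2$ the erasure channel is degradable. I will make the degrading map explicit: erase the $B$ output again with probability $p/(1-p)$, which turns $\mathcal E_{p,d_0}$ into $\mathcal E_{1-p,d_0}\cong\mathcal E_{p,d_0}^{\,c}$. The identity $P(\cN)=\qN$ recalled in Section~\ref{subsec:capacity-background} then reduces the problem to maximizing the coherent information in~\eqref{eq:erasure-entropies}. This gives
\[
\max_\rho(1-2p)H(\rho)=(1-2p)\log d_0,
\]
attained by the maximally mixed input. For $p\ge1/2$ the channel is antidegradable, via the symmetric map $E\to B$ that re-erases with probability $(2p-1)/p$. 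The capacity is then zero: by Corollary~\ref{cor:anti}, any positive rate has fidelity tending to zero, so no positive rate is achievable. At $p=1/2$ both descriptions give zero, consistent with $[1-2p]_+\log d_0$. I will also note the edge cases $p=0$ and $p=1$, where the degrading and antidegrading maps are the identity or a constant preparation and nothing needs special treatment.

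\emph{Strong converse.} For $p<1/2$ (and for $p=1/2$ as well, if desired), Theorem~\ref{thm:main} applies with $\qN=(1-2p)\log d_0$. It gives exponential decay of $f$, $f_{\max}$-type and marginal fidelities at every rate above capacity. For $p\ge1/2$, Corollary~\ref{cor:anti} gives the exponential strong converse at zero under the same three criteria. Together the two cases cover $[0,1]$.

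The only step needing care is checking the dilation and complement identification. Specifically, the complement of $\mathcal E_{p,d_0}$ must be equivalent to $\mathcal E_{1-p,d_0}$, so that the entropies in~\eqref{eq:erasure-entropies} and the (anti)degrading compositions hold exactly. I expect this to be routine: use the isometry $\ket\psi\mapsto\sqrt{1-p}\ket\psi_B\ket e_E+\sqrt p\ket e_B\ket\psi_E$. The block-diagonal form of the output then gives the entropy formulas through $\hbin(p)$.
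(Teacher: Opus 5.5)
Your proposal is correct and follows the paper's route. The capacity value comes from $P(\cN)=\qN$ with $I_{\mathrm c}=(1-2p)H(\rho)$ for $p\le1/2$, and from the zero-rate converse of Corollary~\ref{cor:anti} for $p\ge1/2$; the strong converse is Theorem~\ref{thm:main} on the degradable side and Corollary~\ref{cor:anti} on the antidegradable side. One slip needs fixing: the degrading map must re-erase with probability $(1-2p)/(1-p)$, i.e.\ \emph{retain} with probability $p/(1-p)$, so that the composite transmission probability is $(1-p)\cdot p/(1-p)=p$ as required for $\mathcal E_{1-p,d_0}$ (your value $p/(1-p)$ would give $\mathcal E_{2p,d_0}$); correspondingly, at both endpoints $p=0$ and $p=1$ the (anti)degrading map is the full-erasure constant preparation, and the identity occurs only at $p=1/2$.
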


\section{Conclusion}\label{sec:conclusion}
\subsection{Summary}
We have proved an exponential strong converse for unassisted private communication over every finite-dimensional degradable channel, using a single joint infidelity criterion. Above the maximum coherent information, the fidelity of every code decays exponentially, uniformly over correlated input states and collective decoders. The conclusion applies both when the ideal environment state is optimized and when it is fixed to the actual marginal. The same method gives an exponential strong converse at zero for antidegradable channels and hence covers the quantum erasure channel throughout its parameter range.

The private and quantum coding problems have the same asymptotic capacity for a degradable channel, but their finite-error converses require different tests. Lemma~\ref{lem:overlap} supplies the private-state analogue of the entanglement-test estimate. The ordered products of the twisting unitaries have norm one even on a common sender shield. Only the dimension of the common receiver factor enters the overlap bound.

The single-filter lemma connects the common receiver dimension to the coherent-information threshold. A Gaussian moment estimate produces a filter that nearly preserves the reference marginal. Normalization is controlled explicitly, and Uhlmann's theorem then recovers the original state from the output of the filter together with quantum side information. In the channel application, the filter preserves exact exchange symmetry and recovery preserves the original privacy-test probability up to a small additive error. The correlated-input entropy estimate remains uniform for exponentially small error. No state-merging protocol or auxiliary shared entanglement is used. The signed Boolean construction is due to Ref.~\cite{KBK2026}, based on Ref.~\cite{deWolf2008}; our extension uses privacy tests and the recoverable single-filter reduction.

\subsection{Future directions}
The proof raises several questions concerning the quantitative bounds and the scope of the result.
\begin{enumerate}[label=(\roman*),leftmargin=*,itemsep=5pt]
\item \emph{Strong-converse exponents.} The exponent obtained in~\eqref{eq:gamma-choice} is not asserted to be optimal. It would be useful to sharpen the averaging and filter estimates, and to determine whether a variational expression characterizes the optimal exponent for degradable channels. Even special families such as erasure channels provide concrete cases in which to compare the present bound with tighter rate-dependent estimates.

\item \emph{Bosonic private communication.} Ref.~\cite{Wilde2026PureLoss} proves a strong converse for the unconstrained quantum capacity of the pure-loss bosonic channel, with an inverse-blocklength bound on entanglement-generation fidelity. A natural next question is whether its unassisted private capacity satisfies a strong converse under the joint fidelity criteria studied here. One could also consider appropriately specified photon-number constraints. The finite-dimensional theorem does not settle these questions by a formal cutoff: both the common-output factor $d^t$ and the uniform entropy constants depend on dimension, and the truncation error must be controlled for the entire class of allowed inputs. A proof adapted to the bosonic channel, possibly using privacy tests in place of entanglement tests in Ref.~\cite{Wilde2026PureLoss}, would have to address these issues directly.

\item \emph{Channels beyond degradability.} The overlap lemma is independent of degradability, and Proposition~\ref{prop:symmetric} is a statement about states. It is therefore natural to ask which other channels admit useful approximate symmetric reductions with a controlled common receiver dimension. Such a result would need quantitative control of the approximation for arbitrary correlated code inputs, not merely for single-use states.

\item \emph{Other assistance models.} Strong converses with public classical communication assistance are already known for several channels. Ref.~\cite[Theorem~12, Propositions~18 and~22, and Corollary~25]{WTB2017} established strong-converse bounds for two-way public-assisted private communication using the relative entropy of entanglement. Its general bound for teleportation-simulable channels need not be tight for every channel. For erasure and qubit dephasing channels, and for unconstrained pure-loss and quantum-limited amplifier channels, the bounds coincide with the corresponding assisted private capacities. A remaining question is whether the present method gives strong converses at the capacity for further channels or assistance models. This requires bounds appropriate to the resources and communication allowed in the task. The normalized filtered state used here is a mathematical reduction, not an interactive protocol; the present proof does not establish a converse for arbitrary feedback-assisted communication.
\end{enumerate}

\section*{Acknowledgements}
\addcontentsline{toc}{section}{Acknowledgements}
I acknowledge support from the Cornell University School of Electrical and Computer Engineering.

\section*{Statement on AI-assisted preparation}
\addcontentsline{toc}{section}{Statement on AI-assisted preparation}
ChatGPT (OpenAI) Pro 6 (Astra) was used extensively in developing this paper, including for mathematical exploration, formulation and checking of proof arguments, drafting and revision, bibliographic verification, and preparation of LaTeX, TikZ, and numerical-checking code. The author proposed the research problem, directed the successive revisions, and supplied detailed comments on both the mathematical arguments and their presentation.

Internal AI-assisted mathematical and editorial review passes were followed by further revisions. These checks were self-reviews, not independent external peer review or formal verification. Numerical calculations were used as finite-dimensional checks of identities and estimates, not as substitutes for proofs. The author is responsible for the mathematical statements, citations, and presentation in the manuscript.

\begingroup
\raggedright
\bibliographystyle{alphaurl}
\bibliography{references}

@book{KLW2026,
  author = {Khatri, Sumeet and Lami, Ludovico and Wilde, Mark M.},
  title = {Principles of Quantum Communication Theory: A Modern Approach},
  publisher = {Zenodo},
  year = {2026},
  doi = {10.5281/zenodo.21763149},
  note = {Version 1, August 2, 2026.}
}

@article{CWY2004,
  author = {Cai, Ning and Winter, Andreas and Yeung, Raymond W.},
  title = {Quantum privacy and quantum wiretap channels},
  journal = {Problems of Information Transmission},
  volume = {40}, number = {4}, pages = {318--336}, year = {2004},
  doi = {10.1007/s11122-005-0002-x}
}

@article{Devetak2005,
  author = {Devetak, Igor},
  title = {The private classical capacity and quantum capacity of a quantum channel},
  journal = {IEEE Transactions on Information Theory},
  volume = {51}, number = {1}, pages = {44--55}, year = {2005},
  doi = {10.1109/TIT.2004.839515},
  url = {https://arxiv.org/abs/quant-ph/0304127}
}

@article{DS2005,
  author = {Devetak, Igor and Shor, Peter W.},
  title = {The capacity of a quantum channel for simultaneous transmission of classical and quantum information},
  journal = {Communications in Mathematical Physics},
  volume = {256}, number = {2}, pages = {287--303}, year = {2005},
  doi = {10.1007/s00220-005-1317-6},
  url = {https://arxiv.org/abs/quant-ph/0311131}
}

@article{Smith2008,
  author = {Smith, Graeme},
  title = {Private classical capacity with a symmetric side channel and its application to quantum cryptography},
  journal = {Physical Review A},
  volume = {78}, number = {2}, pages = {022306}, year = {2008},
  doi = {10.1103/PhysRevA.78.022306}
}

@article{MW2014,
  author = {Morgan, Ciara and Winter, Andreas},
  title = {{``Pretty strong''} converse for the quantum capacity of degradable channels},
  journal = {IEEE Transactions on Information Theory},
  volume = {60}, number = {1}, pages = {317--333}, year = {2014},
  doi = {10.1109/TIT.2013.2288971},
  url = {https://arxiv.org/abs/1301.4927v3}
}

@article{WTB2017,
  author = {Wilde, Mark M. and Tomamichel, Marco and Berta, Mario},
  title = {Converse bounds for private communication over quantum channels},
  journal = {IEEE Transactions on Information Theory},
  volume = {63}, number = {3}, pages = {1792--1817}, year = {2017},
  doi = {10.1109/TIT.2017.2648825},
  url = {https://arxiv.org/abs/1602.08898}
}

@misc{KBK2026,
  author = {Kondra, Tulja Varun and Brinster, Raphael and Kampermann, Hermann and Bru{\ss}, Dagmar and Wyderka, Nikolai},
  title = {Sharp Quantum Capacity Thresholds: Exponential Strong Converses for Degradable and Antidegradable Channels},
  year = {2026},
  note = {arXiv:2608.01308v1},
  doi = {10.48550/arXiv.2608.01308},
  url = {https://arxiv.org/abs/2608.01308v1}
}

@misc{BKH2025,
  author = {Baghali Khanian, Zahra and Hirche, Christoph},
  title = {On Strong Converse Bounds for the Private and Quantum Capacities of Anti-degradable Channels},
  year = {2025}, note = {arXiv:2507.15661},
  doi = {10.48550/arXiv.2507.15661},
  url = {https://arxiv.org/abs/2507.15661}
}

@article{HHHO2005,
  author = {Horodecki, Karol and Horodecki, Micha{\l} and Horodecki, Pawe{\l} and Oppenheim, Jonathan},
  title = {Secure key from bound entanglement},
  journal = {Physical Review Letters},
  volume = {94}, number = {16}, pages = {160502}, year = {2005},
  doi = {10.1103/PhysRevLett.94.160502}
}

@article{HHHO2009,
  author = {Horodecki, Karol and Horodecki, Micha{\l} and Horodecki, Pawe{\l} and Oppenheim, Jonathan},
  title = {General paradigm for distilling classical key from quantum states},
  journal = {IEEE Transactions on Information Theory},
  volume = {55}, number = {4}, pages = {1898--1929}, year = {2009},
  doi = {10.1109/TIT.2008.2009798},
  url = {https://arxiv.org/abs/quant-ph/0506189}
}

@article{HOW2007,
  author = {Horodecki, Micha{\l} and Oppenheim, Jonathan and Winter, Andreas},
  title = {Quantum state merging and negative information},
  journal = {Communications in Mathematical Physics},
  volume = {269}, pages = {107--136}, year = {2007},
  doi = {10.1007/s00220-006-0118-x}
}

@article{DBWR2014,
  author = {Dupuis, Fr{\'e}d{\'e}ric and Berta, Mario and Wullschleger, J{\"u}rg and Renner, Renato},
  title = {One-shot decoupling},
  journal = {Communications in Mathematical Physics},
  volume = {328}, pages = {251--284}, year = {2014},
  doi = {10.1007/s00220-014-1990-4},
  url = {https://arxiv.org/abs/1012.6044}
}

@article{CKR2009,
  author = {Christandl, Matthias and K{\"o}nig, Robert and Renner, Renato},
  title = {Postselection technique for quantum channels with applications to quantum cryptography},
  journal = {Physical Review Letters},
  volume = {102}, number = {2}, pages = {020504}, year = {2009},
  doi = {10.1103/PhysRevLett.102.020504}
}

@article{TCR2009,
  author = {Tomamichel, Marco and Colbeck, Roger and Renner, Renato},
  title = {A fully quantum asymptotic equipartition property},
  journal = {IEEE Transactions on Information Theory},
  volume = {55}, number = {12}, pages = {5840--5847}, year = {2009},
  doi = {10.1109/TIT.2009.2032797}
}

@phdthesis{Tomamichel2012,
  author = {Tomamichel, Marco},
  title = {A Framework for Non-Asymptotic Quantum Information Theory},
  school = {ETH Zurich}, year = {2012},
  url = {https://arxiv.org/abs/1203.2142},
  note = {In particular, Theorem 6.4 and Corollary 6.5}
}

@article{TCR2010,
  author = {Tomamichel, Marco and Colbeck, Roger and Renner, Renato},
  title = {Duality between smooth min- and max-entropies},
  journal = {IEEE Transactions on Information Theory},
  volume = {56}, number = {9}, pages = {4674--4681}, year = {2010},
  doi = {10.1109/TIT.2010.2054130}
}

@article{deWolf2008,
  author = {de Wolf, Ronald},
  title = {A note on quantum algorithms and the minimal degree of epsilon-error polynomials for symmetric functions},
  journal = {Quantum Information and Computation},
  volume = {8}, number = {10}, pages = {943--950}, year = {2008},
  doi = {10.26421/QIC8.10-4},
  url = {https://arxiv.org/abs/0802.1816}
}

@article{Uhlmann1976,
  author = {Uhlmann, Armin},
  title = {The ``transition probability'' in the state space of a $*$-algebra},
  journal = {Reports on Mathematical Physics},
  volume = {9}, number = {2}, pages = {273--279}, year = {1976},
  doi = {10.1016/0034-4877(76)90060-4}
}

@article{FvdG1999,
  author = {Fuchs, Christopher A. and van de Graaf, Jeroen},
  title = {Cryptographic distinguishability measures for quantum-mechanical states},
  journal = {IEEE Transactions on Information Theory},
  volume = {45}, number = {4}, pages = {1216--1227}, year = {1999},
  doi = {10.1109/18.761271}
}

@article{BDS1997,
  author = {Bennett, Charles H. and DiVincenzo, David P. and Smolin, John A.},
  title = {Capacities of quantum erasure channels},
  journal = {Physical Review Letters},
  volume = {78}, number = {16}, pages = {3217--3220}, year = {1997},
  doi = {10.1103/PhysRevLett.78.3217}
}

@article{Rastegin2002,
  author = {Rastegin, Alexey E.},
  title = {Relative error of state-dependent cloning},
  journal = {Physical Review A},
  volume = {66}, number = {4}, pages = {042304}, year = {2002},
  doi = {10.1103/PhysRevA.66.042304}
}

@misc{Rastegin2006,
  author = {Rastegin, Alexey E.},
  title = {Sine distance for quantum states},
  year = {2006}, note = {arXiv:quant-ph/0602112},
  doi = {10.48550/arXiv.quant-ph/0602112},
  url = {https://arxiv.org/abs/quant-ph/0602112}
}

@article{Stinespring1955,
  author = {Stinespring, W. Forrest},
  title = {Positive functions on {$C^*$}-algebras},
  journal = {Proceedings of the American Mathematical Society},
  volume = {6}, number = {2}, pages = {211--216}, year = {1955},
  doi = {10.1090/S0002-9939-1955-0069403-4}
}

@article{LiebRuskai1973,
  author = {Lieb, Elliott H. and Ruskai, Mary Beth},
  title = {Proof of the strong subadditivity of quantum-mechanical entropy},
  journal = {Journal of Mathematical Physics},
  volume = {14}, number = {12}, pages = {1938--1941}, year = {1973},
  doi = {10.1063/1.1666274}
}

@misc{Wilde2026PureLoss,
  author = {Wilde, Mark M.},
  title = {Strong converse for the quantum capacity of the pure-loss bosonic channel},
  year = {2026},
  note = {arXiv:2609.16608v1},
  doi = {10.48550/arXiv.2609.16608},
  url = {https://arxiv.org/abs/2609.16608v1}
}

@article{HSW2008,
 author = {Hayden, Patrick and Shor, Peter W. and Winter, Andreas},
 title = {Random quantum codes from {Gaussian} ensembles and an uncertainty relation},
 journal = {Open Systems \& Information Dynamics},
 volume = {15}, number = {1}, pages = {71--89}, year = {2008},
 doi = {10.1142/S1230161208000079},
 url = {https://arxiv.org/abs/0712.0975}
}

@article{Wilde2017Position,
  author = {Wilde, Mark M.},
  title = {Position-based coding and convex splitting for private communication over quantum channels},
  journal = {Quantum Information Processing},
  volume = {16}, number = {10}, pages = {264}, month = oct, year = {2017},
  doi = {10.1007/s11128-017-1718-4},
  url = {https://arxiv.org/abs/1703.01733}
}

@article{HHHLO2008QKD,
  author = {Horodecki, Karol and Horodecki, Micha{\l} and Horodecki, Pawe{\l} and Leung, Debbie and Oppenheim, Jonathan},
  title = {Quantum key distribution based on private states: Unconditional security over untrusted channels with zero quantum capacity},
  journal = {IEEE Transactions on Information Theory},
  volume = {54}, number = {6}, pages = {2604--2620}, month = jun, year = {2008},
  doi = {10.1109/TIT.2008.921870},
  url = {https://arxiv.org/abs/quant-ph/0608195}
}

@article{HHHLO2008Privacy,
  author = {Horodecki, Karol and Horodecki, Micha{\l} and Horodecki, Pawe{\l} and Leung, Debbie and Oppenheim, Jonathan},
  title = {Unconditional privacy over channels which cannot convey quantum information},
  journal = {Physical Review Letters},
  volume = {100}, number = {11}, pages = {110502}, month = mar, year = {2008},
  doi = {10.1103/PhysRevLett.100.110502},
  url = {https://arxiv.org/abs/quant-ph/0702077}
}

@article{Wilde2013Sequential,
  author = {Wilde, Mark M.},
  title = {Sequential decoding of a general classical-quantum channel},
  journal = {Proceedings of the Royal Society A},
  volume = {469}, number = {2157}, pages = {20130259}, month = sep, year = {2013},
  doi = {10.1098/rspa.2013.0259},
  url = {https://arxiv.org/abs/1303.0808}
}

@misc{salzmann2022totalinsecurity,
      title={Total insecurity of communication via strong converse for quantum privacy amplification}, 
      author={Robert Salzmann and Nilanjana Datta},
      year={2022},
      eprint={2202.11090},
      archivePrefix={arXiv},
      primaryClass={quant-ph},
      url={https://arxiv.org/abs/2202.11090}, 
}
\endgroup

\appendix

\section{The symmetric dilation of a degradable channel}\label{app:dilation}
This appendix proves Lemma~\ref{lem:factorization}. The construction is due to Morgan and Winter~\cite[Lemmas~1 and~17, and the discussion following Theorem~19]{MW2014}. We include it because equality of the two marginal channels, by itself, does not imply that a particular Stinespring isometry has range in a symmetric subspace. The latter property is the one preserved by the single-filter reduction. Figure~\ref{fig:symmetrization} separates the two constructive steps below.

\subsection{An involutive dilation}
We first make exchange of the two complementary outputs equivalent to a simple involution on a retained auxiliary system. The flag construction in Figure~\ref{fig:symmetrization}(a) enforces this symmetry without changing either complementary marginal.

Choose an isometry $U\colon A\to BE$ for $\cN$ and an isometry $V_0\colon B\to F_0E'$ for a degrading channel, with $E'\cong E$. Define $T\coloneqq(V_0\otimes I_E)U$. The two marginal channels from $A$ to $E$ and to $E'$ both equal $\cN^c$. Introduce a qubit $H$ and the isometry
\begin{equation}\label{eq:flagged-isometry}
 T_1\coloneqq\frac1{\sqrt2}\left(T\otimes\ket0_H
 +(I_{F_0}\otimes\FS_{EE'})T\otimes\ket1_H\right).
\end{equation}
The two terms are orthogonal on $H$, so $T_1^\dagger T_1=I_A$. Tracing out $F_0HE'$ yields $\cN^c$ because each flagged term has that same marginal. Let $F\coloneqq F_0H$ and let $X_F$ act as a Pauli $X$ on $H$ and as the identity on $F_0$. Direct substitution gives
\begin{equation}\label{eq:involution}
 (X_F\otimes\FS_{EE'})T_1=T_1,\qquad X_F^2=I_F.
\end{equation}
The isometries $U$ and $T_1$ dilate the same complementary channel. By Stinespring uniqueness, there is an isometry $V\colon B\to FE'$ such that
\begin{equation}\label{eq:V-exists}
 T_1=(V\otimes I_E)U.
\end{equation}
For this use of uniqueness, one can first restrict $B$ to the support of the channel outputs, making its role as the purifying space of $\cN^c$ minimal. Alternatively, extend the resulting partial isometry on the unused part of $B$, enlarging $F_0$ if necessary. Neither choice changes the channel on its actual outputs. The $E'$ marginal of $T_1$ is also $\cN^c$, so $V$ dilates a degrading channel.

Equation~\eqref{eq:involution} is a symmetry with an additional involution on $F$. We next remove that involution by appending an independent mixed qubit at the receiver.

\begin{figure}
\centering
\begin{tikzpicture}[x=1cm,y=1cm,>=Latex,font=\small,
 wire/.style={draw=black!75,line width=.7pt},
 box/.style={draw=figblue,fill=figblue!4,rounded corners=2pt,line width=.8pt},
 lab/.style={fill=white,inner sep=1.7pt}]
 \node[anchor=west,font=\small\bfseries] at (0,4.15)
 {(a) A coherent flag symmetrizes the two terms};
 \node[box,minimum width=1.4cm,minimum height=2.1cm] (T) at (3.0,1.6) {$T$};
 \draw[wire,->] (.35,1.6)--(T.west) node[midway,lab] {$A$};
 \draw[wire,->] (.65,3.2)--(10.9,3.2);
 \node[anchor=east] at (.65,3.2) {$\ket+_H$};
 \node[lab] at (9.5,3.2) {$H$};
 \draw[wire,->] (3.7,2.3)--(10.9,2.3) node[pos=.80,lab] {$E'$};
 \draw[wire,->] (3.7,1.55)--(10.9,1.55) node[pos=.80,lab] {$E$};
 \draw[wire,->] (3.7,.8)--(10.9,.8) node[pos=.80,lab] {$F_0$};
 \draw[wire] (6.7,3.2)--(6.7,1.55);
 \fill (6.7,3.2) circle (2pt);
 \foreach \y in {2.3,1.55}{
  \draw[wire] (6.59,\y-.11)--(6.81,\y+.11);
  \draw[wire] (6.59,\y+.11)--(6.81,\y-.11);
 }
 \node[font=\footnotesize,anchor=west] at (7.08,2.78) {controlled exchange};
 \node at (6.1,.02)
 {$T_1=\bigl(T\otimes\ket0_H+(I_{F_0}\otimes\mathbb F_{EE'})T\otimes\ket1_H\bigr)/\sqrt2$};
 \node[text=figteal,font=\footnotesize] at (6.1,-.54)
 {Exchanging $E,E'$ interchanges the two flag components; $X_H$ restores them.};
 \draw[black!25] (0,-1.08)--(12.75,-1.08);
 \node[anchor=west,font=\small\bfseries] at (0,-1.62)
 {(b) An added Bell pair cancels the exchange sign};
 \node[box,minimum width=3.3cm,minimum height=.8cm] (plusL) at (2.15,-2.6)
 {$T_+\otimes\ket{\psi^+}$};
 \node[box,minimum width=3.3cm,minimum height=.8cm] (minusL) at (2.15,-4.15)
 {$T_-\otimes\ket{\psi^+}$};
 \node[box,draw=figteal,fill=figteal!4,minimum width=3.5cm,minimum height=.8cm] (plusR) at (9.05,-2.6)
 {$T_+\otimes\ket{\psi^+}$};
 \node[box,draw=figteal,fill=figteal!4,minimum width=3.5cm,minimum height=.8cm] (minusR) at (9.05,-4.15)
 {$T_-\otimes\ket{\psi^-}$};
 \draw[wire,->] (plusL.east)--(plusR.west) node[midway,above=3pt,font=\footnotesize] {$I_{E'_0}$};
 \draw[wire,->] (minusL.east)--(minusR.west) node[midway,above=3pt,font=\footnotesize] {$Z_{E'_0}$};
 \node[font=\footnotesize] at (2.15,-3.25) {exchange sign of $E,E'$: $+1$};
 \node[font=\footnotesize] at (2.15,-4.8) {exchange sign of $E,E'$: $-1$};
 \node[text=figteal,font=\footnotesize] at (9.05,-3.25) {total sign: $(+1)(+1)=+1$};
 \node[text=figteal,font=\footnotesize] at (9.05,-4.8) {total sign: $(-1)(-1)=+1$};
 \node[align=center,font=\footnotesize] at (6.15,-5.65)
 {The two sectors remain in coherent superposition.\\
 The combined exchange swaps $EE_0$ with $E'E'_0$.};
\end{tikzpicture}\par
\caption{Constructing the symmetric range in Appendix~\ref{app:dilation}. In (a), a coherent flag $H$ records whether the two complementary outputs are exchanged. Swapping $E,E'$ interchanges the two flag components and is compensated by $X_H$, giving $(X_H\otimes\mathbb F_{EE'})T_1=T_1$. In (b), the $+1$ and $-1$ sectors of this flag involution have, respectively, symmetric and antisymmetric $E,E'$ outputs. A receiver-only controlled phase correlates them with $\ket{\psi^+}$ and $\ket{\psi^-}$ on an added Bell pair. Each term has total exchange sign $+1$. The added receiver qubit has maximally mixed marginal, and its purifying qubit belongs to the environment. Both panels depict isometries and coherent superpositions, not measurements of the flag or its sign.}
\label{fig:symmetrization}
\end{figure}

\subsection{Removing the involution}
The next step correlates the two eigenspaces of the involution with a symmetric or antisymmetric Bell state. The two exchange signs then cancel, giving an exactly symmetric pair of enlarged outputs; see Figure~\ref{fig:symmetrization}(b).

Let $P_+$ and $P_-$ be the two spectral projections of $X_F$, and define
\begin{equation}\label{eq:spectral-involution}
 P_\pm\coloneqq\frac12(I_F\pm X_F),\qquad
 T_\pm\coloneqq(P_\pm\otimes I_{E'E})T_1.
\end{equation}
The operators $T_+$ and $T_-$ are components of the total isometry, not generally isometries on their own. They obey $T_1=T_++T_-$ and $T_+^\dagger T_-=0$. In the construction below they remain in coherent superposition; the sign is not measured.
It follows from~\eqref{eq:involution} that
\begin{equation}\label{eq:sector-symmetry}
 (I_F\otimes\FS_{EE'})T_+=T_+,
 \qquad (I_F\otimes\FS_{EE'})T_-=-T_-.
\end{equation}
Append a maximally mixed qubit $B_0$ to the channel output and purify it by a qubit $E_0$. Thus
\begin{equation}\label{eq:augmented-channel}
 \widetilde\cN(\rho)\coloneqq\cN(\rho)\otimes\pi_{B_0},
 \qquad \pi_{B_0}\coloneqq I_{B_0}/2,
 \qquad \widetilde U\coloneqq U\otimes\ket{\psi^+}_{B_0E_0},
\end{equation}
where $\ket{\psi^\pm}\coloneqq(\ket{01}\pm\ket{10})/\sqrt2$. After applying $V$ to $B$, rename $B_0$ as $E'_0$ and apply the controlled unitary
\begin{equation}\label{eq:controlled-phase}
 P_+\otimes I_{E'_0}+P_-\otimes Z_{E'_0}.
\end{equation}
Let $\widetilde V$ denote this isometry from $BB_0$ to $FE'E'_0$. In a fixed ordering of tensor factors,
\begin{equation}\label{eq:total-symmetric}
 \widetilde V\widetilde U
 =T_+\otimes\ket{\psi^+}_{E'_0E_0}
 +T_-\otimes\ket{\psi^-}_{E'_0E_0}.
\end{equation}
The first term is symmetric under both swaps, and the second term is antisymmetric under each swap separately. Both terms are consequently symmetric under the simultaneous exchange of $EE_0$ with $E'E'_0$:
\begin{equation}\label{eq:full-swap}
 (I_F\otimes\FS_{EE_0:E'E'_0})\widetilde V\widetilde U
 =\widetilde V\widetilde U.
\end{equation}

It remains to check that $\widetilde V$ is a degrading dilation. In the input $V\cN(\rho)V^\dagger\otimes\pi_{E'_0}$ to~\eqref{eq:controlled-phase}, the diagonal $P_+$ and $P_-$ terms leave the mixed qubit invariant. The cross terms vanish after tracing out $F$, since $P_+P_-=0$ and operators acting only on the traced system can be cycled under that partial trace. Hence
\begin{equation}\label{eq:augmented-degrading}
 \Tr_F[\widetilde V\widetilde\cN(\rho)\widetilde V^\dagger]
 =\cN^c(\rho)\otimes\pi_{E'_0}
 =\widetilde\cN^{\,c}(\rho)
\end{equation}
under the identification $E'_0\cong E_0$. This proves the desired degrading property.

Now relabel $BB_0$ as $B$ and $EE_0$ as $E$. Let $G\coloneqq\Sym^2(E)$ and let $W\colon G\to E'E$ be the inclusion. Equation~\eqref{eq:full-swap} implies that the range of $VU$ lies in $F\otimes G$. Define
\begin{equation}\label{eq:J-explicit}
 J\coloneqq(I_F\otimes W^\dagger)VU.
\end{equation}
Since $WW^\dagger$ is the projection onto $G$ and acts as the identity on the range in question,
\begin{equation}\label{eq:J-isometry}
 J^\dagger J
 =U^\dagger V^\dagger(I_F\otimes WW^\dagger)VU
 =I_A.
\end{equation}
Equations~\eqref{eq:J-explicit} and~\eqref{eq:J-isometry} establish the factorization in Lemma~\ref{lem:factorization}.

\subsection{Preservation of coherent information and the joint criterion}
We finish by checking that the auxiliary qubit does not alter either the capacity threshold or the code fidelity. These checks allow us to apply the factorization to the original coding problem without an additional approximation.

For every input state, the appended qubits add one bit to each of the receiver and environment entropies. Therefore
\begin{equation}\label{eq:q-invariance}
 H(\widetilde\cN(\rho))-H(\widetilde\cN^{\,c}(\rho))
 =H(\cN(\rho))-H(\cN^c(\rho)).
\end{equation}
The maximized coherent information is unchanged.

Given a code for $\cN$, let the decoder ignore $B_0^n$ in a code for $\widetilde\cN$. Its classical output state with the environment is exactly $\omega_{K\widehat K E^n}\otimes\pi_{E_0}^{\otimes n}$. For this calculation, abbreviate $\omega\coloneqq\omega_{K\widehat K E^n}$, $\overline\Phi\coloneqq\overline\Phi^M_{K\widehat K}$, and $\pi_0\coloneqq\pi_{E_0}^{\otimes n}$. For an arbitrary $\sigma_{E^nE_0^n}$, data processing of fidelity under the partial trace gives
\begin{equation}\label{eq:fidelity-augmentation-upper}
 F(\omega\otimes\pi_0,\overline\Phi\otimes\sigma_{E^nE_0^n})
 \le F(\omega,\overline\Phi\otimes\sigma_{E^n})
 \le f(\mathscr C_n).
\end{equation}
Conversely, the product of an optimizing original $\sigma_{E^n}$ with $\pi_{E_0}^{\otimes n}$ attains $f(\mathscr C_n)$ by multiplicativity of fidelity. Thus the optimized joint fidelity is exactly preserved, not merely bounded in one direction. The actual environment marginal is $\omega_{E^n}\otimes\pi_{E_0}^{\otimes n}$, so multiplicativity also shows that $f_{\mathrm{marg}}$ is exactly preserved.

\section{Uniform smooth-entropy estimates}\label{app:entropies}
The entropy estimate in Proposition~\ref{prop:overhead} must hold simultaneously for all encoder inputs. This appendix gives the definitions, states the precise standard smooth-entropy inequalities used, and derives the correlated-input estimate. It then explains why the independent-copy asymptotic equipartition estimate can be made uniform in the single-copy state, even when the smoothing parameter is exponentially small.

\subsection{Conventions and elementary inputs}
We fix the smoothing convention and collect the entropy inequalities used to compare a correlated input with a mixture of independent-copy inputs. The distinction between normalized and subnormalized states matters for these comparisons.

Write $\substates(A)\coloneqq\{\rho\ge0:\Tr\rho\le1\}$. For subnormalized states, the generalized fidelity and purified distance are defined as
\begin{equation}\label{eq:generalized-fidelity}
 F_*(\rho,\sigma)\coloneqq
 \left(\trn{\sqrt\rho\sqrt\sigma}
 +\sqrt{(1-\Tr\rho)(1-\Tr\sigma)}\right)^2,
 \qquad P(\rho,\sigma)\coloneqq\sqrt{1-F_*(\rho,\sigma)}.
\end{equation}
This agrees with the sine distance in~\eqref{eq:purified} when both states are normalized, and $F_*=F$ when at least one is normalized. For $\rho_{AB}\in\substates(AB)$, define
\begin{equation}\label{eq:min-definition}
 \hmin(A|B)_\rho\coloneqq
 \sup\left\{\lambda\in\mathbb R:
 \rho_{AB}\le2^{-\lambda}I_A\otimes\sigma_B
 \text{ for some }\sigma_B\in\states(B)\right\}.
\end{equation}
The conditional max-entropy is defined by the fidelity expression
\begin{equation}\label{eq:max-definition}
 \hmax(A|B)_\rho\coloneqq
 \max_{\sigma\in\states(B)}\log F(\rho_{AB},I_A\otimes\sigma_B),
\end{equation}
where $F$ in~\eqref{eq:max-definition} is the algebraic expression~\eqref{eq:fidelity}, not the generalized expression~\eqref{eq:generalized-fidelity}. The second argument in~\eqref{eq:max-definition} need not have trace one.

For $0\le\varepsilon<1$, the smoothed entropies are defined as
\begin{equation}\label{eq:smooth-definitions}
 \begin{aligned}
 \hmin^\varepsilon(A|B)_\rho
 &\coloneqq\sup\left\{\hmin(A|B)_{\widetilde\rho}:
 \widetilde\rho\in\substates(AB),\ P(\widetilde\rho,\rho)\le\varepsilon\right\},\\
 \hmax^\varepsilon(A|B)_\rho
 &\coloneqq\inf\left\{\hmax(A|B)_{\widetilde\rho}:
 \widetilde\rho\in\substates(AB),\ P(\widetilde\rho,\rho)\le\varepsilon\right\}.
 \end{aligned}
\end{equation}
For a purification $\psi_{ABR}$, smooth duality states that~\cite{TCR2010}
\begin{equation}\label{eq:smooth-duality}
 \hmin^\varepsilon(A|R)_\psi=-\hmax^\varepsilon(A|B)_\psi.
\end{equation}
Both quantities are invariant under local isometries. Smoothing makes $\hmin^\varepsilon$ nondecreasing and $\hmax^\varepsilon$ nonincreasing in $\varepsilon$.

We use four established inequalities with normalized center states. They are stated here to fix all parameters. Proofs are given in Ref.~\cite[Lemmas~8--11]{MW2014}, with the min--max comparison also following from the smooth-entropy treatment in Ref.~\cite{Tomamichel2012}. By Ref.~\cite[Lemma~10]{MW2014}, if $\overline\rho=\sum_i p_i(U_i\otimes V_i)\rho(U_i\otimes V_i)^\dagger$ is an average over local unitaries, then
\begin{equation}\label{eq:orbit-entropy}
 \hmax^{\sqrt2\varepsilon}(A|B)_\rho
 \le\hmax^\varepsilon(A|B)_{\overline\rho}
 \quad(0<\varepsilon<1/\sqrt2).
\end{equation}
By Ref.~\cite[Lemma~9]{MW2014}, for $0<\varepsilon<1$,
\begin{equation}\label{eq:max-to-min}
 \hmax^\varepsilon(A|B)_\rho
 \le\hmin^{(1-\varepsilon^2)^{1/4}}(A|B)_\rho
 \le\hmin^{1-\varepsilon^2/4}(A|B)_\rho.
\end{equation}
Ref.~\cite[Lemma~8]{MW2014} gives, for $r,s\ge0$ with $r+s<1$,
\begin{equation}\label{eq:min-to-max}
 \hmin^r(A|B)_\rho
 \le\hmax^s(A|B)_\rho+\log\frac1{1-(r+s)^2}.
\end{equation}
Finally, Ref.~\cite[Lemma~11]{MW2014} shows that, if $\overline\rho=\sum_{i=1}^Jp_i\rho^i$ is a mixture of $J$ states, then
\begin{equation}\label{eq:mixture-max}
 \hmax^\varepsilon(A|B)_{\overline\rho}
 \le\max_{1\le i\le J}\hmax^\varepsilon(A|B)_{\rho^i}+\log J.
\end{equation}
These statements do not assert concavity of the \emph{smoothed} max-entropy. In particular, the smoothing change in~\eqref{eq:orbit-entropy} must be retained. We now use these standard inequalities to prove the needed bound, keeping every smoothing change explicit.

\subsection{Postselection for arbitrary channel inputs}
We now derive the correlated-input bound~\eqref{eq:postselection-main}. The argument first averages over permutations, then uses operator domination by a mixture of tensor powers, and finally transfers the smoothing parameters while paying only logarithmic dimension terms.

Let $\cT\colon A\to FE'$ be an arbitrary fixed channel, let $\rho_{A^n}$ be arbitrary, and define
\begin{equation}\label{eq:permutation-average}
 \overline\rho_{A^n}\coloneqq\frac1{n!}\sum_{\pi\in S_n}
 U_\pi\rho_{A^n}U_\pi^\dagger,
 \qquad \theta\coloneqq\cT^{\otimes n}(\rho),\qquad
 \overline\theta\coloneqq\cT^{\otimes n}(\overline\rho).
\end{equation}
Because $\cT^{\otimes n}$ commutes with tensor-factor permutations, $\overline\theta$ is an average of $\theta$ over unitaries that are local across $F^n:E'^n$. Equation~\eqref{eq:orbit-entropy} with $\varepsilon=u/\sqrt2$ gives
\begin{equation}\label{eq:postselection-step1}
 \hmax^u(F^n|E'^n)_\theta
 \le\hmax^{u/\sqrt2}(F^n|E'^n)_{\overline\theta}
 \le\hmin^{1-u^2/8}(F^n|E'^n)_{\overline\theta},
\end{equation}
where the second inequality is~\eqref{eq:max-to-min}.

We next recall the operator domination supplied by postselection~\cite{CKR2009}. Set $a_A=|A|$ and
\begin{equation}\label{eq:symmetric-dimension}
 D_n\coloneqq\binom{n+a_A^2-1}{a_A^2-1},\qquad
 g_n\coloneqq(n+1)^{a_A^2}.
\end{equation}
There is a probability measure $\mu$ on $\states(A)$ such that
\begin{equation}\label{eq:postselection-domination}
 \overline\rho\le D_n\tau_{A^n}\le g_n\tau_{A^n},
 \qquad \tau_{A^n}\coloneqq\int\sigma^{\otimes n}\,\mathrm d\mu(\sigma).
\end{equation}
Here is a direct explanation of the dimension factor. A canonical purification of a permutation-invariant $\overline\rho$ on $A^nA'^n$, with $A'\cong A$, is invariant under simultaneous permutations of the $n$ pairs $AA'$. It lies in $\Sym^n(A\otimes A')$, whose dimension is $D_n$. A normalized rank-one projection onto that purification is bounded by the projection onto this symmetric subspace. The normalized symmetric projection is the Haar average of $\varphi_{AA'}^{\otimes n}$ over pure $\varphi_{AA'}$. Taking the partial trace over $A'^n$ yields~\eqref{eq:postselection-domination}. Finally, $D_n\le(n+1)^{a_A^2-1}\le g_n$ follows by counting the occupation numbers of the symmetric subspace.

Set $\Theta\coloneqq\cT^{\otimes n}(\tau)$. Complete positivity implies $\overline\theta\le g_n\Theta$. Write $\alpha\coloneqq u^2/8$. Consider an arbitrary subnormalized $\zeta$ with $P(\zeta,\overline\theta)\le1-\alpha$. Since the center is normalized,
\begin{equation}\label{eq:fidelity-ball-lower}
 F(\zeta,\overline\theta)\ge1-(1-\alpha)^2=2\alpha-\alpha^2\ge\alpha.
\end{equation}
The operator inequality $\overline\theta\le g_n\Theta$ and monotonicity of the square root give
\begin{equation}\label{eq:fidelity-domination}
 \begin{aligned}
 \sqrt{F(\zeta,\overline\theta)}
 &=\Tr\sqrt{\sqrt\zeta\,\overline\theta\,\sqrt\zeta}\\
 &\le\sqrt{g_n}\,\Tr\sqrt{\sqrt\zeta\,\Theta\,\sqrt\zeta}
 =\sqrt{g_nF(\zeta,\Theta)}.
 \end{aligned}
\end{equation}
Thus $F(\zeta,\Theta)\ge\alpha/g_n$ and
\begin{equation}\label{eq:smoothing-ball-transfer}
 P(\zeta,\Theta)\le\sqrt{1-\alpha/g_n}
 \le1-\frac{\alpha}{2g_n}.
\end{equation}
Every candidate in the min-entropy smoothing ball about $\overline\theta$ is consequently a candidate in the larger ball about $\Theta$. Define
\begin{equation}\label{eq:beta-v}
 \beta\coloneqq\frac{u^2}{16g_n},\qquad
 v\coloneqq\frac\beta2=\frac{u^2}{32g_n}.
\end{equation}
Equations~\eqref{eq:postselection-step1} and~\eqref{eq:smoothing-ball-transfer} yield
\begin{equation}\label{eq:postselection-step2}
 \hmax^u(F^n|E'^n)_\theta
 \le\hmin^{1-\beta}(F^n|E'^n)_\Theta.
\end{equation}
Apply~\eqref{eq:min-to-max} with $r=1-\beta$ and $s=\beta/2$. Since
\begin{equation}\label{eq:comparison-penalty}
 1-(r+s)^2=1-(1-\beta/2)^2
 =\beta-\beta^2/4\ge\beta/2,
\end{equation}
we obtain
\begin{equation}\label{eq:postselection-step3}
 \hmax^u(F^n|E'^n)_\theta
 \le\hmax^v(F^n|E'^n)_\Theta+\log\frac{32g_n}{u^2}.
\end{equation}

To use~\eqref{eq:mixture-max}, a finite mixture suffices. The Haar-averaged symmetric projection acts on a $D_n$-dimensional space. By the finite-dimensional convex-hull theorem, it can be expressed as a convex combination of at most $D_n^2$ pure tensor-power projections. Tracing out $A'^n$ and applying $\cT^{\otimes n}$ gives
\begin{equation}\label{eq:finite-de-finetti}
 \Theta=\sum_{i=1}^Jp_i\cT(\sigma_i)^{\otimes n},\qquad
 J\le D_n^2\le g_n^2.
\end{equation}
Equation~\eqref{eq:mixture-max} now implies
\begin{equation}\label{eq:postselection-step4}
 \begin{aligned}
 \hmax^u(F^n|E'^n)_\theta
 \le{}&\max_{\sigma\in\states(A)}\hmax^{u^2/(32g_n)}(F^n|E'^n)_{\cT(\sigma)^{\otimes n}}\\
 &+3\log g_n+5+2\log(1/u).
 \end{aligned}
\end{equation}
Replacing $5$ by $6$ gives~\eqref{eq:postselection-main}. All choices and bounds above are independent of $\rho_{A^n}$. No product-state or full-rank restriction has been imposed on the code input.

\subsection{A dimension-uniform independent-copy estimate}
For completeness, we justify the uniform constant and the full range of blocklengths in~\eqref{eq:aep-main}. Let $\theta_{FE'}$ be arbitrary, write $d_F=|F|$, and define
\begin{equation}\label{eq:aep-functions}
 b(v)\coloneqq\log(2/v^2),\qquad
 g(v)\coloneqq-\log\!\left(1-\sqrt{1-v^2}\right).
\end{equation}
The elementary bound $\sqrt{1-v^2}\le1-v^2/2$ gives $g(v)\le b(v)$.

The finite-block asymptotic equipartition theorem~\cite[Theorem~6.4 and Corollary~6.5]{Tomamichel2012}, using min--max duality when necessary, gives
\begin{equation}\label{eq:aep-original}
 \hmax^v(F^n|E'^n)_{\theta^{\otimes n}}
 \le nH(F|E')_\theta+4\log\Upsilon\,\sqrt{ng(v)}
 \quad\text{when }n\ge\tfrac85g(v).
\end{equation}
The convergence parameter can be bounded by
\begin{equation}\label{eq:upsilon-bound}
 \Upsilon\le
 \sqrt{2^{-\hmin(F|B)_\xi}}+
 \sqrt{2^{\hmax(F|B)_\xi}}+1
 \le2\sqrt{d_F}+1.
\end{equation}
Here $\xi_{FB}$ denotes the relevant marginal of a purification in the min-entropy version of the theorem. Its conditioning system need not equal $E'$. The last inequality uses only the normalized-state dimension bounds $-\log d_F\le\hmin(F|B)_\xi$ and $\hmax(F|B)_\xi\le\log d_F$. Thus it applies equally after passing to the purifying system for the max-entropy bound.

If $n\ge\tfrac85b(v)$, then the condition in~\eqref{eq:aep-original} holds, and
\begin{equation}\label{eq:aep-large-n}
 \hmax^v(F^n|E'^n)_{\theta^{\otimes n}}-nH(F|E')_\theta
 \le4\log(2\sqrt{d_F}+1)\sqrt{nb(v)}
 \le c_F\sqrt{nb(v)}.
\end{equation}
If instead $n<\tfrac85b(v)$, use the elementary bounds $\hmax^v(F^n|E'^n)\le n\log d_F$ and $H(F|E')\ge-\log d_F$. They imply
\begin{equation}\label{eq:aep-small-n}
 \begin{aligned}
 \hmax^v(F^n|E'^n)_{\theta^{\otimes n}}-nH(F|E')_\theta
 &\le2n\log d_F\\
 &\le2\sqrt{8/5}\,\log d_F\sqrt{nb(v)}
 \le c_F\sqrt{nb(v)}.
 \end{aligned}
\end{equation}
The last inequality follows from $c_F=8\log(2d_F+1)$. The two cases prove~\eqref{eq:aep-main} for every $n\ge1$ and $0<v<1$.

The small-blocklength case is included because $v$ will itself depend exponentially on $n$. An asymptotic statement with a state- or smoothing-dependent threshold would not be enough without this check. Likewise, an equipartition bound whose constant involves the smallest nonzero eigenvalue would not give the uniform maximum over $\sigma$ required in~\eqref{eq:postselection-main}.

\section{The polynomial approximation input}\label{app:boolean}
The complete matrix construction and averaging argument are given in Section~\ref{subsec:signed-averaging}. This appendix specifies the only external approximation theorem used there and explains its specialization to the Boolean function $\operatorname{NOR}_n$. It also records the normalization of the Fourier transform in terms familiar from quantum information theory. Neither a new approximation theorem nor a new signed-weight construction is claimed.

\subsection{Specializing the approximate-degree theorem}
We first explain why the approximation degree has the form used in~\eqref{eq:approx-degree-main}. For a Boolean function $f\colon\{0,1\}^n\to\{0,1\}$, its $\epsilon$-approximate degree is the minimum degree of a real polynomial whose value differs from $f$ by at most $\epsilon$ at every point of the cube. Values away from the cube are not constrained. A polynomial can be made multilinear by replacing every power $z_j^k$ with $z_j$ for $k\ge1$; this preserves its values on the cube and cannot increase its degree.

Theorem~1 of Ref.~\cite{deWolf2008} states, in particular, that for a nonconstant symmetric Boolean function the $\epsilon$-approximate degree is bounded above by a universal constant times the sum of its constant-error approximate degree and $\sqrt{n\ln(1/\epsilon)}$, for $2^{-n}\le\epsilon\le1/3$. The function $\operatorname{NOR}_n$ is symmetric because it depends only on Hamming weight. Its constant-error approximate degree is $O(\sqrt n)$, also covered by the upper bound in the discussion following that theorem: $\operatorname{NOR}_n$ is constant on all nonzero weights. Absorbing constants, including the base of the logarithm, gives a universal $\Capp\ge1$ and a real multilinear polynomial $p$ satisfying
\begin{equation}\label{eq:approx-degree}
 \deg p\le\Capp\bigl(\sqrt n+\sqrt{n\ln(1/\epsilon)}\bigr),
 \qquad \max_{z\in\{0,1\}^n}|p(z)-\operatorname{NOR}_n(z)|\le\epsilon.
\end{equation}
This is the approximation statement quoted in the proof of Lemma~\ref{lem:weights}. The proof there checks the full allowed error interval for its choice of $\epsilon$, verifies the degree bound, and normalizes $p(0^n)$ without changing the degree.

The small-error dependence is important. When $t$ is a fixed positive fraction of $n$, the chosen $\epsilon=\exp[-t^2/(4\Capp^2n)]$ is exponentially small. A constant-error approximation alone would not yield an exponentially small $\op{W_t-U_0}$. On the other hand, the degree restriction is what forces the support condition $s(x,y)\le t$. Both parts of~\eqref{eq:approx-degree} are therefore used in the converse.

\subsection{The Hadamard transform and the two bases}
For clarity, the label-space transform used in Lemma~\ref{lem:weights} can be written as the familiar unitary
\begin{equation}\label{eq:hadamard-transform}
 H_2\coloneqq\frac1{\sqrt2}\begin{pmatrix}1&1\\1&-1\end{pmatrix},
 \qquad (H_2^{\otimes n})_{xz}=2^{-n/2}(-1)^{x\cdot z}.
\end{equation}
Its columns are the vectors $\chi_z$. If $q$ is viewed as a column vector of its values, then the coefficient vector in~\eqref{eq:walsh} is
$\widehat q=2^{-n/2}H_2^{\otimes n}q$.
Unitarity gives Parseval's identity~\eqref{eq:parseval-main}. The matrix in~\eqref{eq:W-construction} has the spectral representation
\begin{equation}\label{eq:weight-spectral-representation}
 W_t=H_2^{\otimes n}\operatorname{diag}\bigl(((-1)^{|z|}q(z))_z\bigr)H_2^{\otimes n},
 \qquad U_0=(\ket+\!\bra+)^{\otimes n}.
\end{equation}
Here $\ket+\coloneqq(\ket0+\ket1)/\sqrt2$, and $\operatorname{diag}((a_z)_z)$ denotes the diagonal operator with diagonal entries $a_z$. The operator $W_t$ is close to $U_0$ because all its nonuniform Hadamard eigenvectors have small eigenvalues. Its restricted support in the coordinate basis instead comes from the low degree of $q$. This distinction between an entrywise support statement and a spectral approximation statement is essential; neither can be substituted for the other.

\section{Using the actual environment marginal}\label{app:marginal}
The optimized ideal environment state in~\eqref{eq:ideal} need not be the actual marginal of the code output. This appendix treats the alternative requirement that it be exactly $\omega_{E^n}$. The distinction and its relation to separate decoding and secrecy errors were discussed in Ref.~\cite[Appendix~B]{WTB2017}. In particular, that reference already notes that using the actual marginal does not weaken its converse bounds. The same elementary observation applies to our finite-block bound and preserves its exponential rate without any loss.

\subsection{Comparison with the optimized criterion}
We first order the two fidelities and compare their errors near zero. The fidelity ordering, rather than the reverse error estimate, is the fact that preserves the strong-converse exponent.

For a fixed code, abbreviate $f\coloneqq f(\mathscr C_n)$ and $f_{\mathrm{marg}}\coloneqq f_{\mathrm{marg}}(\mathscr C_n)$, with the latter defined in~\eqref{eq:marginal-definition-main}. Both fidelities concern the same state $\omega_{K\widehat K E^n}$ and the same uniform input prior. Only the choice of the ideal environment state differs.

\begin{proposition}[Optimized and actual-marginal fidelities]\label{prop:marginal-comparison}
Every code satisfies
\begin{equation}\label{eq:marginal-order}
 f_{\mathrm{marg}}\le f,\qquad
 \varepsilon\le\varepsilon_{\mathrm{marg}}
 \le\min\{1,4\varepsilon\}.
\end{equation}
Consequently, the two uniform joint criteria have the same vanishing-error capacity. Every finite-block upper bound on $f$ is also an upper bound on $f_{\mathrm{marg}}$.
\end{proposition}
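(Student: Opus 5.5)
The first inequality is immediate: $\omega_{E^n}$ is a feasible choice of $\sigma_{E^n}$ in~\eqref{eq:ideal}, so $f_{\mathrm{marg}}\le f$. Hence $\varepsilon\le\varepsilon_{\mathrm{marg}}$, and every upper bound on $f$ also bounds $f_{\mathrm{marg}}$. The bound $\varepsilon_{\mathrm{marg}}\le1$ is trivial. The content is therefore $\varepsilon_{\mathrm{marg}}\le4\varepsilon$, i.e.\ $1-f_{\mathrm{marg}}\le4(1-f)$.

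For that inequality I would pass to sine distance, which obeys the triangle inequality on normalized states. Let $\sigma^\star$ attain $f$, and write
\[
\Psi\coloneqq\overline\Phi^M_{K\widehat K}\otimes\sigma^\star_{E^n},\qquad
\Psi_{\mathrm{marg}}\coloneqq\overline\Phi^M_{K\widehat K}\otimes\omega_{E^n}.
\]
Then $P(\omega_{K\widehat KE^n},\Psi)=\sqrt{\varepsilon}$. Tracing out $K\widehat K$ and using data processing gives
\[
P(\omega_{E^n},\sigma^\star_{E^n})\le\sqrt\varepsilon .
\]
Since fidelity is multiplicative under tensor products with the common factor $\overline\Phi^M_{K\widehat K}$, we have $F(\Psi,\Psi_{\mathrm{marg}})=F(\sigma^\star,\omega_{E^n})$, so $P(\Psi,\Psi_{\mathrm{marg}})\le\sqrt\varepsilon$. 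The triangle inequality then yields
\[
\sqrt{\varepsilon_{\mathrm{marg}}}=P(\omega_{K\widehat KE^n},\Psi_{\mathrm{marg}})\le2\sqrt\varepsilon ,
\]
and squaring gives $\varepsilon_{\mathrm{marg}}\le4\varepsilon$.

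The capacity statement follows from the sandwich $\varepsilon\le\varepsilon_{\mathrm{marg}}\le4\varepsilon$: a sequence of codes has $\varepsilon\to0$ exactly when it has $\varepsilon_{\mathrm{marg}}\to0$, so the achievable rates and hence the capacities coincide.

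The only real check is the triangle inequality for the sine distance $P=\sqrt{1-F}$ on normalized states. This is Rastegin's result~\cite{Rastegin2006}, and equivalently the purified-distance triangle inequality~\cite{TCR2010}. I would cite it rather than reprove it, and note that all three states involved are normalized, so $F_*=F$ there.
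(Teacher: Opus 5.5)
Your proposal is correct and follows essentially the same route as the paper's proof. Both use feasibility of $\omega_{E^n}$ for $f_{\mathrm{marg}}\le f$, then data processing to the environment marginal, multiplicativity of fidelity, and the sine-distance triangle inequality to get $\varepsilon_{\mathrm{marg}}\le4\varepsilon$, with the capacity statement following from the sandwich $\varepsilon\le\varepsilon_{\mathrm{marg}}\le4\varepsilon$.
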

\begin{proof}
The actual marginal $\omega_{E^n}$ is a density operator and is therefore an admissible candidate in the optimization defining $f$. This proves $f_{\mathrm{marg}}\le f$, or equivalently $\varepsilon\le\varepsilon_{\mathrm{marg}}$.

For the reverse error comparison, choose an optimizer $\sigma^*_{E^n}$ in~\eqref{eq:ideal}. By definition,
\begin{equation}\label{eq:optimized-sine}
 P\!\left(\omega_{K\widehat K E^n},
 \overline\Phi^M_{K\widehat K}\otimes\sigma^*_{E^n}\right)
 =\sqrt\varepsilon.
\end{equation}
Tracing out $K\widehat K$ and using monotonicity of the sine distance gives
\begin{equation}\label{eq:marginal-close}
 P(\omega_{E^n},\sigma^*_{E^n})\le\sqrt\varepsilon.
\end{equation}
Multiplicativity of fidelity implies that adjoining the same normalized state to both arguments leaves their sine distance unchanged. The triangle inequality for the sine distance~\cite{Rastegin2006} therefore gives
\begin{equation}\label{eq:marginal-triangle}
 P\!\left(\omega_{K\widehat K E^n},
 \overline\Phi^M_{K\widehat K}\otimes\omega_{E^n}\right)
 \le
 P\!\left(\omega_{K\widehat K E^n},
 \overline\Phi^M_{K\widehat K}\otimes\sigma^*_{E^n}\right)
 +P(\sigma^*_{E^n},\omega_{E^n})
 \le2\sqrt\varepsilon.
\end{equation}
Squaring yields $\varepsilon_{\mathrm{marg}}\le4\varepsilon$. The additional upper bound one follows because fidelity lies in $[0,1]$. Thus one error tends to zero if and only if the other does, proving the capacity statement.
\end{proof}

\begin{corollary}[Exponential strong converse with the actual marginal]\label{cor:marginal}
Let $\cN$ be a finite-dimensional degradable channel and fix $\Delta>0$. The same constants $\gamma,n_0$ as in Theorem~\ref{thm:main} satisfy
\begin{equation}\label{eq:marginal-strong-converse}
 F\!\left(\omega_{K\widehat K E^n},
 \overline\Phi^M_{K\widehat K}\otimes\omega_{E^n}\right)
 \le2^{-\gamma n}
\end{equation}
for every unassisted code with $n\ge n_0$ and $\log M\ge n(\qN+\Delta)$. The corresponding joint infidelity is at least $1-2^{-\gamma n}$. The same transfer applies to the antidegradable-channel and erasure-channel conclusions of Section~\ref{sec:converse}.
\end{corollary}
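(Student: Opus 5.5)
The statement is Corollary~\ref{cor:marginal}. Its proof is a direct transfer from the optimized criterion to the actual-marginal criterion. The plan is to use the fidelity ordering $f_{\mathrm{marg}}\le f$ from Proposition~\ref{prop:marginal-comparison}. The reverse error comparison $\varepsilon_{\mathrm{marg}}\le4\varepsilon$ is not needed and would in fact be useless here, since it goes in the wrong direction for a strong converse.

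\textbf{Step 1: fix constants.} Fix $\Delta>0$ and let $\gamma,n_0$ be the constants produced by Theorem~\ref{thm:main}. These depend only on $\cN$ and $\Delta$, and not on the code.

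\textbf{Step 2: bound $f$.} Take an arbitrary unassisted $(n,M)$ code $\mathscr C_n$ with $n\ge n_0$ and $\log M\ge n(\qN+\Delta)$. Theorem~\ref{thm:main} gives $f(\mathscr C_n)\le2^{-\gamma n}$.

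\textbf{Step 3: transfer to $f_{\mathrm{marg}}$.} The actual marginal $\omega_{E^n}$ is a density operator on $E^n$, so it is an admissible choice of $\sigma_{E^n}$ in the maximum~\eqref{eq:ideal}. Hence
\[
F\bigl(\omega_{K\widehat K E^n},\overline\Phi^M_{K\widehat K}\otimes\omega_{E^n}\bigr)
= f_{\mathrm{marg}}(\mathscr C_n)\le f(\mathscr C_n)\le2^{-\gamma n}.
\]
This is exactly~\eqref{eq:marginal-strong-converse}. Subtracting from one gives $\varepsilon_{\mathrm{marg}}(\mathscr C_n)\ge1-2^{-\gamma n}$. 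The constants are unchanged, because no step introduced a new loss.

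\textbf{Step 4: antidegradable and erasure cases.} For antidegradable channels, apply the same inequality $f_{\mathrm{marg}}\le f$ to the bound $f(\mathscr C_n)\le2^{-\gamma n}$ from Corollary~\ref{cor:anti}, keeping its $\gamma,n_0$. For the erasure channel, Corollary~\ref{cor:erasure} splits into the degradable regime $p\le1/2$ and the antidegradable regime $p\ge1/2$. In each regime the underlying bound is on $f$, so the same one-line transfer applies.

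\textbf{Expected obstacle.} There is no substantive obstacle. The only point to check is that $\omega_{E^n}$ is normalized and lives on the same system $E^n$ over which~\eqref{eq:ideal} optimizes. Lemma~\ref{lem:factorization} ensures this: if the environment is enlarged by the appended qubit, the actual-marginal fidelity is preserved exactly, as verified at the end of Appendix~\ref{app:dilation}.
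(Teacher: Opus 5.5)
Your proposal is correct and matches the paper's proof: the paper also applies the ordering $f_{\mathrm{marg}}\le f$ (the actual marginal $\omega_{E^n}$ is an admissible candidate in~\eqref{eq:ideal}) to the fidelity upper bounds of Theorem~\ref{thm:main} and Corollaries~\ref{cor:anti} and~\ref{cor:erasure}, and it notes that $\varepsilon_{\mathrm{marg}}\le4\varepsilon$ is not needed. Your closing appeal to Lemma~\ref{lem:factorization} is harmless but unnecessary, because Theorem~\ref{thm:main} already bounds $f$ for the original, unenlarged code.
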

\begin{proof}
Apply $f_{\mathrm{marg}}\le f$ in~\eqref{eq:marginal-order} to the relevant fidelity upper bound. The inequality $\varepsilon_{\mathrm{marg}}\le4\varepsilon$ is not needed for this converse and causes no change in its exponent.
\end{proof}

There is also a direct way to see compatibility with the proof. In Lemma~\ref{lem:lift}, choose the fixed target $\overline\Phi^M\otimes\omega_{E^n}$ rather than an optimizing target. Uhlmann's theorem then produces a private state $\gamma$ and a test $\Pi^\gamma$ accepted with probability at least $f_{\mathrm{marg}}$. The overlap bound is uniform over private-state shields and twisting unitaries, so the rest of the proof is unchanged. This direct argument is consistent with, but unnecessary for, the simpler ordering argument above.

\subsection{The two finite-error fidelities can be different}
The ordering in~\eqref{eq:marginal-order} must not be replaced by equality. We give an example that is realized by an unassisted code for a degradable channel. Let $M\ge2$, let $0<p<1$, and let $m\oplus1$ denote the next element in the cyclic order on $\{1,\ldots,M\}$. Consider
\begin{equation}\label{eq:marginal-example}
 \begin{aligned}
 \omega_{K\widehat K E}\coloneqq\frac1M\sum_{m=1}^M\bigl[
 &p\ket{mm}\!\bra{mm}_{K\widehat K}\otimes\ket0\!\bra0_E\\
 &+(1-p)\ket{m,m\oplus1}\!\bra{m,m\oplus1}_{K\widehat K}
 \otimes\ket1\!\bra1_E\bigr].
 \end{aligned}
\end{equation}
Its environment marginal is $\omega_E=p\ket0\!\bra0+(1-p)\ket1\!\bra1$. Only the correctly decoded blocks have support in common with the ideal key state. The classical-block fidelity formula therefore gives, for an arbitrary density operator $\sigma_E$,
\begin{equation}\label{eq:marginal-example-fidelity}
 F(\omega_{K\widehat K E},\overline\Phi^M_{K\widehat K}\otimes\sigma_E)
 =p\langle0|\sigma_E|0\rangle.
\end{equation}
Taking $\sigma_E=\ket0\!\bra0_E$ attains the maximum $p$, whereas using $\omega_E$ gives
\begin{equation}\label{eq:marginal-example-values}
 f=p,\qquad f_{\mathrm{marg}}=p^2.
\end{equation}

To realize the example, take $|A|=|B_1|=M$ and the channel
\begin{equation}\label{eq:marginal-example-channel}
 \cN_p(\rho)\coloneqq\rho_{B_1}\otimes
 \bigl(p\ket0\!\bra0+(1-p)\ket1\!\bra1\bigr)_{B_0}.
\end{equation}
A Stinespring isometry transmits $A$ unchanged to $B_1$ and appends
$\sqrt p\ket{00}_{B_0E}+\sqrt{1-p}\ket{11}_{B_0E}$.
The complementary channel is constant, so it is reproduced by discarding the receiver input and preparing $\omega_E$; hence $\cN_p$ is degradable. Encode $m$ as $\ket m$, measure $B_1$ in this basis and $B_0$ in the flag basis, and report $m$ when the flag is zero and $m\oplus1$ when it is one. The resulting state is exactly~\eqref{eq:marginal-example}. The decoder deliberately makes errors on the second flag; its purpose here is to distinguish the criteria, not to optimize communication.

\subsection{Consequence for separate decoding and secrecy errors}
We finally spell out a useful implication for the two-error formulation. This comparison is Proposition~28 of Ref.~\cite{WTB2017}; we give a proof in our notation to specify the environment state and the fidelity convention. Define
\begin{equation}\label{eq:separate-errors}
 p_{\mathrm{err}}\coloneqq\Pr\{K\ne\widehat K\},
 \qquad
 \delta_{\mathrm{sec}}\coloneqq
 \frac12\trn{\omega_{KE^n}-\pi_K\otimes\omega_{E^n}},
 \qquad \pi_K\coloneqq I_K/M.
\end{equation}
Then
\begin{equation}\label{eq:separate-to-joint}
 1-\sqrt{f_{\mathrm{marg}}}
 \le p_{\mathrm{err}}+\delta_{\mathrm{sec}}.
\end{equation}
To prove this, write the classical--classical--quantum state as
\begin{equation}\label{eq:block-output}
 \omega_{K\widehat K E^n}
 =\sum_{m,j}\ket{mj}\!\bra{mj}_{K\widehat K}\otimes A^{mj}_{E^n},
 \qquad A^{mj}_{E^n}\ge0,
\end{equation}
where $\sum_j\Tr A^{mj}=1/M$. Introduce a state that replaces the estimate by a correct copy of $K$, while keeping the $KE^n$ marginal:
\begin{equation}\label{eq:corrected-estimate}
 \zeta_{K\widehat K E^n}
 \coloneqq\sum_m\ket{mm}\!\bra{mm}_{K\widehat K}
 \otimes\sum_j A^{mj}_{E^n}.
\end{equation}
For every incorrect block $m\ne j$, the difference $\omega-\zeta$ has a positive block $A^{mj}$. In the correct block $mm$, it has the negative block $-\sum_{j\ne m}A^{mj}$. The trace norm of a block-diagonal operator is the sum of the trace norms of its blocks. Hence
\begin{equation}\label{eq:corrected-trace}
 \frac12\trn{\omega-\zeta}
 =\sum_{m\ne j}\Tr A^{mj}=p_{\mathrm{err}}.
\end{equation}
The isometry $\ket m_K\mapsto\ket{mm}_{K\widehat K}$ sends $\omega_{KE^n}$ to $\zeta$ and $\pi_K\otimes\omega_{E^n}$ to $\overline\Phi^M\otimes\omega_{E^n}$. Isometric invariance of the trace norm therefore gives
\begin{equation}\label{eq:corrected-security}
 \frac12\trn{\zeta-\overline\Phi^M\otimes\omega_{E^n}}
 =\delta_{\mathrm{sec}}.
\end{equation}
The triangle inequality and the other Fuchs--van de Graaf inequality~\cite{FvdG1999} now imply
\begin{equation}\label{eq:separate-proof}
 1-\sqrt{f_{\mathrm{marg}}}
 \le\frac12\trn{\omega-\overline\Phi^M\otimes\omega_{E^n}}
 \le p_{\mathrm{err}}+\delta_{\mathrm{sec}},
\end{equation}
proving~\eqref{eq:separate-to-joint}. Combining it with Corollary~\ref{cor:marginal} gives
\begin{equation}\label{eq:separate-converse}
 p_{\mathrm{err}}+\delta_{\mathrm{sec}}
 \ge1-2^{-\gamma n/2}
\end{equation}
under the same rate and blocklength conditions. This lower bound concerns the \emph{sum} of the two errors. It does not say that either error separately must converge to one, and it does not identify the separate-error definitions with either joint criterion at a fixed nonzero error.

\end{document}